\documentclass{llncs}
\pagestyle{plain}

\title{Wagner's Algorithm Provably Runs in Subexponential Time for $\SISinfty$}
\author{L\'{e}o Ducas\inst{1,2} \and Lynn Engelberts\inst{1,3} \and Johanna Loyer\inst{1}}
\institute{Centrum Wiskunde \& Informatica, The Netherlands \and Leiden University, The Netherlands \and QuSoft, The Netherlands} 


\setlength{\parindent}{0pt}
\usepackage{parskip}

\usepackage{graphicx} 
\usepackage{upref}
\usepackage{latexsym}
\usepackage{pdfpages}
\usepackage{xcolor}
\usepackage[hidelinks]{hyperref} 
\usepackage{enumerate}
\usepackage{outlines}
\usepackage{amsfonts}
\usepackage{varioref}
\usepackage[ansinew]{inputenc}
\usepackage[many]{tcolorbox}
\usepackage[ruled,noend]{algorithm2e} 
\usepackage{color,graphics}
\usepackage{comment} 
\usepackage[labelfont=sl]{caption} 
\captionsetup[figure]{name=Figure}
\captionsetup[table]{width=0.8\textwidth, justification=justified, skip=8pt} 
\usepackage{subcaption}
\usepackage{wrapfig}
\usepackage{braket} 
\usepackage{csquotes}
\usepackage{amssymb}
\usepackage{amsmath}
\usepackage{mathrsfs}
\usepackage{mathtools}
\usepackage{commath}
\usepackage{thmtools,thm-restate}
\usepackage[capitalise,noabbrev]{cleveref}
\usepackage{boxedminipage}
\usepackage{comment}
\usepackage{todonotes}

\counterwithin{theorem}{section}
\counterwithin{lemma}{section}
\counterwithin{corollary}{section}
\counterwithin{definition}{section}
\counterwithin{remark}{section}   
\counterwithin{problem}{section}

\newcommand{\ALLnote}[1]{{\color{cyan} \footnotesize (All: #1)}}
\newcommand{\LDnote}[1]{{\color{red} \footnotesize (Leo: #1)}}
\newcommand{\LD}[1]{{\color{red} \footnotesize (Leo: #1)}}
\newcommand{\LEnote}[1]{{\color{blue} \footnotesize (Lynn: #1)}}
\newcommand{\JLnote}[1]{{\color{orange} \footnotesize (Johanna: #1)}}
\newcommand{\JLLnote}[1]{{\color{orange} \footnotesize #1}}
\newcommand{\DRAFT}[1]{#1}

\newcommand{\DeleteAllComments}{
    \renewcommand{\ALLnote}[1]{}
    \renewcommand{\LDnote}[1]{}
    \renewcommand{\LD}[1]{}
    \renewcommand{\LEnote}[1]{}
    \renewcommand{\JLnote}[1]{}
    \renewcommand{\JLLnote}[1]{}
    \renewcommand{\DRAFT}[1]{}
}



\newcommand{\R}{\mathbb{R}}

\newcommand{\Z}{\mathbb{Z}}
\newcommand{\N}{\mathbb{N}}

\newcommand{\xv}{\mathbf{x}}
\newcommand{\yv}{\mathbf{y}}
\newcommand{\cv}{\mathbf{c}}
\newcommand{\zv}{\mathbf{z}}
\newcommand{\av}{\mathbf{a}}
\newcommand{\bv}{\mathbf{b}}
\newcommand{\sv}{\mathbf{s}}
\newcommand{\tv}{\mathbf{t}}

\newcommand{\vv}{\mathbf{v}}

\newcommand{\zerovec}{\mathbf{0}}

\newcommand{\Id}{\mathbf{I}}
\newcommand{\Am}{\mathbf{A}}
\newcommand{\Bm}{\mathbf{B}}

\newcommand{\Dm}{\mathbf{D}}

\newcommand{\SISs}{\mathrm{SIS}^\times}
\newcommand{\SIS}{\mathrm{SIS}}
\newcommand{\ISIS}{\mathrm{ISIS}}
\newcommand{\SISinfty}{\mathrm{SIS}^\infty}
\newcommand{\LWE}{\mathrm{LWE}}

\newcommand{\poly}{\mathrm{poly}}
\newcommand{\polylog}{\mathrm{polylog}}


\newcommand{\nnew}{b} 
\newcommand{\nsum}{n} 
\newcommand{\eps}{\varepsilon}
\newcommand{\Span}{\mathrm{span}}

\DeleteAllComments

\newtoggle{bibtex}
\toggletrue{bibtex}
\iftoggle{bibtex}{ 
}{
	\usepackage[backend=biber, style=alphabetic, maxbibnames=99, maxalphanames=5, url=false]{biblatex}
    \addbibresource{references.bib}
}

\begin{document}

\maketitle
\thispagestyle{plain} 

\begin{abstract}
    At CRYPTO 2015, Kirchner and Fouque claimed that a carefully tuned variant of the Blum-Kalai-Wasserman (BKW) algorithm (JACM 2003) should solve the Learning with Errors problem (LWE) in slightly subexponential time for modulus $q=\mathrm{poly}(n)$ and narrow error distribution, when given enough LWE samples. 
    Taking a modular view, one may regard BKW as a combination of Wagner's algorithm (CRYPTO 2002), run over the corresponding dual problem, and the Aharonov-Regev distinguisher (JACM 2005). Hence the subexponential Wagner step alone should be of interest for solving this dual problem -- namely, the Short Integer Solution problem (SIS) -- but this appears to be undocumented so far.

    We re-interpret this Wagner step as walking backward through a chain of projected lattices, zigzagging through some auxiliary superlattices. We further randomize the bucketing step using Gaussian randomized rounding to exploit the powerful discrete Gaussian machinery. This approach avoids sample amplification and turns Wagner's algorithm into an approximate discrete Gaussian sampler for $q$-ary lattices. 
    
    For an SIS lattice with $n$ equations modulo $q$, this algorithm runs in subexponential time $\exp(O(n/\log \log n))$ to reach a Gaussian width parameter $s = q/\mathrm{polylog}(n)$ only requiring $m = n + \omega(n/\log \log n)$ many SIS variables. This directly provides a provable algorithm for solving the Short Integer Solution problem in the infinity norm ($\mathrm{SIS}^\infty$) for norm bounds $\beta = q/\mathrm{polylog}(n)$. 
    This variant of SIS underlies the security of the NIST post-quantum cryptography standard {\textsc{Dilithium}}. Despite its subexponential complexity, Wagner's algorithm does not appear to threaten \textsc{Dilithium}'s concrete security.
\end{abstract} 

\paragraph{Keywords.} Wagner's algorithm, Short Integer Solution problem (SIS), Discrete Gaussian sampling, Lattice-based cryptography, Cryptanalysis
 
\newpage

\section{Introduction}

The Short Integer Solution problem (SIS) is a fundamental problem in lattice-based cryptography. It was introduced by Ajtai~\cite{Ajt96}, along with an average-case to worst-case reduction from SIS (in the average case) to the problem of finding a short basis in an arbitrary lattice (in the worst case). 
Since then, a plethora of cryptographic schemes have been based on the presumed average-case hardness of SIS. 
The SIS problem asks to find an integer vector of norm at most $\beta$ that satisfies a set of $n$ equations in $m$ variables modulo $q$. 
The average-case to worst-case reductions \cite{Ajt96,MR2004} consider SIS in the Euclidean norm, with norm bound $\beta$ significantly smaller than $q$. 
However, when it comes to practical cryptographic schemes, designers are often tempted to consider SIS instances outside of the asymptotic coverage of the reduction, or even to consider variants of the problem. This is the case for the new NIST standard \textsc{Dilithium}~\cite{DKLL18}, which considers SIS in the $\ell_\infty$-norm with norm bound $\beta$ rather close to the modulus $q$. 

The dual of the latter SIS variant is commonly known as the Learning with Errors problem (LWE)~\cite{Regev2005} with narrow error distribution, say ternary errors. This version of LWE was proven to be solvable in slightly subexponential time by Kirchner and Fouque~\cite{KF15} using a variant of the Blum-Kalai-Wasserman algorithm (BKW)~\cite{BKW03}, at least when the number of samples $m$ is large enough. 
More precisely, it was claimed in~\cite{KF15} that $m$ linear in $n$ suffices, but a subsequent work of Herold, Kirshanova, and May~\cite{HKM18} found an issue in the proof. They resolved the issue only for $m \geq C n\log n$ (for some constant $C>0$), leaving open whether fewer samples would suffice. 
Whether this limitation on the number of samples is fundamental or an artifact of the proof remains unclear. 

Our work was triggered by the folklore interpretation of BKW as a combination of Wagner's algorithm~\cite{Wag02}, run over the dual lattice, and the Aharonov-Regev distinguisher~\cite{AR05}. This raises a natural question: using the tricks of Kirchner and Fouque but only in the Wagner part of the algorithm, can one (provably) solve interesting instances of SIS in subexponential time?  

\subsection{Contribution} 

We answer the above question positively, namely we prove that a variant of Wagner's algorithm solves SIS with $n$ equations modulo $q = \poly(n)$, in subexponential time $\exp(O(n/\log \log n))$ for an $\ell_\infty$-norm bound $\beta = q/\polylog(n)$ and only requiring $m = n + \omega(n/\log \log n)$ many SIS variables. We also achieve subexponential complexity for ISIS and SIS$^\times$ for an $\ell_2$-norm bound $\beta = \sqrt n \cdot q/\polylog(n)$. All prior algorithms (including heuristic ones) for these problems had exponential asymptotic complexity $\exp(\Omega(n))$.  

Beyond these applications, we also provide a significant revision of the ideas and techniques at hand. First, we propose a more abstract interpretation of Wagner for lattices using a zigzag through a chain of projected lattices and superlattices, depicted in Figure~\ref{fig: lift}. This emphasizes that the very principle of the algorithm is not tied to SIS, and its subexponential complexity really stems from the ease of constructing the appropriate chain of lattices in the case of SIS lattices.

Furthermore, we circumvent the sample-amplification technique~\cite{Lyu05}, and instead follow in the footsteps of~\cite{ADRS2015,ADS15,AS18,ALS21,ACKS21} by relying on discrete-Gaussian techniques to obtain provable results. Specifically, we use discrete Gaussian sampling and rounding to maintain precise control of the distribution of vectors throughout the chain of lattices. 

Finally, we consider whether this approach threatens the concrete security claim of the NIST standard~\textsc{Dilithium}. It turns out that, despite its subexponential complexity, and removing all the overhead introduced for provability, this algorithm is far less efficient against the concrete parameters of~\textsc{Dilithium} than standard lattice reduction attacks.

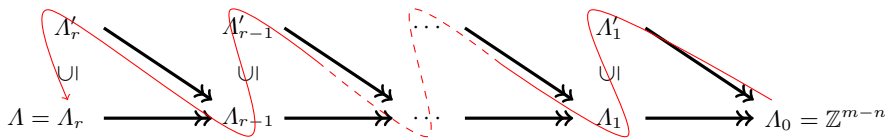
\begin{figure}[ht!]
    \centering
    \begin{tikzpicture}[scale=1.2]
    \def\fl{0.2}
   
    \draw (0-0.24,0) node {$\Lambda = \Lambda_r$} ;
    \draw (2,0) node {$\Lambda_{r-1}$} ;
    \draw (4,0) node {$\cdots$} ;
    \draw (6,0) node {$\Lambda_{1}$} ;
    \draw (8.4,0) node {$\Lambda_{0} = \Z^{m-n}$} ;

    \draw[->>, very thick] (0+2*\fl,0)--(2-2*\fl,0) ;
    \draw[->>, very thick] (2+2*\fl,0)--(4-2*\fl,0) ;
    \draw[->>, very thick] (4+2*\fl,0)--(6-2*\fl,0) ;
    \draw[->>, very thick] (6+2*\fl,0)--(8-2*\fl,0) ;
 
    
    \draw (6,1) node {$\Lambda'_1$} ; 
    \draw[<<-, very thick] (8-2*\fl,0+\fl)--(6+2*\fl,1);

    \draw (2,1) node {$\Lambda'_{r-1}$} ; 
    \draw[<<-, very thick] (4-2*\fl,0+\fl)--(2+2*\fl,1); 
    
    \draw (0,1) node {$\Lambda'_r$} ; 
    \draw[<<-, very thick] (2-2*\fl,0+\fl)--(0+2*\fl,1); 
    
    \draw (4,1) node {$\cdots$} ;
    \draw[<<-, very thick] (6-2*\fl,0+\fl)--(4+2*\fl,1); 
    
    \draw (0,0.5) node {\rotatebox[origin=c]{90}{$\subseteq$}} ; 
    \draw (2,0.5) node {\rotatebox[origin=c]{90}{$\subseteq$}} ; 
    \draw (6,0.5) node {\rotatebox[origin=c]{90}{$\subseteq$}} ;

    \draw [red] plot [smooth] coordinates {(8-\fl,\fl) (6-\fl,1+\fl) (6+\fl,-\fl) (5-\fl,.6) };
    \draw [red, dashed] plot [smooth] coordinates {(5-\fl,.6) (4-\fl,1+\fl) (4,-\fl) (3-\fl,.6)};
    \draw [red, ->] plot [smooth] coordinates {(3-\fl,.6) (2-\fl,1+\fl) (2,-\fl) (0-\fl,1+\fl) (0,\fl)};

\end{tikzpicture}
    \caption{Wagner's algorithm interpreted over a chain of projected lattices $\Lambda_i$ and auxiliary superlattices $\Lambda'_i$. Thick black double-arrows denote surjective orthogonal projections between lattices. The red curvy arrow denotes the path taken by the algorithm. It starts with (short) vectors in  $\Z^{m-n}$, and follows the zigzag path until it generates (short) vectors in the lattice $\Lambda$, which, in our application, correspond to SIS solutions.}
    \label{fig: lift}
\end{figure}

\vspace{-0.2cm}

\subsection{Technical Overview} \label{sec: technical introduction}


The Short Integer Solution problem in the infinity norm is defined as follows. 

\begin{problem}[$\SISinfty_{n,m,q,\beta}$] \label{def: SISinfty}
Let $n, m, q \in \N$ with $m\geq n$, and let $\beta >0$. Given a uniformly random matrix $\Am \in \Z_q^{n \times m}$, the problem $\SISinfty_{n,m,q,\beta}$ asks to find a nonzero vector $\xv \in \Z^m$ satisfying $\Am \xv = \zerovec \bmod q$ and $\|\xv\|_\infty \leq \beta$. 
\end{problem}

This problem can be phrased as a short vector problem, as it is equivalent to finding a nonzero vector of norm at most $\beta$ in the $q$-ary lattice $\Lambda_q^\bot(\Am) := \{\xv\in \Z^{m} : \Am\xv = \zerovec \bmod q\}$. 
It is trivial when $\beta > q$, as $(q, 0, \dots , 0)$ is a solution, and becomes vacuously hard (i.e., typically no solution exists) when $\beta$ is too small. 

Consider an instance of the $\SISinfty_{n,m,q,\beta}$ problem for some given $\Am \in \Z_q^{n \times m}$. When $\Am$ is of full rank\footnote{This happens with overwhelming probability $1 - (1/q)^{o(1)}$ when $m - n = \omega(1)$: a uniformly random matrix in $\Z_q^{n\times m}$ (with $m \geq n$) is of full rank with probability at least $1-1/q^{m-n}$.} we can, without loss of generality, assume that $\Am$ is written in systematic form, i.e., $\Am = [\Am' \mid \Id_n]$ with $\Am' \in \Z_q^{n \times (m-n)}$.
With this writing of $\Am$, we see that the problem of finding $\xv \in \Z_q^m$ satisfying $\Am \xv = \zerovec \bmod q$ and $\|\xv\|_\infty \leq \beta$ is equivalent to finding $\zv \in \Z_q^{m-n}$ satisfying $\|\xv(\zv)\|_\infty \leq \beta$ where $\xv(\zv) := (\zv ; - \Am' \zv)$.\footnote{We use the notation $(\av; \bv)$ for the vertical concatenation of vectors $\av$ and $\bv$.}

\subsubsection{Wagner-Style Algorithms for SIS.} 
Wagner's generalized birthday algorithm~\cite{Wag02} addresses the problem of finding elements from a list $L_0 \subseteq \Z_q^n$ that sum up to zero.\footnote{Although Wagner originally considered the case $\Z_2^n$, the algorithm for $\Z_q^n$ follows the same principle.} The same technique was independently used in~\cite{BKW03} to solve a dual problem. 

Given an $\SIS$ instance $\Am = [\Am' \mid \Id_n] \in \Z_q^{n \times m}$, divide the rows of the parity-check matrix $\Am$ into $r$ blocks of equal size $n/r$, as illustrated in \Cref{fig: matrix}. The algorithm then iteratively solves smaller $\SIS$ instances corresponding to the parity-check matrices $\Am_i = [\Am_i' \mid \Id_{in/r}]$, for $i=1$ up to $r$, where $\Am_i'$ is defined by the first $i$ blocks of rows of $\Am'$. 
Specifically, the algorithm begins by filling an initial list $L_0$ with short vectors. 
Then, in iteration $i$, it computes, for each vector $\xv \in L_{i-1} \subseteq \Z_q^{m-n+ (i-1)n/r}$, the unique (modulo $q$) vector $\yv \in \Z^{n/r}$ such that $\Am_{i} (\xv ; \yv) = \zerovec \bmod q$. 
This vector $\yv$ is likely to be of high norm since it is uniformly distributed modulo~$q$. 
The algorithm then searches for pairs of such vectors $\xv'_1 = (\xv_1;\yv_1)$, $\xv'_2 = (\xv_2;\yv_2)$ that satisfy $\yv_1=\yv_2 \bmod q$, to then add $\xv'_1-\xv'_2 = (\xv_1-\xv_2; \zerovec)$ to the list $L_i$. This ensures that the difference $\xv'_1-\xv'_2$ remains short, as $\xv_1$ and $\xv_2$ are short and the rest has norm zero. 
The final list $L_r$ contains vectors $\xv \in \Z_q^{m}$ that satisfy $\Am \xv = \zerovec \bmod q$, providing potential nonzero and short solutions to the original $\SIS$ problem. 

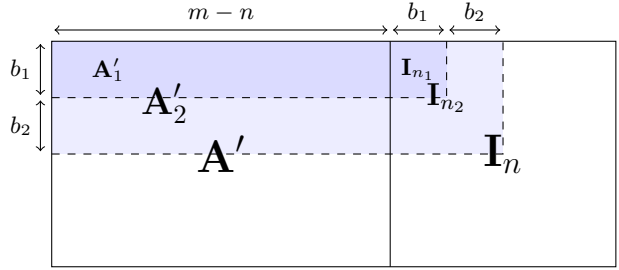
\begin{figure}[ht!]
    \centering
    \begin{tikzpicture}[scale=0.75]

    \fill[blue!7] (0,-2) rectangle (8,0); 
    \fill[blue!14] (0,-1) rectangle (7,0); 

    \draw (0,0) rectangle (10,-4) ;
    \draw (6,0)--(6,-4) ;
    \draw[<->] (0+0.050,0.2)--(6-0.050,0.2) ;
    \draw (3,0.2) node[above] {$m - n$} ;
    
    \draw[dashed] (0,-1)--(7,-1)--(7,0) ;
    \draw[<->] (-0.2,0-0.050)--(-0.2,-1+0.050) ;
    \draw (-0.2, -0.5) node[left] {$\nnew_1$};
    \draw[<->] (6+0.050,0.2)--(7-0.050, 0.2) ;
    \draw (6.5, 0.2) node[above] {$\nnew_1$};
    \draw (1,-0.5) node {$\textbf{A}'_1$} ;
    \draw (6.5,-0.5) node {$\Id_{\nsum_1}$} ;
    
    \draw[dashed] (0,-2)--(8,-2)--(8,0) ;
    \draw[<->] (-0.2,-1-0.050)--(-0.2,-2+0.050) ;
    \draw (-0.2, -1.5) node[left] {$\nnew_2$};
    \draw[<->] (7+0.050,0.2)--(8-0.050, 0.2) ;
    \draw (7.5, 0.2) node[above] {$\nnew_2$};
    \draw (2,-1.1) node {\Large $\textbf{A}'_2$} ;
    \draw (7,-1) node {\large $\Id_{\nsum_2}$} ;
    
    \draw (8,-2) node {\LARGE $\textbf{I}_n$} ;

    \draw (3,-2) node {\LARGE $\textbf{A}'$} ;

\end{tikzpicture}
    \caption{Illustration of the parity-check matrices $\Am_i := [\Am'_i \mid \Id_{\nsum_i}]$, where $\Am'_i$ is the matrix defined by the first $\nsum_i := \sum_{j=1}^i \nnew_j$ rows of $\Am'$. (Without the generalized lazy-modulus switching technique, the $\nnew_j$ are set equal to $n/r$.) Each iteration $i$ of the algorithm solves an $\SIS$ instance given by parity-check matrix $\Am_i$, until ultimately the whole matrix $\Am = \Am_r$ is covered.}
    \label{fig: matrix}
\end{figure}

\paragraph{Lazy-Modulus Switching.} 
The aforementioned approach constructs the lists $L_i$ by combining vectors that lie in the same `bucket', where the bucket of $(\xv;\yv)$ for $\xv \in L_{i-1}$ is labeled by the value of $\yv$ modulo~$q$. The vectors in $L_i$ are then guaranteed to be of the form $(\xv' ; \zerovec)$, but this is not necessary for the algorithm to succeed: a form like $(\xv' ; \yv')$ with both $\xv'$ and $\yv'$ being short suffices. 
Based on this observation, the authors of~\cite{AFFP14} consider using a smaller modulus $p<q$, and combine $(\xv_1;\yv_1)$ and $(\xv_2;\yv_2)$ if $\lfloor \frac p q \yv_1 \rceil= \lfloor \frac p q \yv_2 \rceil \bmod p$. 
This modified condition for combining vectors may result in a nonzero `rounding error' in the $\yv'$-part, which is traded for a reduced number of buckets: $p^{n/r}$ instead of $q^{n/r}$.  

Two concurrent works~\cite{KF15,GJMS17} generalize this `lazy-modulus switching' technique by considering different moduli $p_1, \dots, p_r$ (not necessarily prime). 
In this approach, the matrix $\Am$ is divided into $r$ blocks of respective size $\nnew_1, \dots, \nnew_r$. 
The rounding errors induced by iteration $i$ are at most $q/p_i$, and may double in each subsequent iteration when the vectors are combined. Hence, it makes sense to use decreasing moduli $p_i$ to balance the accumulation of rounding error. 
Each step requires (more than) $p_i^{\nnew_i}$ vectors to find collisions in $\Z_{p_i}^{\nnew_i}$, leading to increasing block sizes $\nnew_i$. This increasing choice of $\nnew_i$'s is central to the subexponential complexity claim of~\cite{KF15}. 

\subsubsection{A Naive Analysis.}

Let us consider \cref{alg: worst-case Wagner for SIS} and attempt to analyze it in order to highlight the core of the issue. It is a rephrased version of the algorithm of~\cite{KF15} without the LWE dual-distinguishing step.

\begin{algorithm}
\caption{Wagner-Style Algorithm for $\SIS$} 
\label{alg: worst-case Wagner for SIS}
\DontPrintSemicolon
\SetKwInOut{Input}{Input}\SetKwInOut{Output}{Output}

\Input{Integers $n,m,q$; \\
    Full-rank matrix $\Am = [\Am' \mid \Id_n] \in \Z_q^{n \times m}$; \\ 
    Integer parameters $N, r, (p_i)_{i=1}^r, (\nnew_i)_{i=1}^r$ with $\sum_{i=1}^r \nnew_i = n$} 
\Output{List of vectors $\xv \in \Z_q^m$ such that $\Am \xv = \zerovec \bmod{q}$ and $\|\xv\|_\infty \leq 2^r$}
\vspace{2mm}

Initialize a list $L_0$ with $3^r N$ independent, uniform samples from $\{-1,0,1\}^{m-n}$ \\
\For{$i = 1, \ldots, r$}{
    $L_i := \emptyset$ \\
    Initialize empty buckets $B(\cv)$ for each $\cv \in \Z_{q}^{\nnew_i}$ \\
    \For(\tcp*[f]{Bucketing}){$\xv \in L_{i-1}$}{
    Compute the unique $\yv \in \Z_{q}^{\nnew_i}$ satisfying $\Am_i(\xv;\yv) = \zerovec \bmod q$ \\ 
    Compute $\cv = \left\lfloor \frac{p_i}{q} \yv \right\rceil\bmod p_i$ \\
    Append $\xv' := (\xv ; \yv)$ to $B(\cv)$ \\
    }
    \For(\tcp*[f]{Combining}){$\cv \in \Z_{p_i}^{\nnew_i}$}{
        \For{each two vectors $\xv'_1, \xv'_2$ in $B(\cv)$}{
            Append $\xv'_1 - \xv'_2$ to the list $L_i$ \\ 
            Remove $\xv'_1$ and $\xv'_2$ from $B(\cv)$ \\
        }
    }
}
\Return{$L_r$}
\end{algorithm}

\paragraph{List Construction.} Since each vector in $L_i$ is either paired or left alone in a bucket, it follows that $|L_i| \leq 2|L_{i+1}| + p_i^{\nnew_i}$. Thus, by initializing a list $L_0$ of size $3^r N$, we obtain at least $N$ vectors in $L_r$, assuming $N \geq \max_{i} p_i^{\nnew_i}$.

\paragraph{Bucketing and Combining.}
At the beginning of each iteration $i \in \{1,\dots,r\}$, the algorithm initializes some empty lists $B(\cv)$, the `buckets', each labeled by a vector (coset) $\cv \in \Z_{p_i}^{\nnew_i}$. 
Every iteration works in two phases. First, the `for'-loop over the $\xv \in L_{i-1}$ performs the \textit{bucketing} phase: the vectors are added to a bucket according to their corresponding $\yv$-value. 
In the \textit{combining} phase, the `for'-loop over each bucket representative $\cv \in \Z_{p_i}^{\nnew_i}$ takes differences of vectors $\xv'_1, \xv'_2$ belonging to the same bucket. 
Since $\Am\xv'_1 = \zerovec \bmod q$ and $\Am\xv'_2 = \zerovec \bmod q$, the difference vector $\xv'_1 - \xv'_2$ also satisfies $\Am(\xv'_1-\xv'_2) = \zerovec \bmod q$ by linearity. 
If the vectors $\xv'_1 = (\xv_1;\yv_1)$, $\xv'_2 = (\xv_2;\yv_2)$ belong to the same bucket $B(\cv)$ for a given $\cv$, then they satisfy $\cv = \lfloor \frac{p_i}{q} \yv_1 \rceil = \lfloor \frac{p_i}{q} \yv_2 \rceil \bmod p_i $, where the operation $\lfloor \cdot \rceil$ denotes rounding each coordinate to its nearest integer modulo $p_i$. 
Consequently, $\yv_1$ and $\yv_2$ are `close' to each other, ensuring their difference is short: we have that $\|\yv_1 - \yv_2\|_\infty \leq q/p_i$, and thus $\|\xv'_1-\xv'_2\|_\infty \leq \max\{2\|\xv_1\|_\infty, 2\|\xv_2\|_\infty, q/p_i\}$. 
By induction over the  $r$ iterations, it follows that the algorithm outputs vectors $\xv \in L_r$ satisfying
\begin{align}\label{eq: bound on norm tech intro}
    \|\xv\|_\infty \leq \max_{0 \leq i \leq r} 2^{r-i} \tfrac{q}{p_i}, 
\end{align}
where we define $p_0 = q$.

\subsubsection{Time Complexity of \Cref{alg: worst-case Wagner for SIS}.}

\Cref{eq: bound on norm tech intro} shows that the choice of the number of iterations $r$ and the moduli $p_i$ influences the maximal norm of the vectors. These parameters also affect the algorithm's time complexity, along with the other parameters that need to be chosen. Although the algorithm parameters $p_i, \nnew_i, r$ and $N$ should be integers to make sense, we consider them as real numbers for simplicity in this introductory section. 

\paragraph{Parameter Selection.}
Let the target norm for the $\SIS$ problem be $\beta = \frac{q}{f}$ for some factor $f > 1$. 
Note that the problem is easiest when $f=1$ (norm $q$) and harder when $f$ increases (i.e., the norm $q/f$ becomes shorter). 
The bound on the output norms given in \Cref{eq: bound on norm tech intro} is minimized when both terms are balanced: $2^r = 2^{r-i}q/p_i$. Therefore we set $p_i: = q/2^i$. 
One can generate $p_i^{\nnew_i}$ distinct vectors modulo $p_i$ with $\nnew_i$ coordinates, so to keep the number of samples comparable at each step $i$, we set $N = p_i^{\nnew_i}$. 
This implies $\nnew_i = \frac{\log N}{\log p_i} = \frac{\log_2 N}{\log_2 q - i}$. 
By the choice of the $p_i$, the final vectors have a norm of at most $2^r$. To ensure that it is at most $\beta = q/f$, we must choose $r$ such that $2^r \leq \frac{q}{f}$. We set $r := \log_2(q/f) - 1$.  
We must also ensure that $n = \nsum_r = \sum_{i=1}^r \nnew_i$. 
Since the $\nnew_i$ increase with $i$, we can bound their sum by $n = \sum_{i=1}^r \nnew_i \leq \int_1^{r+1} \frac{\log_2 N}{\log_2 (q) - x} \text{d}x = (\log_2 N) \cdot \ln \left( \frac{\log_2 q-1}{\log_2 q-(r+1)} \right)$.\footnote{For $f$ an increasing function, $\int_{0}^{r} f(x) \,\text{d}x \leq \sum_{i=1}^r f(i) \leq \int_{1}^{r+1} f(x)\,\text{d}x$. Also, for $A,B,a,b > 0$, $\int_a^b \frac{A}{B-x} \text{d}x = A\ln \left(\frac{B-a}{B-b}\right)$.\label{footnote: integral facts}} Thus, 
\begin{equation}\label{eq:N Riemann worst case}
    n \leq \log_2(N) \cdot \ln \left( \frac{\log_2 q}{\log_2 f} \right). 
\end{equation}
Rewriting, we conclude that taking $\log_2(N)= \frac{n}{\ln \ln(q) - \ln \ln(f)}$ suffices. Up to rounding of the parameters (as they need to be integers), we would get the following statement. 

\textbf{Tentative Theorem.} \textit{Let $n,m,q \in \N$ and $f>1$. There exists an algorithm that, given $\Am \in \Z_q^{n \times m}$, returns a (possibly zero) vector $\xv \in \Z_q^m$ such that $\Am\xv = \zerovec\bmod q$ and $\|\xv\|_\infty \leq \frac{q}{f}$ in time} $$T = 2^{\tfrac{n}{\ln \ln(q) - \ln \ln(f)}} \cdot \poly(n, \log q).$$ 

\paragraph{Important Remark.} There is a catch!
The output list of the algorithm contains a $\SIS$ solution only if at least one of the output vectors is \textit{nonzero}. So it remains to be proven that the vectors do not all cancel out to zero. 
At the first iteration, the set $\{-1,0,1\}^{m-n}$ is significantly larger than the number of buckets, hence differences of vectors from a same bucket are unlikely to be zero vectors. However, from the second iteration onward, specifying the vector distributions is far from straightforward. 

This issue does not arise in the original works of~\cite{Wag02,BKW03} where $m$ was assumed to be as large as the initial list size $|L_0|$, and the initial list is simply filled with standard unit vectors (i.e., vectors with one coordinate equal to 1 and zeros elsewhere). These vectors are linearly independent and hence cannot cancel each other out. In the case of solving ternary LWE with~\cite{BKW03} rather than SIS with~\cite{Wag02}, this situation can be emulated using a sample-amplification technique from~\cite{Lyu05}, however, this requires at least $m = \Theta(n \log n)$ samples to start with according to~\cite{HKM18}; the argument of~\cite{KF15} that this should work for $m = \Theta(n)$ has been shown to be flawed in~\cite{HKM18}.  

Yet, at least heuristically, it is not clear why this approach should fail when $m = \Theta(n)$. There is enough entropy to avoid collision at the first step, and entropy should intuitively increase at a later stage: after all, the vectors are getting larger. But to adequately formalize this intuition, we shift our perspective on the algorithm.

\subsubsection{From Parity-Check Perspective to Lattices.}  \label{sec: geometric interpretaion}

As mentioned before, the $\SIS$ problem is equivalent to a short vector problem which asks to find a nonzero vector of norm at most $\beta$ in the lattice $\Lambda:= \{\xv\in \Z^{m}: \Am\xv = \zerovec \bmod q\}$.  
We consider the sequence of lattices $\Z^{m-n} = \Lambda_0, \dots, \Lambda_r = \Lambda$ with $\Lambda_i := \{\xv\in \Z^{m-n+\nsum_i} : \Am_i\xv = \zerovec \bmod q\}$. 
Each parity-check matrix $\Am_i$ adds a block of $\nnew_i$ new coordinates, making $\Lambda_{i-1}$ a projection of $\Lambda_i$. 
It is easy to sample bounded vectors in $\Lambda_0 := \Z^{m-n}$, by sampling for example in $\{-1,0,1\}^{m-n}$, as \Cref{alg: worst-case Wagner for SIS} did. 
The goal is to use these initial samples to go back through the projections towards $\Lambda$ (recall \Cref{fig: lift}), while keeping the norms bounded. 

While one can lift back a vector from $\Lambda_{i-1}$ to $\Lambda_{i}$, such a lifted vector would not be short because the lattice $\Lambda_{i}$ we are lifting over is too sparse. Instead, what happens in \Cref{alg: worst-case Wagner for SIS}, is that we implicitly lift to a denser lattice $\Lambda'_{i} \supseteq \Lambda_{i}$, and then take the difference of two vectors in the same coset of the quotient $\Lambda'_{i}/\Lambda_{i}$ to fall again in $\Lambda_{i}$. More explicitly, setting $\cv = \lfloor \frac{p_i}{q} \yv \rceil$ as in  \Cref{alg: worst-case Wagner for SIS}, note that the vector $(\xv, \yv - \frac q {p_i} \cv)$ lives in the lattice $\Lambda'_i = \Lambda_i + (\{0\}^{m - n + \nsum_{i-1}} \times \frac q {p_i} \Z^{\nnew_i} )$ which satisfies $|\Lambda'_i/\Lambda_i| = p_i^{\nnew_i}$. Furthermore, the coset of that vector in the quotient $\Lambda'_i/\Lambda_i$ is exactly determined by ${\cv} \bmod p$.   

This provides a clean and pleasant interpretation of the algorithm as walking in a commutative diagram shown in \Cref{fig: lift}. It further provides the framework to deploy the full machinery of discrete Gaussian distributions over lattices: the initial samples can easily be made Gaussian over $\Z^{m-n}$, the lifting step as well using the randomized Babai algorithm~\cite{GPV2008,EWY23}, and the combination step preserves Gaussians using convolution lemmas~\cite{Pei2010,MP2013}, as already used in~\cite{ADRS2015,ADS15,ACKS21,AS18,ALS21} to solve short vector problems. There are further technicalities to control the independence between the samples, which we handle using the (conditional) similarity notion introduced in~\cite{ALS21}. This permits to control all the distributions throughout the algorithm, leading to provable conclusions. 
In particular, a careful choice of the algorithm's parameters guarantees, with high probability, that the final distribution is not concentrated at zero~\cite[Lemma~2.11]{PR2006}.

\subsection{Organization of the Paper} 
\Cref{sec:preliminaries} provides the necessary background for this paper. In \Cref{sec: Wagner Gaussian sampler (full section)}, we present our Gaussian sampler and analyze its asymptotic time complexity. In \Cref{sec:Asymptotic Application to Cryptographic Problems}, we use the Gaussian sampler to asymptotically solve several variants of SIS, carefully avoiding the `canceling out to zero' issue. Finally, \Cref{sec:Concrete and Heuristic Application to Dilithium} discusses the impact of the attack on the concrete security of \textsc{Dilithium}.

\subsection{Acknowledgements}
The authors are grateful to Ronald de Wolf and the anonymous reviewers for their useful comments on the manuscript. 
LD and JL were supported by the ERC Starting Grant 947821 (ARTICULATE).
LE was supported by the Dutch National Growth Fund (NGF), as part of the Quantum Delta NL program.

\section{Preliminaries} \label{sec:preliminaries}

\paragraph{Notation.}
We write $\|\cdot\|_2$ and $\|\cdot\|_\infty$ for the Euclidean and infinity norm of a vector, respectively, and define $\mathcal{B}_n^\infty := \{\xv \in \R^n \colon \|\xv\|_\infty \leq 1\}$.  
For a positive integer $N$, we define $[N] := \{1,\ldots, N\}$. We use the notation $X = e^{\pm \delta}$ as shorthand for $X \in [e^{-\delta},e^{\delta}]$. 
For $x = (x_1,\ldots, x_N)$ and $i\in [N]$, we define $x_{-i} := (x_1, \ldots, x_{i-1}, x_i, \ldots, x_N)$. We define $x_{-\{i,j\}}$ analogously.
For events $E_0,E_1$, we use the convention that $\Pr[E_0 \mid E_1] = 0$ if $\Pr[E_1] = 0$.

\paragraph{Asymptotic Notation.} 
Let $f$ and $g$ be functions that map positive integers to positive real numbers. 
We write $f(n) = O(g(n))$ if there exist constants $c, n_0 > 0$ such that $f(n) \leq c \cdot g(n)$ for every integer $n \geq n_0$. 
We write $f(n) = \Omega(g(n))$ if there exist constants $c, n_0 > 0$ such that $f(n) \geq c \cdot g(n)$ for every integer $n \geq n_0$. 
We write $f(n) = \Theta(g(n))$ if both $f(n) = O(g(n))$ and $f(n) = \Omega(g(n))$. 
We define $\tilde{O}(f(n)) := O\left(f(n) \cdot \mathrm{polylog}(f(n))\right)$, where $\mathrm{polylog}(f(n)) := \log(f(n))^{O(1)}$. 
Similarly, $\tilde{\Omega}(f(n)) := \Omega\left( f(n) / \mathrm{polylog}(f(n)) \right)$. 
We write $f(n) = o(g(n))$ if, for all constants $c > 0$, there exists $n_0 > 0$ such that $f(n) < c \cdot g(n)$ for every integer $n \geq n_0$. 
We write $f(n) = \omega(g(n))$ if, for all constants $c>0$, there exists $n_0 > 0$ such that $f(n) > c \cdot g(n)$ for every integer $n \geq n_0$.

\subsection{Similarity of Distributions}

Inspired by~\cite{ALS21}, we consider the notions `similarity' and `conditional similarity' to measure the pointwise distance between two distributions. These concepts are slightly stronger than statistical distance (see \Cref{rem: similarity implies closeness}), and are particularly useful for handling small independencies arising from `bucket and combine' type of algorithms (like each iteration of Wagner-style algorithms).

\begin{definition}[Similar]\label{defn: similarity}
    Let $D$ be a probability distribution over a set $\mathcal{X}$, and let $\delta \geq 0$ be a real. 
    A random variable $X \in \mathcal{X}$ is $\delta$-similar to $D$ if, for all $x \in \mathcal X$, it holds that \begin{align*}
        \Pr_{X}[X = x] = e^{\pm \delta} \cdot \Pr_{Y \sim D}[Y = x].
    \end{align*}
\end{definition}

\begin{remark}[Similarity implies Closeness in Statistical Distance]\label{rem: similarity implies closeness}
    The notion of $\delta$-similarity is a stronger notion than being within statistical distance $\delta$, where we recall that the statistical distance between two discrete random variables $X, Y$ is defined as $\frac{1}{2} \sum_{x \in \mathcal X} \left|\Pr[X=x] - \Pr[Y=x]\right|$. Indeed, for all $\delta \in [0,1]$, being $\delta$-similar implies being within statistical distance $\delta$, since for any such $\delta$ we have $[e^{-\delta}, e^{\delta}]  \subseteq [1 - 2\delta, 1 + 2\delta]$.
\end{remark}

The next definition is closely related to~\cite[Definition~4.1]{ALS21}, but we remark that we use different terminology and that we made the definition more general. 

\begin{definition}[Conditionally Similar]
    Let $D$ be a probability distribution over a set $\mathcal{X}$, and let $\delta \geq 0$ be a real. 
    Discrete random variables $X_1, \ldots, X_N \in \mathcal{X}$ are \emph{conditionally $\delta$-similar to independent samples from $D$} if, for all $i\in [N]$ and $x \in \mathcal{X}^N$, it holds that 
    \begin{align*}
        \Pr_{X_1,\ldots, X_N}[X_i = x_i \mid X_{-i} = x_{-i}]  = e^{\pm \delta} \cdot \Pr_{Y \sim D}[Y = x_i].
    \end{align*}
\end{definition}

In particular, being conditionally $0$-similar (i.e., $\delta = 0$) is equivalent to being independently distributed according to $D$.

\begin{remark}[Conditional Similarity implies Marginal Similarity]\label{rem: conditional similarity implies marginal similarity}
    For $N~=~1$, conditional similarity coincides with \Cref{defn: similarity}. More generally, for all $N\geq 1$ and $\delta \geq 0$, conditional $\delta$-similarity implies marginal $\delta$-similarity.  
    Indeed, if $X_1,\ldots,X_N \in \mathcal X$ are conditionally $\delta$-similar to independent samples from a distribution $D$ on $\mathcal X$, then we have, for all $i \in [N]$ and $x \in \mathcal{X}$, that $\Pr[X_i = x] = \sum_{y \in \mathcal{X}^{N-1}}\Pr[X_i = x | X_{-i} = y] \Pr[X_{-i} = y] = e^{\pm \delta} \cdot \Pr_{Y \sim D}[Y=x]$.  
\end{remark}

A useful property of similarity and conditional similarity is that these notions are closed under convex combinations (as already observed in~\cite{ALS21}). 
   
\begin{lemma}\label{lem: similarity and conditional similarity closed under convex combinations}
    Let $D$ be a probability distribution over a set $\mathcal{X}$, and let $\delta \geq 0$ be a real. 
    Let $\mathcal{E}$ be a finite or countably infinite set of mutually exclusive and collectively exhaustive events. 
    If discrete random variables $X_1, \ldots, X_N \in \mathcal{X}$ satisfy
    \begin{align*}
        \Pr_{X_1,\ldots, X_N}[X_i = x_i \mid X_{-i} = x_{-i} \text{ and } E]  = e^{\pm \delta} \cdot \Pr_{Y \sim D}[Y = x_i]
    \end{align*}
    for all $i \in [N]$, $x \in \mathcal{X}^N$, and $E \in \mathcal{E}$, then $X_1, \ldots, X_N$ are conditionally $\delta$-similar to independent samples from $D$. 
\end{lemma}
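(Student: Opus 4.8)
The plan is to decompose the conditional distribution of $X_i$ given $X_{-i}=x_{-i}$ according to the partition $\mathcal E$ via the law of total probability, and then exploit the fact that the interval $[e^{-\delta}p,e^{\delta}p]$ (with $p=\Pr_{Y\sim D}[Y=x_i]$) is convex, so that a weighted average of conditional probabilities each lying in this interval again lies in it. In other words, $\delta$-similarity of a mixture follows from $\delta$-similarity of all its components because the defining constraint is membership in a fixed interval, which is preserved under convex combinations.

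Concretely, I would fix $i\in[N]$ and $x\in\mathcal X^N$ and split into two cases. In the degenerate case $\Pr[X_{-i}=x_{-i}]=0$, the target left-hand side $\Pr[X_i=x_i\mid X_{-i}=x_{-i}]$ equals $0$ by the stated convention; moreover, for any $E\in\mathcal E$ (the family is nonempty, being collectively exhaustive) we have $\Pr[X_{-i}=x_{-i}\text{ and }E]=0$, so the hypothesis reads $0=e^{\pm\delta}\cdot\Pr_{Y\sim D}[Y=x_i]$, which forces $\Pr_{Y\sim D}[Y=x_i]=0$ and makes the conclusion hold trivially. In the main case $\Pr[X_{-i}=x_{-i}]>0$, set $w_E:=\Pr[E\mid X_{-i}=x_{-i}]$; since the $E$'s partition the sample space, $\sum_{E\in\mathcal E}w_E=1$, and the law of total probability applied inside the conditional measure $\Pr[\,\cdot\mid X_{-i}=x_{-i}]$ gives
\[
\Pr[X_i=x_i\mid X_{-i}=x_{-i}]=\sum_{E\in\mathcal E}w_E\cdot\Pr[X_i=x_i\mid X_{-i}=x_{-i}\text{ and }E],
\]
where terms with $w_E=0$ vanish. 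For each $E$ with $w_E>0$ the hypothesis gives $\Pr[X_i=x_i\mid X_{-i}=x_{-i}\text{ and }E]\in[e^{-\delta}p,e^{\delta}p]$ with $p:=\Pr_{Y\sim D}[Y=x_i]$, so the sum is a convex combination of numbers in $[e^{-\delta}p,e^{\delta}p]$ and hence lies in $[e^{-\delta}p,e^{\delta}p]$ as well; that is, $\Pr[X_i=x_i\mid X_{-i}=x_{-i}]=e^{\pm\delta}\cdot\Pr_{Y\sim D}[Y=x_i]$, which is precisely conditional $\delta$-similarity (and the marginal statement is the special case $N=1$).

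The only delicate point — calling it an obstacle overstates it — is the bookkeeping around zero-probability conditioning events under the paper's convention $\Pr[E_0\mid E_1]=0$ when $\Pr[E_1]=0$: one must check that the total-probability decomposition above is valid termwise (terms attached to a null conditioning event contribute $0$ on both sides) and that the degenerate case $\Pr[X_{-i}=x_{-i}]=0$ is resolved consistently with this convention, as sketched above. Everything else reduces to the elementary convexity of an interval.
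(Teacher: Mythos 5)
Your proof is correct and takes essentially the same route as the paper: decompose $\Pr[X_i=x_i\mid X_{-i}=x_{-i}]$ by the law of total probability over the partition $\mathcal E$ inside the conditional measure, observe the weights $\Pr[E\mid X_{-i}=x_{-i}]$ sum to $1$, and conclude by convexity of the interval $[e^{-\delta}p,e^{\delta}p]$. The one difference is that you explicitly treat the degenerate case $\Pr[X_{-i}=x_{-i}]=0$ (showing the hypothesis then forces $\Pr_{Y\sim D}[Y=x_i]=0$ under the paper's conditioning convention), a corner the paper's proof glosses over; this is a small but genuine gain in rigor.
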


\begin{proof}
Suppose that the premise is true for random variables $X_1,\ldots, X_N$. Then, for all $i \in [N]$ and $x \in \mathcal{X}^N$, we have {\allowdisplaybreaks \begin{align*}
    &\, \Pr_{X_1,\ldots, X_N}[X_i = x_i \mid X_{-i} = x_{-i}] \\
    =&\, \sum_{E \in \mathcal{E}} \Pr_{X_1,\ldots, X_N}[X_i = x_i \text{ and } E \mid X_{-i} = x_{-i}] \\
    =&\, \sum_{E \in \mathcal{E}} \Pr_{X_1,\ldots, X_N}[X_i = x_i \mid E \text{ and } X_{-i} = x_{-i} ] \cdot \Pr_{X_1,\ldots, X_N}[E \mid X_{-i} = x_{-i}] \\
    =&\, e^{\pm \delta} \cdot \Pr_{Y \sim D}[Y = x_i]
\end{align*}
since $\sum_{E \in \mathcal{E}} \Pr[E \mid X_{-i} = x_{-i}] = \frac{\sum_{E \in \mathcal{E}} \Pr[E \text{ and } X_{-i} = x_{-i}]}{\Pr[X_{-i} = x_{-i}]}  = 1$.}
This proves the lemma. \qed
\end{proof}

The following property can be viewed as the data-processing inequality for conditional similarity.  
\begin{lemma}\label{lem: easy consequence of conditional similarity}
    Let $D$ be a probability distribution over a set $\mathcal{X}$, and let $\delta \geq 0$ be a real. If discrete random variables $X_1, \ldots, X_N \in \mathcal{X}$ are conditionally $\delta$-similar to independent samples from $D$, 
    then \begin{align*}
        \Pr_{X_1,\ldots, X_N}[f(X_i) = 1] = e^{\pm \delta} \cdot \Pr_{Y \sim D}[f(Y) = 1]
    \end{align*}
    for all $i \in [N]$ and all functions $f \colon \mathcal{X} \to \{0,1\}$. 
\end{lemma}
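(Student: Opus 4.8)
The statement to prove (Lemma labeled \texttt{lem: easy consequence of conditional similarity}) says that conditional $\delta$-similarity is preserved under pushing forward through any $\{0,1\}$-valued function $f$, in the sense that $\Pr[f(X_i) = 1] = e^{\pm\delta}\cdot\Pr_{Y\sim D}[f(Y)=1]$. The natural route is to go through marginal similarity: by \Cref{rem: conditional similarity implies marginal similarity}, conditional $\delta$-similarity implies that the marginal law of $X_i$ is $\delta$-similar to $D$, i.e.\ $\Pr[X_i = x] = e^{\pm\delta}\cdot\Pr_{Y\sim D}[Y=x]$ for every $x\in\mathcal X$. So it suffices to prove the (even simpler) fact that marginal $\delta$-similarity is preserved under $f$.

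For that, I would fix $i$ and $f$, let $S := f^{-1}(1) = \{x\in\mathcal X : f(x) = 1\}\subseteq\mathcal X$, and simply sum the pointwise bound over $S$:
\begin{align*}
\Pr[f(X_i) = 1] = \sum_{x\in S}\Pr[X_i = x] = \sum_{x\in S} e^{\pm\delta}\cdot\Pr_{Y\sim D}[Y=x].
\end{align*}
Since each term lies in the interval $[e^{-\delta}\Pr_{Y\sim D}[Y=x],\, e^{\delta}\Pr_{Y\sim D}[Y=x]]$ with nonnegative endpoints, the sum lies in $[e^{-\delta}\sum_{x\in S}\Pr_{Y\sim D}[Y=x],\, e^{\delta}\sum_{x\in S}\Pr_{Y\sim D}[Y=x]] = [e^{-\delta}\Pr_{Y\sim D}[f(Y)=1],\, e^{\delta}\Pr_{Y\sim D}[f(Y)=1]]$, which is exactly the claimed $e^{\pm\delta}$ bound. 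One should note the edge case $\Pr_{Y\sim D}[f(Y)=1] = 0$: then every $x\in S$ has $\Pr_{Y\sim D}[Y=x]=0$, hence $\Pr[X_i=x]=0$ too, and both sides are $0$, consistent with $e^{\pm\delta}\cdot 0$.

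**Main obstacle.** There is essentially no obstacle here — the lemma is a one-line consequence once the right reduction (to marginal similarity) and the right observation (the $e^{\pm\delta}$ multiplicative envelope is preserved under summation of nonnegative quantities because it is the same multiplicative constant pulled out of every term) are in place. The only mild subtlety is the convention, stated in the notation paragraph, that conditional probabilities are taken to be $0$ when conditioning on a probability-zero event; this is what makes \Cref{rem: conditional similarity implies marginal similarity} (and hence the present lemma) well-posed, so I would invoke that remark as a black box rather than re-deriving the marginal bound. I would write the proof in two sentences: invoke \Cref{rem: conditional similarity implies marginal similarity} for the marginal bound on $X_i$, then sum over $f^{-1}(1)$.
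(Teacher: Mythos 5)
Your proposal is correct and takes essentially the same approach as the paper: both reduce to the marginal $\delta$-similarity of $X_i$ (which you invoke via \Cref{rem: conditional similarity implies marginal similarity}, while the paper applies the pointwise bound inline) and then sum the pointwise inequality over $f^{-1}(1)$. Your explicit note about the edge case $\Pr_{Y\sim D}[f(Y)=1]=0$ is a harmless addition the paper leaves implicit.
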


\begin{proof}
    Suppose that $X_1,\ldots,X_N$ are conditionally $\delta$-similar to independent samples from $D$. Let $i \in [N]$ and let $f \colon \mathcal{X} \to \{0,1\}$ be an arbitrary function.
    Define $f^{-1}(1) := \{x \in \mathcal X \colon f(x) = 1\}$. 
    Then \begin{align*}
        \Pr_{X_1,\ldots, X_N}[f(X_i) = 1] &= \sum_{x \in \mathcal X} \Pr_{X_1,\ldots, X_N}[f(X_i) = 1 \text{ and } X_i = x] \\
        &= \sum_{x \in f^{-1}(1)} \Pr_{X_1,\ldots, X_N}[X_i = x] \\
        &= e^{\pm \delta} \cdot \sum_{x \in f^{-1}(1)} \Pr_{Y \sim D}[Y = x] \\
        &= e^{\pm \delta} \cdot \Pr_{Y \sim D}[f(Y) = 1]. 
    \end{align*} \qed
\end{proof}

\subsection{Lattices}

Given $k$ linearly independent vectors $\bv_1, \dots, \bv_k \in \R^n$, let $\Bm \in \R^{n \times k}$ be the matrix whose columns are the $\bv_i$. The \textit{lattice} associated to $\Bm$ is the set $\mathcal{L}(\Bm) := \Bm \Z^k =  \left\{ \sum_{i=1}^k z_i \bv_i : z_i \in \Z \right\} \subseteq \R^n$ of all integer linear combinations of these vectors. We say that $\Bm$ is a \textit{basis} for a lattice $\mathcal{L}$ if $\mathcal{L} = \mathcal{L}(\Bm)$. We say that $\mathcal{L}$ has \textit{rank} $k$ and \textit{dimension} $n$. If $n=k$, then $\mathcal{L}$ is said to be of \textit{full rank}.
We define $\Span_\R(\mathcal{L}(\Bm)) = \Span_\R(\Bm) = \{\Bm \xv \colon \xv \in \R^n\}$.
We define the \textit{dual} of $\mathcal{L}$ by $\mathcal{L}^* := \{ \yv \in \Span_\R(\mathcal{L}) \colon \forall \xv \in \mathcal{L}, \langle \xv, \yv \rangle \in \Z\}$, which is a lattice.

\paragraph{First Successive Minimum.} Given a lattice $\mathcal{L}$, we write $\lambda_1(\mathcal{L}):=\inf\{\|\xv\|_2 \colon \xv \in \mathcal{L} \setminus \{\zerovec\}\}$ for the (Euclidean) norm of a shortest lattice vector. We define $\lambda_1^\infty(\mathcal{L})$ similarly for the infinity norm.

\paragraph{Projections and Primitive Sublattices.} 
For any $S \subseteq \R^n$, we write $\pi_S$ for the projection onto $\Span_\R(S)$ and $\pi_{S}^{\bot}$ for the projection orthogonal to $\Span_\R(S)$. 
We say that a sublattice $\mathcal{S}$ of a lattice $\mathcal{L} \subseteq \R^n$ is \textit{primitive} if $\mathcal{S} = \Span_{\R}(\mathcal{S}) \cap \mathcal{L}$. It implies that there exists a sublattice $\mathcal{C} \subseteq \mathcal{L}$ such that $\mathcal{S} \oplus \mathcal{C} = \mathcal{L}$ (i.e., $\mathcal{S} + \mathcal{C} = \mathcal{L}$ and $\mathcal{S} \cap \mathcal{C} = \{\zerovec\}$). We then say that $\mathcal{C}$ is a \textit{complement} to $\mathcal{S}$.

\paragraph{Relevant $q$-ary Lattices.} 
We say that a lattice $\mathcal{L} \subseteq \R^n$ is\textit{ $q$-ary} if $q\Z^n \subseteq \mathcal{L} \subseteq \Z^n$. The two relevant $q$-ary lattices in this work are of the following form. 
For $\Am \in \Z_q^{n \times m}$, we define the full-rank $q$-ary lattices \begin{align*}
    \Lambda_q^\bot(\Am) &:= \{\xv \in \Z^m \colon \Am\xv = 0 \bmod{q}\} \subseteq \Z^{m}, \\ 
    \Lambda_q(\Am) &:= \{\yv \in \Z^m \colon \exists \sv \in \Z_q^n, \yv = \Am^{\top}\sv \bmod{q}\} = \Am^{\top} \Z^n + q\Z^m \subseteq \Z^{m}.
\end{align*}
They are duals up to appropriate scaling: namely, $\Lambda_q(\Am) = q \cdot (\Lambda_q^\bot(\Am))^*$ and $\Lambda_q^\bot(\Am) = q \cdot (\Lambda_q(\Am))^*$. 
Furthermore, $\det(\Lambda_q^\bot(\Am)) \leq q^n$, $\det(\Lambda_q(\Am)) \geq q^{m-n}$, and $\det(\Lambda_q^\bot(\Am)) \cdot \det(\Lambda_q(\Am)) = q^m$. 

If $m \geq n$ and $\Am$ is of full rank, we can assume without loss of generality that $\Am = [\Am' \mid \Id_n]$ for some $\Am' \in \Z_q^{n \times (m-n)}$. Then a basis of $\Lambda_q^\bot(\Am)$ is given by \begin{align*}
    \begin{pmatrix}
        0 & \Id_{m-n} \\
        q \Id_{n} & -\Am'
    \end{pmatrix}.
\end{align*}

\subsection{Discrete Gaussian Distribution and Smoothness}

In the following, when the subscripts $s$ and $\cv$ are omitted, they are respectively taken to be $1$ and $\zerovec$. 

For any real $s > 0$ and $\cv \in \R^n$, we define the Gaussian function on $\R^n$ centered at $\cv$ with parameter $s$ by \begin{align*}
    \forall \xv \in \R^n,\, \rho_{s,\cv}(\xv) := \exp(-\pi \|(\xv - \cv)/s\|_2^2).
\end{align*} 

For any countable set $A$, we define $\rho_{s,\cv}(A) = \sum_{\xv \in A} \rho_{s,\cv}(\xv)$. Note that $\rho_{s,\cv}(\xv) = \rho_{s}(\xv - \cv)$, and thus $\rho_{s,\cv}(A) = \rho_{s}(A - \cv)$.

For any real $s > 0$, $\cv \in \R^n$, and full-rank lattice $\mathcal{L} \subseteq \R^n$, we define the discrete Gaussian distribution over $\mathcal{L}$ centered at $\cv$ with parameter $s$ by \begin{align*}
    \forall \xv \in \mathcal{L},\, D_{\mathcal{L},s,\cv}(\xv) := \frac{\rho_{s,\cv}(\xv)}{\rho_{s,\cv}(\mathcal{L})} = \frac{\rho_{s}(\xv - \cv)}{\rho_{s}(\mathcal{L} - \cv)}
\end{align*}
and it is 0 for $\xv\notin \mathcal{L}$. 

Similarly, for any $\tv \in \R^n$, we define  $D_{\mathcal{L} - \tv, s, \cv}(\yv) := \frac{\rho_{s, \cv}(\yv)}{\rho_{s, \cv}(\mathcal{L} - \tv)}$ for $\yv \in \mathcal{L} - \tv$. (Note that $D_{\mathcal{L},s,\cv} \equiv \cv +  D_{\mathcal{L}-\cv,s}$.)

\paragraph{Infinity Norm of Discrete Gaussian Samples.} We can tail-bound the infinity norm of a discrete Gaussian sample using the following lemma. 

\begin{lemma}[{\cite[Lemma~2.10]{Ban95}}]\label{lem: Lemma 2.10 in Ban95}
    For any full-rank lattice $\mathcal{L} \subseteq \R^n$ and real $R > 0$, 
    \begin{align*}
        \frac{\rho(\mathcal{L} \setminus R \cdot \mathcal{B}_n^\infty)}{\rho(\mathcal{L})} &< 2n \cdot e^{-\pi R^2}.
    \end{align*} 
\end{lemma}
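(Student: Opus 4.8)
The plan is to reduce the infinity-norm tail to a union bound over $2n$ one-sided, single-coordinate tails, and to bound each such tail by $e^{-\pi R^2}\rho(\mathcal{L})$ using a completion-of-the-square manipulation of the Gaussian weight together with the Banaszczyk shift inequality $\rho(\mathcal{L}+\vv)\leq\rho(\mathcal{L})$, valid for every $\vv\in\R^n$.

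First, decompose the tail region: a point $\xv\in\mathcal{L}$ lies outside $R\cdot\mathcal{B}_n^\infty$ exactly when $|x_i|>R$ for some coordinate $i$, so $\mathcal{L}\setminus R\cdot\mathcal{B}_n^\infty=\bigcup_{i=1}^n\big(S_i^+\cup S_i^-\big)$ with $S_i^\pm:=\{\xv\in\mathcal{L}\colon \pm x_i>R\}$. By subadditivity of $\rho$, $\rho(\mathcal{L}\setminus R\cdot\mathcal{B}_n^\infty)\leq\sum_{i=1}^n\big(\rho(S_i^+)+\rho(S_i^-)\big)$, so it suffices to prove $\rho(S_i^\pm)<e^{-\pi R^2}\rho(\mathcal{L})$ for each $i$.

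Fix $i$ and consider $S_i^+$; the case $S_i^-$ is symmetric (replace $-R\ev_i$ by $+R\ev_i$ below). For every $\xv\in\R^n$ one has the identity $\rho(\xv-R\ev_i)=\exp\!\big(-\pi(\|\xv\|_2^2-2Rx_i+R^2)\big)=\rho(\xv)\cdot e^{\pi(2Rx_i-R^2)}$. When $x_i>R$ we have $2Rx_i-R^2>R^2$, hence $\rho(\xv-R\ev_i)>e^{\pi R^2}\rho(\xv)$ for all $\xv\in S_i^+$. Summing over $\xv\in S_i^+\subseteq\mathcal{L}$ and using nonnegativity of $\rho$ on the rest of the lattice, $\rho(\mathcal{L}-R\ev_i)\geq\sum_{\xv\in S_i^+}\rho(\xv-R\ev_i)>e^{\pi R^2}\rho(S_i^+)$. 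Combining with $\rho(\mathcal{L}-R\ev_i)\leq\rho(\mathcal{L})$ gives $\rho(S_i^+)<e^{-\pi R^2}\rho(\mathcal{L})$. Plugging this (and its $S_i^-$ analogue) into the union bound yields $\rho(\mathcal{L}\setminus R\cdot\mathcal{B}_n^\infty)<2n\,e^{-\pi R^2}\rho(\mathcal{L})$, which is the claim after dividing by $\rho(\mathcal{L})>0$.

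The only non-elementary ingredient, and the step I would treat most carefully, is the shift inequality $\rho(\mathcal{L}+\vv)\leq\rho(\mathcal{L})$. I would either invoke it as the classical bound of Banaszczyk, or derive it on the spot from Poisson summation: since $\rho$ is its own Fourier transform, $\rho(\mathcal{L}+\vv)=\tfrac{1}{\det\mathcal{L}}\sum_{\wv\in\mathcal{L}^*}\rho(\wv)\cos(2\pi\langle\vv,\wv\rangle)\leq\tfrac{1}{\det\mathcal{L}}\sum_{\wv\in\mathcal{L}^*}\rho(\wv)=\rho(\mathcal{L})$, where the last equality is the case $\vv=\zerovec$. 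Everything else is a union bound and an algebraic rearrangement of the Gaussian density, so no further obstacles are expected.
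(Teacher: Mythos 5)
The paper only cites this as~\cite[Lemma~2.10]{Ban95} and does not reprove it, so there is no in-paper proof to compare against; your argument is a correct reconstruction of Banaszczyk's original one. The decomposition into $2n$ one-sided coordinate tails, the completion-of-the-square identity $\rho(\xv-R\ev_i)=\rho(\xv)\,e^{\pi(2Rx_i-R^2)}$, and the shift inequality $\rho(\mathcal{L}+\vv)\le\rho(\mathcal{L})$ via Poisson summation are exactly the ingredients of his proof, and every step checks out. The only cosmetic point is the strict inequality $\sum_{\xv\in S_i^+}\rho(\xv-R\ev_i)>e^{\pi R^2}\rho(S_i^+)$, which as written presupposes $S_i^+\neq\emptyset$; when $S_i^+=\emptyset$ the desired bound $\rho(S_i^+)<e^{-\pi R^2}\rho(\mathcal{L})$ is trivially $0<e^{-\pi R^2}\rho(\mathcal{L})$, so nothing is lost, but it would be cleaner to dispose of that case explicitly.
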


\paragraph{Smoothness.} 
The work of~\cite{MR2004} introduced a lattice quantity known as the smoothing parameter. 
More precisely, for any full-rank lattice $\mathcal{L} \subseteq \R^n$ and real $\eps > 0$, we define the \textit{smoothing parameter} $\eta_\eps(\mathcal{L})$ as the smallest real $s > 0$ such that $\rho_{1/s}(\mathcal{L}^*\setminus \{\zerovec\}) \leq \eps$.


Intuitively, it gives a lower bound on $s$ such that $D_{\mathcal{L}, s}$ `behaves like' a continuous Gaussian distribution, in a specific mathematical sense. 
The following lemma justifies the name of the smoothing parameter. 
\begin{lemma}[Implicit in {\cite[Lemma~4.4]{MR2004}}]\label{lem: Lemma 4.4 in MR2004 implicit version}
    For any full-rank lattice $\mathcal{L} \subseteq \R^n$, real $\eps \in (0,1)$, real $s \geq \eta_\eps(\mathcal{L})$, and $\cv \in \R^n$, 
    \begin{align*}
        \frac{\rho_{s,\cv}(\mathcal{L})}{\rho_{s}(\mathcal{L})} \in \left[\frac{1-\eps}{1+\eps}, 1\right].
    \end{align*}
\end{lemma}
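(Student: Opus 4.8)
The plan is to reduce everything to the dual lattice $\mathcal{L}^*$ via the Poisson summation formula and then invoke the definition of the smoothing parameter. Recall that the Gaussian function is essentially its own Fourier transform: with the normalization $\rho_s(\xv) = \exp(-\pi\|\xv/s\|_2^2)$ one has $\widehat{\rho_s} = s^n\,\rho_{1/s}$. Poisson summation for the full-rank lattice $\mathcal{L}\subseteq\R^n$ (legitimate here because the Gaussian sums converge absolutely and rapidly) then gives, for any shift $\cv\in\R^n$,
\[
  \rho_{s,\cv}(\mathcal{L}) \;=\; \rho_s(\mathcal{L}-\cv) \;=\; \frac{s^n}{\det(\mathcal{L})}\sum_{\wv\in\mathcal{L}^*}\rho_{1/s}(\wv)\,e^{2\pi i\langle\wv,\cv\rangle},
\]
and, taking $\cv=\zerovec$ in particular, $\rho_s(\mathcal{L}) = \frac{s^n}{\det(\mathcal{L})}\,\rho_{1/s}(\mathcal{L}^*)$. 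Dividing, the common factor $s^n/\det(\mathcal{L})$ cancels and we are left with
\[
  \frac{\rho_{s,\cv}(\mathcal{L})}{\rho_s(\mathcal{L})} \;=\; \frac{\sum_{\wv\in\mathcal{L}^*}\rho_{1/s}(\wv)\,e^{2\pi i\langle\wv,\cv\rangle}}{\rho_{1/s}(\mathcal{L}^*)}.
\]

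Next I would isolate the $\wv=\zerovec$ term, which contributes $1$ to the numerator and $1$ to $\rho_{1/s}(\mathcal{L}^*)$ in the denominator. Writing $\eps' := \rho_{1/s}(\mathcal{L}^*\setminus\{\zerovec\})$, the denominator equals $1+\eps'$. For the numerator, pairing each $\wv$ with $-\wv$ shows that $\sum_{\wv\in\mathcal{L}^*\setminus\{\zerovec\}}\rho_{1/s}(\wv)\,e^{2\pi i\langle\wv,\cv\rangle}$ is real, and since each summand has absolute value $\rho_{1/s}(\wv)$ this sum lies in $[-\eps',\eps']$; hence the numerator lies in $[1-\eps',\,1+\eps']$. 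Therefore
\[
  \frac{\rho_{s,\cv}(\mathcal{L})}{\rho_s(\mathcal{L})} \;\in\; \left[\frac{1-\eps'}{1+\eps'},\;\frac{1+\eps'}{1+\eps'}\right] \;=\; \left[\frac{1-\eps'}{1+\eps'},\;1\right].
\]

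Finally I would invoke the hypothesis $s\geq\eta_\eps(\mathcal{L})$: by the definition of the smoothing parameter (together with the fact that $\rho_{1/s}(\mathcal{L}^*\setminus\{\zerovec\})$ is nonincreasing in $s$) this gives $\eps' = \rho_{1/s}(\mathcal{L}^*\setminus\{\zerovec\}) \leq \eps$. Since $x\mapsto\frac{1-x}{1+x}$ is decreasing on $[0,1)$ and $\eps'\leq\eps<1$, we get $\frac{1-\eps'}{1+\eps'}\geq\frac{1-\eps}{1+\eps}$, which yields the claimed inclusion $\frac{\rho_{s,\cv}(\mathcal{L})}{\rho_s(\mathcal{L})}\in[\frac{1-\eps}{1+\eps},\,1]$.

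The only step requiring genuine care is the Poisson summation step: both checking that it applies (absolute convergence is immediate for Gaussians over a lattice) and bookkeeping the normalization constant $s^n/\det(\mathcal{L})$ so that it cancels cleanly in the ratio. Everything after that is elementary manipulation with the phases $e^{2\pi i\langle\wv,\cv\rangle}$ and the monotone function $\frac{1-x}{1+x}$. If one prefers to avoid quoting Poisson summation explicitly, the same identity can be extracted directly from the proof of \cite[Lemma~4.4]{MR2004}, which is precisely where this statement is implicit.
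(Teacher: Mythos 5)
Your proof is correct and is exactly the standard Poisson-summation argument that the paper points to via its citation of \cite[Lemma~4.4]{MR2004}; the paper itself provides no proof of this lemma, so there is nothing to diverge from. The only detail worth stressing (and you do handle it) is the monotonicity of $s \mapsto \rho_{1/s}(\mathcal L^*\setminus\{\zerovec\})$, which is what lets you pass from the defining equality at $s = \eta_\eps(\mathcal L)$ to the inequality $\eps' \leq \eps$ for all $s \geq \eta_\eps(\mathcal L)$. The sign in the phase $e^{\pm 2\pi i\langle\wv,\cv\rangle}$ depends on one's Fourier convention but is immaterial since the oscillatory sum is real and you only use its modulus.
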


For $s$ slightly above smoothing, we can upper bound the probability of the most likely outcome of the discrete Gaussian distribution. 
\begin{lemma}[Min-entropy {\cite[Lemma~2.11]{PR2006}}] \label{lem: upper bound on maximum discrete Gaussian probability (min-entropy)} 
    For any full-rank lattice $\mathcal{L} \subseteq \R^n$,  reals $\eps > 0$ and $s \geq 2\eta_\eps(\mathcal{L})$, center $\cv \in \R^n$, and vector $\xv \in \R^n$ we have \begin{align*}
        \Pr_{X \sim D_{\mathcal{L}, s, \cv}}[X = \xv] \leq \frac{1 + \eps}{1 - \eps} \cdot 2^{-n}.
    \end{align*}
\end{lemma}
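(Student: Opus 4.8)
The plan is to reduce the bound on $\Pr_{X \sim D_{\mathcal{L},s,\cv}}[X = \xv]$ to comparing a single Gaussian weight $\rho_{s,\cv}(\xv)$ against the total mass $\rho_{s,\cv}(\mathcal{L})$, and then to exploit the hypothesis $s \geq 2\eta_\eps(\mathcal{L})$ by a \emph{doubling} (or $2$-thinning) trick: a lattice at parameter $2\eta_\eps$ behaves, after scaling, like a sublattice at parameter $\eta_\eps$ inside itself. Concretely, write $\Pr_{X \sim D_{\mathcal{L},s,\cv}}[X = \xv] = \rho_{s,\cv}(\xv)/\rho_{s,\cv}(\mathcal{L}) \leq 1/\rho_{s,\cv}(\mathcal{L})$, since $\rho_{s,\cv}(\xv) \leq 1$ for every $\xv$. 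So it suffices to show $\rho_{s,\cv}(\mathcal{L}) \geq \frac{1-\eps}{1+\eps}\, 2^{n}$.

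First I would observe that $\rho_{s,\cv}(\mathcal{L}) = \rho_s(\mathcal{L} - \cv)$, and relate this to the mass of the scaled lattice $2\mathcal{L}$. The key identity is that $\mathcal{L}$ is the disjoint union of $2^{n}$ cosets of $2\mathcal{L}$, i.e.\ $\mathcal{L} = \bigcup_{\vv \in \mathcal{L}/2\mathcal{L}} (2\mathcal{L} + \vv)$ with $|\mathcal{L}/2\mathcal{L}| = 2^{n}$ (here using that $\mathcal{L}$ is full-rank of rank $n$). Hence
\begin{align*}
    \rho_{s,\cv}(\mathcal{L}) = \sum_{\vv \in \mathcal{L}/2\mathcal{L}} \rho_{s,\cv}(2\mathcal{L} + \vv) = \sum_{\vv \in \mathcal{L}/2\mathcal{L}} \rho_{s}\big((2\mathcal{L}) - (\cv - \vv)\big).
\end{align*}
Now $2\mathcal{L}$ is a full-rank lattice with $(2\mathcal{L})^* = \tfrac12 \mathcal{L}^*$, so $\eta_\eps(2\mathcal{L}) = 2\eta_\eps(\mathcal{L}) \leq s$; thus $s \geq \eta_\eps(2\mathcal{L})$ and \Cref{lem: Lemma 4.4 in MR2004 implicit version} applies to the lattice $2\mathcal{L}$ at parameter $s$ with arbitrary center. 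Therefore each summand satisfies $\rho_{s}\big((2\mathcal{L}) - (\cv-\vv)\big) = \rho_{s,\cv - \vv}(2\mathcal L) \geq \tfrac{1-\eps}{1+\eps}\,\rho_s(2\mathcal{L}) \geq \tfrac{1-\eps}{1+\eps}$, the last inequality because $\zerovec \in 2\mathcal{L}$ gives $\rho_s(2\mathcal{L}) \geq \rho_s(\zerovec) = 1$. Summing over the $2^{n}$ cosets yields $\rho_{s,\cv}(\mathcal{L}) \geq \tfrac{1-\eps}{1+\eps}\,2^{n}$, and combining with the first display gives the claim.

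The main obstacle — really the only nontrivial point — is justifying that \Cref{lem: Lemma 4.4 in MR2004 implicit version} can be applied to $2\mathcal{L}$ \emph{with a shifted center} $\cv - \vv$: one must confirm the lemma as stated allows an arbitrary center (it does: it quantifies over $\cv \in \R^n$), and that $\eta_\eps(2\mathcal{L}) = 2\eta_\eps(\mathcal{L})$, which follows from $\rho_{1/s}((2\mathcal{L})^* \setminus \{\zerovec\}) = \rho_{1/s}(\tfrac12\mathcal{L}^* \setminus\{\zerovec\}) = \rho_{2/s}(\mathcal{L}^*\setminus\{\zerovec\})$ and the definition of the smoothing parameter. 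Everything else is bookkeeping with the coset decomposition and the trivial bounds $\rho_{s,\cv}(\xv)\leq 1$ and $\rho_s(2\mathcal{L})\geq 1$. I would take care to state the index set $\mathcal{L}/2\mathcal{L}$ and its cardinality $2^n$ explicitly, since that is where the factor $2^{-n}$ in the conclusion comes from.
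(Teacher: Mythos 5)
Your proof is correct, and since the paper merely cites this result from~\cite{PR2006} without reproducing a proof, the relevant comparison is against the original argument: your coset decomposition of $\mathcal{L}$ into the $2^n$ classes of $\mathcal{L}/2\mathcal{L}$, combined with the observation that $\eta_\eps(2\mathcal{L}) = 2\eta_\eps(\mathcal{L}) \leq s$ so that \Cref{lem: Lemma 4.4 in MR2004 implicit version} applies to $2\mathcal{L}$ at each shifted center, is exactly the standard proof of this lemma (and explains the factor $2$ in the hypothesis $s \geq 2\eta_\eps(\mathcal{L})$). The only implicit assumption worth flagging is $\eps < 1$, which is forced anyway since the claimed bound is vacuous otherwise and which \Cref{lem: Lemma 4.4 in MR2004 implicit version} already requires.
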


In particular, for $\xv = \zerovec$, \Cref{lem: upper bound on maximum discrete Gaussian probability (min-entropy)} gives an upper bound on the probability that a discrete Gaussian sample is zero (if the standard deviation $s$ is large enough).

\paragraph{Sampling and Combining.} 

In our algorithms, we sample from (scalings of) $\Z^n$ using the exact Gaussian sampler from~\cite{BLPRS2013}.  

\begin{lemma}[Implicit in {\cite[Lemma~2.3]{BLPRS2013}}]\label{lem: exact randomized Gaussian rounding}
    There is a randomized algorithm that, given a real $s \geq \sqrt{\ln(2n + 4)/\pi}$ and $\cv \in \R^n$, returns a sample from $D_{\Z^n, s, \cv}$ in expected time $\mathrm{poly}(n, \log s, \log \|\cv\|_\infty)$. 
\end{lemma}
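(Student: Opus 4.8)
The plan is to reduce the $n$-dimensional problem to $n$ independent one-dimensional samplings, and to realize the one-dimensional sampler by rejection sampling; the one-dimensional exact sampler is the content of \cite[Lemma~2.3]{BLPRS2013}, and the only ``implicit'' part is the tensor extension to $\Z^n$ together with a check of the parameter threshold.

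\textbf{Reduction to dimension one.} Since $\rho_{s,\cv}(\xv) = \prod_{i=1}^n \rho_{s,c_i}(x_i)$ and $\Z^n$ is a product lattice, the distribution $D_{\Z^n,s,\cv}$ is exactly the product of the independent one-dimensional discrete Gaussians $D_{\Z,s,c_i}$, $i\in[n]$. Hence it suffices to produce, for a single coordinate, a sampler from $D_{\Z,s,c}$ running in expected time $\poly(\log s,\log\|\cv\|_\infty)$, and to invoke it $n$ times; the total expected time is then $n\cdot\poly(\log s,\log\|\cv\|_\infty)=\poly(n,\log s,\log\|\cv\|_\infty)$. The hypothesis $s\ge\sqrt{\ln(2n+4)/\pi}$ forces $s\ge\sqrt{\ln 6/\pi}$ for every $n\ge1$, and a short geometric-series estimate (or \Cref{lem: Lemma 4.4 in MR2004 implicit version} applied to $\Z$) shows this places $s$ above $\eta_\eps(\Z)$ for a fixed absolute $\eps\in(0,1)$, which is all we use below. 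Finally, one may reduce the center modulo $1$: writing $c=\lfloor c\rceil+\bar c$ with $\bar c\in(-\tfrac12,\tfrac12]$ gives $\rho_{s,c}(x)=\rho_{s,\bar c}(x-\lfloor c\rceil)$, hence $D_{\Z,s,c}\equiv\lfloor c\rceil+D_{\Z,s,\bar c}$. So we may assume $|c|\le\tfrac12$; the dependence on $\log\|\cv\|_\infty$ enters only through computing $\lfloor c\rceil$ and writing the output, each costing $\poly(\log\|\cv\|_\infty)$.

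\textbf{One-dimensional rejection sampler.} Draw a candidate $z\in\Z$ from an easy-to-sample symmetric proposal $Q$ over $\Z$ --- a two-sided geometric (``discrete Laplace'') distribution, sampled as a sign together with a geometric variable --- with ratio tuned to $s$ so that $Q$ dominates $\rho_{s,\bar c}$ pointwise up to a factor $M=O(s)$, i.e. $\rho_{s,\bar c}(z)\le M\cdot Q(z)$ for all $z\in\Z$; this is possible because $\rho_{s,\bar c}(z)=e^{-\pi(z-\bar c)^2/s^2}$ decays faster than any geometric in the tails while $s$ is bounded below by an absolute constant. Accept $z$ with probability $\rho_{s,\bar c}(z)/(M\,Q(z))$. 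By the standard rejection-sampling principle the accepted output is distributed exactly as $D_{\Z,s,\bar c}$, and the per-trial acceptance probability equals $\rho_{s,\bar c}(\Z)/M$. By \Cref{lem: Lemma 4.4 in MR2004 implicit version} and the elementary Poisson-summation bound $\rho_s(\Z)\ge s$ we have $\rho_{s,\bar c}(\Z)\ge\tfrac{1-\eps}{1+\eps}\,\rho_s(\Z)\ge\tfrac{1-\eps}{1+\eps}\,s=\Omega(s)$, so with $M=O(s)$ the acceptance probability per trial is $\Omega(1)$ and the expected number of trials is $O(1)$. Each trial draws $O(1)$ random bits plus a geometric variable whose expected magnitude is $O(s^2)$, hence (by Jensen) of expected bit-length $O(\log s)$, and evaluating the acceptance condition costs $\poly(\log s,\log\|\cv\|_\infty)$ in expectation; multiplying by $n$ gives the stated bound.

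\textbf{Main obstacle.} The delicate point is \emph{exactness}: the geometric parameter and each acceptance probability involve the transcendental quantities $e^{-\Theta(1/s)}$ and $e^{-\pi(z-\bar c)^2/s^2}$, so these Bernoulli trials must be realized with perfect random bits rather than floating-point, via the standard lazy / von~Neumann construction of a biased coin whose bias is given as a computable real. Pinning this down, together with the precise choice of proposal and of the factor $M$, is exactly the work done in \cite[Lemma~2.3]{BLPRS2013} (equivalently, Karney's exact discrete-Gaussian sampler); the product decomposition and the center reduction above are routine.
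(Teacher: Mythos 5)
The paper itself states this lemma without proof, deferring entirely to \cite[Lemma~2.3]{BLPRS2013}, so there is no internal proof to compare against; your reconstruction is a reasonable account of why the cited result yields the claim as stated. The decomposition $D_{\Z^n,s,\cv}=\prod_i D_{\Z,s,c_i}$, the reduction of the center modulo $1$, the rejection scheme from a two-sided geometric proposal with $M=O(s)$ and $\Omega(1)$ per-trial acceptance, and the identification of the exactness of the biased coins as the genuine content of the cited lemma are all correct.

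One small slip: in the paragraph ``Reduction to dimension one'' you offer ``\Cref{lem: Lemma 4.4 in MR2004 implicit version} applied to $\Z$'' as an alternative way to see that $s\ge\sqrt{\ln 6/\pi}$ puts $s$ above $\eta_\eps(\Z)$ for a fixed $\eps$. That lemma takes $s\ge\eta_\eps(\mathcal L)$ as a \emph{hypothesis} and concludes a bound on $\rho_{s,\cv}(\mathcal L)/\rho_s(\mathcal L)$; it does not certify the smoothing condition. The reference you want is \Cref{lem: lemma 3.3 in MR2004 applied to Zn} with $n=1$ and $\eps=\tfrac12$, which gives $\eta_{1/2}(\Z)\le\sqrt{\ln 6/\pi}$ --- exactly the threshold in the statement. (Your parenthetical geometric-series estimate is correct and is an equally valid route.) The second invocation of \Cref{lem: Lemma 4.4 in MR2004 implicit version}, to lower-bound $\rho_{s,\bar c}(\Z)$ by $\tfrac{1-\eps}{1+\eps}\rho_s(\Z)$, is the intended use of that lemma and is fine.
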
 


Our proofs use a variant of the convolution lemma~\cite[Theorem~3.3]{MP2013} (see also~\cite[Theorem~3.1]{Pei2010}) that bounds how similar the difference of two discrete Gaussians is to a discrete Gaussian. It is a slightly tighter result than~\cite[Lemma~2.14]{ALS21}. 

\begin{lemma}[Explicit Variant of Convolution Lemma]\label{lem: convolution lemma with specified distance}
    Let $\mathcal{L} \subseteq \R^n$ be a full-rank lattice and let $s \geq \sqrt{2} \eta_\eps(\mathcal{L})$ for some real $\eps > 0$. For $i = 1,2$, let $\mathcal{L} + \cv_i$ be an arbitrary coset of $\mathcal{L}$ and $Y_i$ an independent sample from $D_{\mathcal{L} + \cv_i, s}$. 
    Then the distribution of $Y_1 - Y_2$ satisfies \begin{align*}
        \forall\, \yv \in \mathcal{L} + \cv_1 - \cv_2,\, \Pr[Y_1 - Y_2 = \yv] \in \left[\frac{1-\eps}{1+\eps}, \frac{1+\eps}{1-\eps} \right] \cdot D_{\mathcal{L} + \cv_1 - \cv_2, \sqrt{2}s}(\yv). 
    \end{align*} 
\end{lemma}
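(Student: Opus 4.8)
The plan is to reduce the statement to a clean computation with Gaussian functions and then invoke the smoothing estimate from \Cref{lem: Lemma 4.4 in MR2004 implicit version}. Fix $\yv \in \mathcal{L} + \cv_1 - \cv_2$ and expand
\[
\Pr[Y_1 - Y_2 = \yv] = \sum_{\substack{\yv_1 \in \mathcal{L} + \cv_1 \\ \yv_2 \in \mathcal{L} + \cv_2 \\ \yv_1 - \yv_2 = \yv}} D_{\mathcal{L}+\cv_1, s}(\yv_1) \cdot D_{\mathcal{L}+\cv_2, s}(\yv_2) = \frac{1}{\rho_s(\mathcal{L}+\cv_1)\,\rho_s(\mathcal{L}+\cv_2)} \sum_{\xv \in \mathcal{L}} \rho_s(\xv + \cv_1)\,\rho_s(\xv + \cv_1 - \yv).
\]
First I would apply the standard Gaussian identity $\rho_s(\av)\rho_s(\bv) = \rho_{\sqrt 2 s}(\av + \bv)\,\rho_{s/\sqrt 2}(\av - \bv)$, which turns the summand into $\rho_{\sqrt 2 s}(2\xv + 2\cv_1 - \yv)\,\rho_{s/\sqrt 2}(\yv)$. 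The second factor is independent of $\xv$ and comes out of the sum, while the first factor, as $\xv$ ranges over $\mathcal{L}$, becomes $\rho_{\sqrt 2 s}$ evaluated on a coset of $2\mathcal{L}$; rescaling, $\sum_{\xv\in\mathcal{L}} \rho_{\sqrt 2 s}(2\xv + 2\cv_1 - \yv) = \rho_{s/\sqrt2}(\mathcal{L} + \cv_1 - \yv/2)$.

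Next I would handle the three $\rho$-of-lattice-coset quantities now appearing — $\rho_s(\mathcal{L}+\cv_1)$, $\rho_s(\mathcal{L}+\cv_2)$ in the denominator and $\rho_{s/\sqrt 2}(\mathcal{L} + \cv_1 - \yv/2)$ in the numerator — together with the normalizer $\rho_{\sqrt 2 s}(\mathcal{L} + \cv_1 - \cv_2)$ hiding inside $D_{\mathcal{L}+\cv_1-\cv_2, \sqrt 2 s}(\yv)$ on the right-hand side. The key point is that $s \geq \sqrt 2\,\eta_\eps(\mathcal{L})$ forces $s \geq \eta_\eps(\mathcal{L})$, $s/\sqrt 2 \geq \eta_\eps(\mathcal{L})$, and $\sqrt 2 s \geq \eta_\eps(\mathcal{L})$, so \Cref{lem: Lemma 4.4 in MR2004 implicit version} applies at all three scales: each shifted sum equals $\frac{1\pm\eps\text{-factor}}{\ } $ times the unshifted sum $\rho_\sigma(\mathcal{L})$ at the corresponding $\sigma$, i.e. lies in $[\tfrac{1-\eps}{1+\eps},1]\cdot\rho_\sigma(\mathcal{L})$. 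After substituting $D_{\mathcal{L}+\cv_1-\cv_2,\sqrt 2 s}(\yv) = \rho_{\sqrt 2 s}(\yv)/\rho_{\sqrt 2 s}(\mathcal{L}+\cv_1-\cv_2)$ on the right and collecting the elementary identity $\rho_{s/\sqrt 2}(\yv)\,\rho_{\sqrt 2 s}(\yv) = \rho_s(\yv)^2$... actually one checks $\rho_{s/\sqrt2}(\yv)/\rho_{\sqrt 2 s}(\mathcal{L})$ against $\rho_{\sqrt 2 s}(\yv)$ directly, so the $\yv$-dependent exponential factors must match exactly; I would verify $\rho_{s/\sqrt 2}(\yv) = \rho_{\sqrt 2 s}(\yv) \cdot \rho_s(\yv/??)$ — more carefully, the clean way is that the ratio $\Pr[Y_1-Y_2=\yv] / D_{\mathcal{L}+\cv_1-\cv_2,\sqrt2 s}(\yv)$ should have \emph{no} $\yv$-dependence in its exponential part, only the four $\rho(\mathcal{L}+\text{shift})$ terms, each smoothing-close to $\rho_\sigma(\mathcal{L})$, so the whole ratio lies in $[\tfrac{1-\eps}{1+\eps},\tfrac{1+\eps}{1-\eps}]$ times a constant that one pins to $1$ by summing over $\yv$ (both sides are probability distributions).

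The main obstacle is bookkeeping the Gaussian-parameter algebra so that the $\yv$-dependence cancels exactly between numerator and the target $D_{\mathcal{L}+\cv_1-\cv_2,\sqrt 2 s}(\yv)$: one must track the $\rho_{s/\sqrt2}(\yv)$ versus $\rho_{\sqrt 2 s}(\yv)$ mismatch and see it resolved by the shift $-\yv/2$ inside $\rho_{s/\sqrt 2}(\mathcal{L}+\cv_1-\yv/2)$, which — once smoothing replaces it by the shift-free $\rho_{s/\sqrt 2}(\mathcal{L})$ — removes the last piece of $\yv$-dependence. I expect all three applications of \Cref{lem: Lemma 4.4 in MR2004 implicit version} to contribute factors in $[\tfrac{1-\eps}{1+\eps},1]$ or their reciprocals, and a short argument (the total mass on both sides is $1$, or a direct cross-ratio computation) shows the multiplicative constant is exactly $1$, yielding the stated interval $[\tfrac{1-\eps}{1+\eps},\tfrac{1+\eps}{1-\eps}]$. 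This is where I would be most careful, since an off-by-a-factor in the parameter rescaling ($2\mathcal{L}$ vs $\mathcal{L}$, $\sqrt 2 s$ vs $s/\sqrt 2$) is the easiest place to make a mistake.
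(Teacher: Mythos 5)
Your strategy---expand the convolution as a lattice sum, split the summand with a Gaussian product identity, pull the $\yv$-dependent factor out of the sum, smooth only the remaining shifted lattice sum at parameter $s/\sqrt{2}$, and pin the leftover constant by normalization---is exactly the paper's proof. The gap is that the Gaussian identity you invoke is wrong: the correct version is
\[
    \rho_s(\av)\,\rho_s(\bv) \;=\; \rho_{\sqrt{2}s}(\av+\bv)\,\rho_{\sqrt{2}s}(\av-\bv),
\]
with \emph{both} factors at scale $\sqrt{2}s$ (the parallelogram law applied in the exponent), not $\rho_{\sqrt{2}s}(\av+\bv)\,\rho_{s/\sqrt{2}}(\av-\bv)$. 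With $\av=\xv+\cv_1$ and $\bv=\xv+\cv_1-\yv$ the summand becomes $\rho_{\sqrt{2}s}(2\xv+2\cv_1-\yv)\,\rho_{\sqrt{2}s}(\yv)$, and after rescaling $\rho_{\sqrt{2}s}(2\,\cdot)=\rho_{s/\sqrt{2}}(\cdot)$ one lands on
\[
    \Pr[Y_1-Y_2 = \yv] \;=\; \rho_{\sqrt{2}s}(\yv)\cdot\frac{\rho_{s/\sqrt{2}}(\mathcal{L}+\cv_1-\yv/2)}{\rho_s(\mathcal{L}+\cv_1)\,\rho_s(\mathcal{L}+\cv_2)}.
\]
The factor that comes out of the sum is therefore $\rho_{\sqrt{2}s}(\yv)$, not $\rho_{s/\sqrt{2}}(\yv)$, and it is exactly the exponential appearing in $D_{\mathcal{L}+\cv_1-\cv_2,\sqrt{2}s}(\yv)$; it cancels, and the sole remaining $\yv$-dependence is the shift $-\yv/2$ inside $\rho_{s/\sqrt{2}}(\mathcal{L}+\cv_1-\yv/2)$, which \Cref{lem: Lemma 4.4 in MR2004 implicit version} at scale $s/\sqrt{2}\geq\eta_\eps(\mathcal{L})$ flattens to $\rho_{s/\sqrt{2}}(\mathcal{L})$ up to a factor in $[\tfrac{1-\eps}{1+\eps},1]$.

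You noticed the mismatch---the unresolved placeholder, and the scratch identity $\rho_{s/\sqrt{2}}(\yv)\rho_{\sqrt{2}s}(\yv)=\rho_s(\yv)^2$, which is false (the left side equals $\exp(-\tfrac{5\pi}{2}\|\yv\|_2^2/s^2)$ while the right equals $\exp(-2\pi\|\yv\|_2^2/s^2)$)---but did not locate the error, which sits precisely where you predicted you should be careful. A second caution: your alternative plan of smoothing all four coset sums would inflate the bound by several extra factors of $\tfrac{1+\eps}{1-\eps}$ and still not evaluate the residual constant $\tfrac{\rho_{s/\sqrt{2}}(\mathcal{L})\,\rho_{\sqrt{2}s}(\mathcal{L})}{\rho_s(\mathcal{L})^2}$. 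Only the one $\yv$-dependent sum needs smoothing; the remaining ($\yv$-free) constant is pinned to $[1,\tfrac{1+\eps}{1-\eps}]$ by summing both sides over $\yv$, which is your ``clean way'' sketch and is what the paper does.
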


It follows, for instance, that $Y_1 - Y_2$ is $3\eps$-similar to $D_{\mathcal{L} + \cv_1 - \cv_2, \sqrt{2}s}$ whenever $\eps \leq \frac{1}{2}$. 

\begin{proof} 
We write $D_1$ for $D_{\mathcal{L} + \cv_1, s}$ and $D_2$ for $D_{\mathcal{L} + \cv_2, s}$. The support of the distribution of $Y_1 - Y_2$, for $Y_1 \sim D_1$ and $Y_2 \sim D_2$, is  $\mathcal{L} + \cv_1 - \cv_2$. 
For all $\xv \in \mathcal{L} + \cv_1 - \cv_2$, we have \begin{align}
    \Pr_{\substack{Y_1 \sim D_1\\Y_2 \sim D_2}}[Y_1 - Y_2 = \xv] &= \sum_{\yv_1 \in \mathcal{L} + \cv_1} \Pr_{\substack{Y_1 \sim D_1\\Y_2 \sim D_2}}[Y_1 = \yv_1 \text{ and }Y_1 - Y_2 = \xv] \notag \\
    &= \sum_{\yv_1 \in \mathcal{L} + \cv_1} \Pr_{Y_1 \sim D_1}[Y_1 = \yv_1 ] \cdot \Pr_{\substack{Y_1 \sim D_1\\ Y_2 \sim D_2}}[Y_1 - Y_2 = \xv \mid Y_1 = \yv_1 ] \notag \\ 
    &= \sum_{\yv_1 \in \mathcal{L} + \cv_1} \frac{\rho_s(\yv_1)}{\rho_s(\mathcal{L} + \cv_1)} \cdot \frac{\rho_s(\yv_1 - \xv)}{\rho_s(\mathcal{L} + \cv_2)} \tag{def. of $D_1,D_2$} \\
    &= \rho_{\sqrt{2}s}(\xv) \cdot \sum_{\yv_1 \in \mathcal{L} + \cv_1} \frac{\rho_{s/\sqrt{2}}(\yv_1 - \xv/2)}{\rho_s(\mathcal{L} + \cv_1) \cdot \rho_s(\mathcal{L} + \cv_2)} \label{eq: rewriting in conv lemma} \\
    &= \rho_{\sqrt{2}s}(\xv) \cdot \frac{\rho_{s/\sqrt{2}}(\mathcal{L} + \cv_1 - \xv/2)}{\rho_s(\mathcal{L} + \cv_1) \cdot \rho_s(\mathcal{L} + \cv_2)} \notag
\end{align}
where \Cref{eq: rewriting in conv lemma} holds since  $\rho_s(\vv_1) \cdot \rho_s(\vv_1 - \vv_2) = \rho_{s}(\vv_2/\sqrt{2}) \cdot \rho_{s}(\sqrt{2}\vv_1 - \vv_2/\sqrt{2}) = \rho_{\sqrt{2}s}(\vv_2) \cdot \rho_{s/\sqrt{2}}(\vv_1 - \vv_2/2)$ for all $\vv_1, \vv_2 \in \R^n$. 
Since $s \geq \sqrt{2}\eta_\eps(\mathcal{L})$, \Cref{lem: Lemma 4.4 in MR2004 implicit version} implies $\rho_{s/\sqrt{2}}(\mathcal{L} + \cv_1 - \xv/2) \in \left[\frac{1-\eps}{1+\eps}, 1 \right] \cdot \rho_{s/\sqrt{2}}(\mathcal{L})$. 
Hence, \begin{align*}
    \Pr_{\substack{Y_1 \sim D_1\\ Y_2 \sim D_2}}[Y_1 - Y_2 = \xv] \in \left[\frac{1-\eps}{1+\eps}, 1 \right] \cdot  \rho_{\sqrt{2}s}(\xv) \cdot \frac{\rho_{s/\sqrt{2}}(\mathcal{L})}{\rho_s(\mathcal{L} + \cv_1) \cdot \rho_s(\mathcal{L} + \cv_2)}. 
\end{align*} 
Summing both sides implies that $1 \in \left[\frac{1-\eps}{1+\eps}, 1 \right] \cdot  \rho_{\sqrt{2}s}(\mathcal{L} + \cv_1 - \cv_2) \cdot \frac{\rho_{s/\sqrt{2}}(\mathcal{L})}{\rho_s(\mathcal{L} + \cv_1) \cdot \rho_s(\mathcal{L} + \cv_2)}$, i.e., $\frac{\rho_{s/\sqrt{2}}(\mathcal{L})}{\rho_s(\mathcal{L} + \cv_1) \cdot \rho_s(\mathcal{L} + \cv_2)} \in \left[1, \frac{1+\eps}{1-\eps} \right] \cdot \frac{1}{\rho_{\sqrt{2}s}(\mathcal{L} + \cv_1 - \cv_2)}$. 
It follows that \begin{align*}
    \Pr_{\substack{Y_1 \sim D_1\\ Y_2 \sim D_2}}[Y_1 - Y_2 = \xv] \in \left[\frac{1-\eps}{1+\eps}, \frac{1+\eps}{1-\eps} \right] \cdot \frac{\rho_{\sqrt{2}s}(\xv)}{\rho_{\sqrt{2}s}(\mathcal{L} + \cv_1 - \cv_2)}
\end{align*} as we wanted to show. \qed 
\end{proof}

\subsection{Bounds on Smoothing Parameters of Relevant Lattices}\label{sec: relevant smoothing bounds}

The following lemma gives a bound on the smoothing parameter of $\Z^n$. (Note that it can also be viewed as a special case of \Cref{lem: lemma 3.5 in Pei2008} below.) 

\begin{lemma}[Special Case of {\cite[Lemma~3.3]{MR2004}}]\label{lem: lemma 3.3 in MR2004 applied to Zn}
For any real $\eps > 0$, 
\begin{align*}
    \eta_\eps(\Z^n) \leq \sqrt{\frac{\ln(2n(1 + 1/\eps))}{\pi}}.
\end{align*}
\end{lemma}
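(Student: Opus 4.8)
The plan is to unfold the definition of the smoothing parameter and reduce everything to a tail estimate on $\rho_{1/s}(\Z^n \setminus \{\zerovec\})$. Recall that $\eta_\eps(\Z^n)$ is the least $s > 0$ with $\rho_{1/s}((\Z^n)^* \setminus \{\zerovec\}) \leq \eps$, and since $\Z^n$ is self-dual we only need $\rho_{1/s}(\Z^n \setminus \{\zerovec\}) \leq \eps$. So it suffices to show that the specific value $s_0 := \sqrt{\ln(2n(1+1/\eps))/\pi}$ already satisfies this inequality; monotonicity of $s \mapsto \rho_{1/s}(\Z^n\setminus\{\zerovec\})$ (it is nonincreasing in $s$) then gives $\eta_\eps(\Z^n) \leq s_0$.

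The core computation is the tensor/product structure of the Gaussian over $\Z^n$. First I would write, for any $t > 0$,
\begin{align*}
    \rho_{1/t}(\Z^n) = \sum_{\xv \in \Z^n} e^{-\pi t^2 \|\xv\|_2^2} = \left( \sum_{k \in \Z} e^{-\pi t^2 k^2} \right)^{\! n} = \rho_{1/t}(\Z)^n,
\end{align*}
so that $\rho_{1/t}(\Z^n \setminus \{\zerovec\}) = \rho_{1/t}(\Z)^n - 1 \leq \exp\!\big(n(\rho_{1/t}(\Z) - 1)\big) - 1$, using $1 + x \leq e^x$. Thus it is enough to bound the one-dimensional nonzero mass $\rho_{1/t}(\Z \setminus \{0\}) = \rho_{1/t}(\Z) - 1 = 2\sum_{k \geq 1} e^{-\pi t^2 k^2}$. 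The standard estimate here compares the sum to a geometric series: for $t \geq 1$ (which holds at $s_0$ as soon as $\eps$ is not huge, and can be handled in general), $\sum_{k\geq 1} e^{-\pi t^2 k^2} \leq \sum_{k \geq 1} e^{-\pi t^2 k} = \frac{e^{-\pi t^2}}{1 - e^{-\pi t^2}}$, which for $t$ bounded away from $0$ is at most, say, $2 e^{-\pi t^2}$; altogether $\rho_{1/t}(\Z\setminus\{0\}) \leq c\, e^{-\pi t^2}$ for an explicit small constant. Plugging $t = s_0$ gives $e^{-\pi s_0^2} = \frac{1}{2n(1+1/\eps)} = \frac{\eps}{2n(\eps+1)}$, so $n \cdot \rho_{1/s_0}(\Z\setminus\{0\}) \leq \frac{c\,\eps}{2(\eps+1)}$, and then $\exp(n \rho_{1/s_0}(\Z\setminus\{0\})) - 1 \leq $ something comfortably below $\eps$ after bookkeeping the constants, using $e^x - 1 \leq 2x$ for small $x$.

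The main obstacle — really the only delicate point — is getting the constants to line up exactly so that the bound is $\eps$ on the nose rather than $O(\eps)$; this is why the $(1 + 1/\eps)$ factor (rather than just $1/\eps$) and the factor $2n$ appear inside the logarithm, giving just enough slack. I would handle this by being slightly careful in the geometric-series step and in the $e^x - 1 \le 2x$ step, and by separating out the (trivial) regime where $s_0$ is large enough that $e^{-\pi s_0^2}$ is tiny from the regime where $\eps$ is close to $1$. Since the paper only needs the inequality as stated (and cites it as a special case of \cite[Lemma~3.3]{MR2004}), one may alternatively just invoke that lemma directly; but the self-contained argument above is short and is the one I would write out.
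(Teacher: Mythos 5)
The paper does not actually prove this lemma: it cites it directly as a special case of \cite[Lemma~3.3]{MR2004} and also notes (just before the statement) that it follows by plugging $\mathcal{L} = \Z^n$, $\lambda_1^\infty((\Z^n)^*) = 1$ into \cite[Lemma~3.5]{Pei2008} (stated as \Cref{lem: lemma 3.5 in Pei2008}). Your route is therefore genuinely different: you exploit the product structure $\rho_{1/t}(\Z^n) = \rho_{1/t}(\Z)^n$, which is special to $\Z^n$, rather than the general argument via the dual minimum that the references use. That reduction, and the geometric-series bound on $\rho_{1/t}(\Z \setminus \{0\})$, are both sound.

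However, the specific lossy chain you sketch does not close for small $\eps$. After the geometric series you have $\rho_{1/t}(\Z \setminus \{0\}) \leq \frac{2\theta}{1-\theta}$ with $\theta := e^{-\pi t^2}$, so the best constant you can put in ``$\leq c\,\theta$'' is some $c \geq 2$; and then ``$e^x - 1 \leq 2x$'' loses another factor. Plugging $\theta = e^{-\pi s_0^2} = \frac{\eps}{2n(1+\eps)}$ into your chain gives $\rho_{1/s_0}(\Z^n \setminus \{\zerovec\}) \leq \frac{c'\eps}{1+\eps}$ with $c' \geq 4$ (or $\geq 2$ if you drop the $e^x - 1 \leq 2x$ slack), and $\frac{c'\eps}{1+\eps} \leq \eps$ requires $\eps \geq c' - 1$. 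So for $\eps < 1$ the estimate as written overshoots $\eps$, which is exactly the ``bookkeeping'' you flagged as delicate. The fix is to avoid the two multiplicative slacks: write $\rho_{1/s_0}(\Z) \leq \frac{1+\theta}{1-\theta}$, require $\left(\frac{1+\theta}{1-\theta}\right)^n \leq 1 + \eps$, i.e. $n\ln\frac{1+\theta}{1-\theta} \leq \ln(1+\eps)$, and use the tight inequality $\ln(1+\eps) \geq \frac{\eps}{1+\eps} = 2n\theta$ (this is what the $(1+1/\eps)$ and the $2n$ inside the logarithm are really there for). One still has a residual $\frac{1}{1-\theta}$-type correction from $\ln\frac{1+\theta}{1-\theta}$ versus $2\theta$; this is absorbed by comparing the series $\sum_{k\geq 1} (2n\theta)^k / (2n)^{k}$ to $\sum_{k\geq 1}(2n\theta)^k/(k+1)$ term by term, using $(2n)^k \geq k+1$ for $n,k \geq 1$. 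With those replacements your argument is correct; as currently phrased, the ``comfortably below $\eps$'' conclusion is overstated for $\eps < 1$.
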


The next lemma gives an upper bound on $\eta_\eps(\mathcal{L})$ in terms of $\lambda_1^\infty(\mathcal{L}^*)$. 
It will be used to obtain an upper bound on the smoothing parameter of the lattices $\Lambda_i$ that we will consider. 

\begin{lemma}[Part of {\cite[Lemma~3.5]{Pei2008}}]\label{lem: lemma 3.5 in Pei2008}
    For any full-rank lattice $\mathcal{L} \subseteq \R^n$ and real $\eps > 0$, 
    \begin{align*}
        \lambda_1^\infty(\mathcal{L}^*) \cdot \eta_\eps(\mathcal{L}) \leq \sqrt{\frac{\ln(2n(1 + 1/\eps))}{\pi}}. 
    \end{align*}
\end{lemma}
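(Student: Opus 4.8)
The plan is to bound the smoothing parameter directly from its definition, using Banaszczyk's infinity-norm tail bound (\Cref{lem: Lemma 2.10 in Ban95}). Write $\lambda := \lambda_1^\infty(\mathcal{L}^*)$ and $c := \sqrt{\ln(2n(1+1/\eps))/\pi}$, and set $t := c/\lambda$. Since the set of $s>0$ for which $\rho_{1/s}(\mathcal{L}^* \setminus \{\zerovec\}) \leq \eps$ is upward closed (the quantity $\rho_{1/s}(\mathcal{L}^*\setminus\{\zerovec\}) = \sum_{\yv}\exp(-\pi s^2\|\yv\|_2^2)$ is nonincreasing in $s$), it suffices to show $\rho_{1/t}(\mathcal{L}^* \setminus \{\zerovec\}) \leq \eps$; this gives $\eta_\eps(\mathcal{L}) \leq t$, which is exactly the claimed inequality $\lambda \cdot \eta_\eps(\mathcal{L}) \leq c$.

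First I would rescale to eliminate the parameter. Because $\rho_{1/t}(\yv) = \rho(t\yv)$ for every $\yv$, the bijection $\yv \mapsto t\yv$ gives $\rho_{1/t}(\mathcal{L}^* \setminus \{\zerovec\}) = \rho(M \setminus \{\zerovec\})$, where $M := t\mathcal{L}^*$ is a full-rank lattice in $\R^n$ with $\lambda_1^\infty(M) = t\lambda = c$. So the task reduces to the parameter-free claim: any full-rank lattice $M \subseteq \R^n$ with $\lambda_1^\infty(M) = c$ satisfies $\rho(M \setminus \{\zerovec\}) \leq \eps$.

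The key point is that since every nonzero vector of $M$ has infinity norm at least $c$, the only lattice point of $M$ of infinity norm at most $R$ is the origin whenever $R < c$; hence $M \setminus R\mathcal{B}_n^\infty = M \setminus \{\zerovec\}$ for all $R \in [0,c)$. Applying \Cref{lem: Lemma 2.10 in Ban95} and using $\rho(M) = 1 + \rho(M\setminus\{\zerovec\})$ (which is finite) then yields, for every such $R$,
\[
\rho(M \setminus \{\zerovec\}) \;=\; \rho(M \setminus R\mathcal{B}_n^\infty) \;<\; 2n\,e^{-\pi R^2}\cdot \rho(M) \;=\; 2n\,e^{-\pi R^2}\bigl(1 + \rho(M \setminus \{\zerovec\})\bigr).
\]
Whenever $R > \sqrt{\ln(2n)/\pi}$ we have $a_R := 2n e^{-\pi R^2} < 1$, and since $\sqrt{\ln(2n)/\pi} < c$ such $R<c$ exist; for them the displayed inequality rearranges to $\rho(M\setminus\{\zerovec\}) < a_R/(1-a_R)$. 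Letting $R \to c^{-}$ and computing $2n e^{-\pi c^2} = 2n/(2n(1+1/\eps)) = \eps/(1+\eps)$ gives $\rho(M \setminus \{\zerovec\}) \leq \frac{\eps/(1+\eps)}{1 - \eps/(1+\eps)} = \eps$, as needed. (As a sanity check, for $\mathcal{L} = \Z^n$ this recovers \Cref{lem: lemma 3.3 in MR2004 applied to Zn}.)

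The only genuinely delicate point is that \Cref{lem: Lemma 2.10 in Ban95} controls the tail only \emph{relative} to $\rho(M)$, so one must substitute $\rho(M) = 1 + \rho(M\setminus\{\zerovec\})$ and solve the resulting self-referential inequality; this in turn needs $a_R < 1$, which forces the limiting argument $R \to c^-$ rather than simply setting $R = c$ (doing the latter would wrongly discard the lattice vectors of infinity norm exactly $c$). Everything else is the routine rescaling above and the elementary evaluation of $2n e^{-\pi c^2}$.
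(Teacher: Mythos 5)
The paper states this lemma as a cited result from Peikert~\cite{Pei2008} without giving its own proof, so there is no in-paper argument to compare against. Your proof is correct and is the standard one: rescale so the dual lattice has first infinity-norm minimum exactly $c$, apply Banaszczyk's $\ell_\infty$ tail bound (\Cref{lem: Lemma 2.10 in Ban95}), and solve the resulting self-referential inequality $\rho(M\setminus\{\zerovec\}) < a_R(1+\rho(M\setminus\{\zerovec\}))$, taking $R\to c^-$ to handle vectors of norm exactly $c$. You correctly flag the only two genuinely delicate points (the need to normalize by $\rho(M)$ and the limiting argument), and the final arithmetic $a_c/(1-a_c)=\eps$ checks out.
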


The lattices $\Lambda_i$ considered in this work are of the form $\Lambda_i = \Lambda_q^\bot(\Am_i)$ for some matrices $\Am_i$, and we recall that their dual lattice is of the form $\frac{1}{q} \Lambda_q(\Am_i)$. We obtain the following lower bound on $\lambda_1^{\infty}(\Lambda_q(\Am))$ for random matrices $\Am \in \Z_q^{n \times m}$.

\begin{lemma}[Variant of {\cite[Lemma~5.3]{GPV2008}}]\label{lem: lower bound on infinity lambda1 for q-ary}
    Let $n,m,q$ be positive integers with $q$ prime and $q^{1-n/m} \geq 6$. 
    For uniformly random $\Am \in \Z_q^{n \times m}$, we have that \begin{align*}
        \lambda_1^{\infty}(\Lambda_q(\Am))> \frac{q^{1 - n/m} \cdot 2^{-n/m}}{3}
    \end{align*}
    except with probability $< 2^{-n}$. 
    
    In particular, if $m \geq n$, then the right-hand side is lower bounded by $\frac{q^{1 - n/m}}{6}$.
\end{lemma}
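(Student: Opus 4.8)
The plan is a union bound over short integer vectors. Set $B := \tfrac{q^{1-n/m}2^{-n/m}}{3}$; the goal is to show that, except with probability $<2^{-n}$ over $\Am$, the lattice $\Lambda_q(\Am)$ contains no nonzero vector of infinity norm at most $B$. First observe that $q^{1-n/m}\ge 6>1$ (with $q\ge 2$) forces the exponent $1-n/m$ to be strictly positive, hence $m>n$; note also $B < q/3 < q$. Since $\Lambda_q(\Am) = \Am^\top\Z^n + q\Z^m$, a nonzero integer vector $\yv$ with $\|\yv\|_\infty \le B < q$ lies in $\Lambda_q(\Am)$ if and only if its reduction $\bar\yv := \yv\bmod q$ — nonzero because $\|\yv\|_\infty < q$ — lies in the $\Z_q$-span $R(\Am)\subseteq\Z_q^m$ of the rows of $\Am$. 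As there are fewer than $(2B+1)^m$ nonzero integer vectors of infinity norm at most $B$, it suffices to prove $\Pr_\Am[\bar\yv\in R(\Am)]\le q^{-(m-n)}$ for each fixed nonzero $\bar\yv\in\Z_q^m$ and then to check $(2B+1)^m\, q^{-(m-n)}<2^{-n}$.

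For the per-vector bound I use that $q$ is prime, so $\Z_q$ is a field and $R(\Am)$ is the span of $n$ i.i.d.\ uniform vectors in $\Z_q^m$; hence $|R(\Am)|\le q^n$ for every $\Am$. Now $GL_m(\Z_q)$ acts on $\Z_q^m$, and therefore on its subspaces and on its nonzero vectors: the subspace action preserves the law of $R(\Am)$ (right-multiplying $\Am$ by an invertible matrix keeps it uniform and transforms $R(\Am)$ by the corresponding coordinate change), while the action on nonzero vectors is transitive. Consequently $\Pr_\Am[\bar\yv\in R(\Am)]$ does not depend on the nonzero vector $\bar\yv$, and averaging this common value over all $q^m-1$ nonzero choices gives $\Pr_\Am[\bar\yv\in R(\Am)] = \frac{\E_\Am\,|R(\Am)\setminus\{\zerovec\}|}{q^m-1}\le\frac{q^n-1}{q^m-1}<q^{-(m-n)}$, the last step using $m>n$.

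Combining the two parts, $\Pr_\Am[\lambda_1^\infty(\Lambda_q(\Am))\le B] < (2B+1)^m\, q^{-(m-n)}$. Taking $m$-th roots, this is below $2^{-n}$ exactly when $2B+1 < q^{1-n/m}2^{-n/m} = 3B$, i.e.\ when $B>1$; and indeed $B = \tfrac{q^{1-n/m}2^{-n/m}}{3}\ge\tfrac{6\cdot 2^{-n/m}}{3} = 2^{1-n/m}>1$ because $q^{1-n/m}\ge 6$ and $1-n/m>0$. This gives $\lambda_1^\infty(\Lambda_q(\Am))>B$ except with probability $<2^{-n}$. The ``in particular'' clause is immediate: if $m\ge n$ then $2^{-n/m}\ge\tfrac12$, so $B\ge q^{1-n/m}/6$.

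The step I expect to be most delicate is the constant bookkeeping at the end: the inequality $2B+1<3B$ is tight — equivalent to $B>1$ — so it relies on both the factor $6$ in the hypothesis and the strict positivity of $1-n/m$ (already guaranteed by $q^{1-n/m}\ge 6>1$). It is likewise important that the per-vector probability comes out as the clean $q^{-(m-n)}$ and not, say, $\tfrac{q}{q-1}\, q^{-(m-n)}$; that is why I would run the symmetry/averaging argument above rather than a cruder incremental estimate on the span, which would lose a constant factor.
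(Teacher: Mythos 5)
Your proof is correct, but it takes a genuinely different route from the paper's. The paper takes the union bound over the coefficient vectors $\sv \in \Z_q^n \setminus \{\zerovec\}$: for each fixed nonzero $\sv$ (and uniformly random $\Am$), the vector $\Am^\top\sv \bmod q$ is uniform in $\Z_q^m$ (this is where primality of $q$ enters), so $\Pr[\Am^\top \sv \in S] = |S| q^{-m}$ exactly, and the rest is arithmetic. You instead take the union bound over the short integer vectors $\yv$, and bound the per-vector probability $\Pr_\Am[\bar\yv\in R(\Am)]$ via a $GL_m(\Z_q)$-invariance and averaging argument. Both yield the same final inequality $(2B+1)^m q^{n-m}<2^{-n}$. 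Your per-vector bound is the more interesting statement in isolation — it quantifies precisely how likely a fixed coset is to land in the row space of a random matrix, and does so tightly without a second union bound — but it is also somewhat more work, whereas the paper's per-$\sv$ computation is immediate. One small point worth noting: both arguments rely on $q$ being prime in different guises (in the paper, so that $\Am^\top\sv$ is uniform for $\sv\ne\zerovec$; in yours, so that $GL_m(\Z_q)$ acts transitively on nonzero vectors and $|R(\Am)|\le q^n$), so neither would carry over verbatim to composite $q$. Your bookkeeping at the end, including the observation that the hypotheses force $B>1$ so that $2B+1<3B$, is correct.
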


\begin{proof}
(In this proof, the probabilities are taken over all uniformly random $\Am \in \Z_q^{n \times m}$.)
For some positive real $B$ to be determined, let $S := \{\yv \in \Z^m \colon \|\yv\|_\infty \leq B\}$, and note that $|S| = (2B+1)^m$. 
Furthermore, for all $\sv \in \Z_q^n \setminus \{\zerovec\}$, $\Pr[\Am^{\top} \sv \bmod{q} \in S]  = |S| \cdot q^{-m} = (2B+1)^m \cdot q^{-m}$. 
Taking the union bound over all $\sv \in \Z_q^n \setminus \{\zerovec\}$ gives \begin{align*}
    \Pr[\lambda_1^{\infty}(\Lambda_q(\Am)) \leq B] &= \Pr[\exists \sv \in \Z_q^n \setminus \{\zerovec\} \text{ such that } \Am^{\top} \sv \bmod{q} \in S] \\
    &\leq |\Z_q^n \setminus \{\zerovec\}| \cdot (2B+1)^m \cdot q^{-m} \\
    &< (2B+1)^m \cdot q^{n-m}.
\end{align*}
Let $B:= \frac{1}{3}q^{1 - n/m} \cdot 2^{-n/m}$, and observe that $q^{1-n/m} \geq 6$ implies $q^{1-n/m} \cdot 2^{-n/m} \geq 3$ for all $m\geq n$. It follows that $B \geq 1$ and thus $\Pr[\lambda_1^{\infty}(\Lambda_q(\Am)) \leq B] < (3B)^m \cdot q^{n-m} = 2^{-n}$ as desired. The last part is immediate. 
\hfill $\qed$ \end{proof}

\begin{lemma}[Smoothing Parameter of $\Lambda_q^\bot(\Am)$]\label{lem: smoothing parameter of q-ary kernel lattice}
Let $n,m,q$ be positive integers with $q$ prime, $m \geq n$, and $q^{1-n/m} \geq 6$. 
Let $\eps \leq \frac{1}{4m}$ be a positive real. 
For uniformly random $\Am \in \Z_q^{n \times m}$, we have that 
\begin{align*}
    \eta_\eps(\Lambda_q^\bot(\Am)) < \sqrt{\frac{72\ln(1/\eps)}{\pi}} \cdot q^{n/m}
\end{align*}
except with probability $< 2^{-n}$.
\end{lemma}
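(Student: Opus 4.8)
The plan is to chain three ingredients: the generic bound $\lambda_1^\infty(\mathcal{L}^*)\cdot\eta_\eps(\mathcal{L}) \le \sqrt{\ln(2k(1+1/\eps))/\pi}$ for a full-rank lattice of dimension $k$ (\Cref{lem: lemma 3.5 in Pei2008}), the scaling duality $(\Lambda_q^\bot(\Am))^* = \tfrac1q \Lambda_q(\Am)$ recorded in the preliminaries, and the high-probability lower bound $\lambda_1^\infty(\Lambda_q(\Am)) > q^{1-n/m}/6$ from \Cref{lem: lower bound on infinity lambda1 for q-ary} — whose hypotheses ($q$ prime, $m \ge n$, $q^{1-n/m}\ge 6$) are exactly those assumed here. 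Note that $\mathcal{L} := \Lambda_q^\bot(\Am) \subseteq \Z^m$ is full-rank of dimension $m$ (it contains $q\Z^m$), so \Cref{lem: lemma 3.5 in Pei2008} applies with $k = m$, and $\eps \le 1/(4m) < 1$ so $\ln(1/\eps)>0$ and the claimed right-hand side is meaningful.

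First I would invoke \Cref{lem: lower bound on infinity lambda1 for q-ary}: except with probability $< 2^{-n}$ over $\Am$, we have $\lambda_1^\infty(\Lambda_q(\Am)) > q^{1-n/m}/6$, hence $\lambda_1^\infty\bigl((\Lambda_q^\bot(\Am))^*\bigr) = \tfrac1q\,\lambda_1^\infty(\Lambda_q(\Am)) > q^{-n/m}/6$. Feeding this into \Cref{lem: lemma 3.5 in Pei2008} (with $k=m$) and rearranging gives, on the same event,
\[
  \eta_\eps(\Lambda_q^\bot(\Am)) < 6\,q^{n/m}\cdot\sqrt{\tfrac{\ln(2m(1+1/\eps))}{\pi}}.
\]

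It remains to absorb the factor $6$ and the logarithmic term into the target bound, i.e.\ to verify $36\ln(2m(1+1/\eps)) \le 72\ln(1/\eps)$, equivalently $2m(1+1/\eps) \le 1/\eps^2$. This elementary estimate is the only mildly delicate point, and it follows directly from $\eps \le 1/(4m)$: this gives $m \le 1/(4\eps)$, so $2m(1+1/\eps) \le \tfrac{1}{2\eps} + \tfrac{1}{2\eps^2} \le \tfrac{1}{\eps^2}$, where the last inequality uses $\eps < 1$ (which follows from $\eps \le 1/(4m) \le 1/4$). Taking logarithms and substituting back yields $\eta_\eps(\Lambda_q^\bot(\Am)) < 6\,q^{n/m}\sqrt{2\ln(1/\eps)/\pi} = q^{n/m}\sqrt{72\ln(1/\eps)/\pi}$, valid except with probability $< 2^{-n}$, as claimed. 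I anticipate no real obstacle: the argument is just this chaining of existing lemmas together with the elementary bound above.
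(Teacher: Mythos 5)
Your proof is correct and follows essentially the same route as the paper's: chain \Cref{lem: lower bound on infinity lambda1 for q-ary}, the duality $(\Lambda_q^\bot(\Am))^* = \tfrac1q\Lambda_q(\Am)$, and \Cref{lem: lemma 3.5 in Pei2008}, then absorb constants using $\eps \le 1/(4m)$. You even spelled out the elementary estimate $2m(1+1/\eps) \le 1/\eps^2$ that the paper leaves implicit.
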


\begin{proof} 
    Since $m \geq n$,  Lemma~\ref{lem: lower bound on infinity lambda1 for q-ary} for a uniformly random $\Am \in \Z_q^{n \times m}$ implies that $\lambda_1^\infty(\Lambda_q(\Am)) > \frac{1}{6}q^{1 - n/m}$, except with probability $< 2^{-n}$. 
    Since the dual of $\mathcal{L} := \Lambda_q^\bot(\Am)$ is $\mathcal{L}^* = \tfrac{1}{q} \Lambda_q(\Am)$, we obtain that $\lambda_1^\infty(\mathcal{L}^*) > \frac{1}{6} q^{- n/m}$.  
    Furthermore, by \Cref{lem: lemma 3.5 in Pei2008} (recall that $\mathcal{L}$ is full-rank and has dimension $m$), we have \begin{align*}
        \lambda_1^\infty(\mathcal{L}^*) \cdot \eta_\eps(\mathcal{L}) \leq \sqrt{\frac{\ln(2m(1 + 1/\eps))}{\pi}}
    \end{align*}
    which is $\leq \sqrt{\frac{2\ln(1/\eps)}{\pi}}$ for $\eps \leq \frac{1}{4m}$. 
    The statement follows. 
\hfill $\qed$ \end{proof}

\section{Wagner-Style Gaussian Sampler} \label{sec: Wagner Gaussian sampler (full section)}

In \Cref{sec: technical introduction}, we presented a warm-up version of the Wagner-style algorithm, which returns $N$ short vectors in $\Lambda_q^\bot(\Am)$. However, these vectors are possibly all equal to $\zerovec$. We will now show that a variant of that algorithm, using discrete Gaussians, allows us to avoid that issue. 
In particular, we present a Wagner-style algorithm for sampling $N$ vectors from a distribution that is essentially $D_{\Lambda_q^\bot(\Am), s}$ when $s$ is sufficiently large. Such samples can be shown to be short and nonzero with high probability. 
Specifically, we present an algorithm for sampling from $D_{\Lambda_q^\bot(\Am), s}$ in time subexponential in $n$ for $m = n + \omega(n/\log \log n) $, $q = \mathrm{poly}(n)$, and $s = q/f$ for some $f = \omega(1)$. 

Recall that, for some $r \in \N$ and $\nnew_1,\ldots,\nnew_r$ such that $n = \sum_{i=1}^r \nnew_i$, we define the $q$-ary lattices
\begin{align}\label{eq: def of Lambda}
    \Lambda_0 = \Z^{m-n}  
    \quad \text{ and } \quad
    \Lambda_i = \Lambda_q^\bot(\Am_i) = \{\xv \in \Z^{m-n + \nsum_i} \colon \Am_i \xv = \zerovec \bmod{q}\}
\end{align}
for $i = 1, \ldots, r$, where $\Am_i \in \Z_q^{\nsum_i \times (m - n + \nsum_i)}$ is the matrix corresponding to the first $\nsum_i := \sum_{j=1}^i \nnew_j$ SIS equations. (Recall \Cref{fig: matrix}.) 
In other words, our goal is to sample $N$ vectors from $D_{\Lambda_r, s}$ for a given parameter $s$.

Our approach is to start from many vectors sampled from $D_{\Lambda_0, s_0}$, where $s_0$ is such that $s = \sqrt{2^r} s_0$. Then, we iteratively (for $i \in \{1,\ldots,r\}$) transform a list of vectors that are conditionally similar to independent samples from $D_{\Lambda_{i-1}, \sqrt{2^{i-1}} s_0}$ into a list of samples that are conditionally similar to independent samples from $D_{\Lambda_{i}, \sqrt{2^i} s_0}$. Then, after the last iteration, the list contains samples that are conditionally similar to independent samples from $D_{\Lambda_q^\bot(\Am), s}$, as desired. (Using \Cref{lem: upper bound on maximum discrete Gaussian probability (min-entropy)}, we can then bound the probability that one such sample is nonzero.)

As explained in \Cref{sec: technical introduction}, the mapping from vectors in $\Lambda_{i-1}$ to vectors in $\Lambda_i$ will be done by first lifting the vectors to vectors in a suitable superlattice $\Lambda'_i \supseteq \Lambda_i$, and then combining them into vectors in $\Lambda_i$. 
Specifically, for some $p_i \in \N$, the lattices $\Lambda'_i$ are defined by $\Lambda'_i = \mathcal{L}(\Bm'_i)$ for 
\begin{align} \label{eq: basis B'i}
    \Bm'_i = \begin{pmatrix}
        \zerovec & \Id_{m-n} \\
        \Dm_i & -\Am'_i
    \end{pmatrix} \text{ with }
    \Dm_i := \begin{pmatrix}
        \zerovec & q\Id_{\nsum_{i-1}}  \\  \frac{q}{p_i} \Id_{\nnew_i} & \zerovec
    \end{pmatrix}
\end{align} 

The first $\nnew_i$ columns of $\Bm'_i$ generate (an embedding into $\R^{m - n + \nsum_i}$ of) $\frac{q}{p_i}\Z^{\nnew_i}$, which is a primitive sublattice of $\Lambda'_i$ that we denote by $\mathcal S$. Consider the projected lattice $\mathcal P = \pi_{\mathcal{S}}^\bot(\Lambda'_i)$, and note that it is (an embedding of) $\Lambda_{i-1}$. Thus, we can consider ways to lift from $\Lambda_{i-1}$ to $\Lambda'_i$; in \Cref{sec: discrete-Gaussian-preserved lifting}, we will consider a randomized way of lifting that preserves discrete Gaussian distributions. 

From there, we would like to produce samples in $\Lambda_i$, rather than $\Lambda'_i$. A natural approach is to bucket our samples according to their cosets in the quotient $\Lambda'_i/\Lambda_i$, and then take differences within those buckets. Using standard analysis of convolutions of discrete Gaussians, we show in \Cref{sec: combining to a sublattice} that these differences are still essentially discrete Gaussian, yet with a width parameter increased by a factor of $\sqrt 2$.

In \Cref{sec: Wagner as Gaussian sampler algorithm}, we then lay out the resulting Gaussian variant of Wagner's algorithm, and demonstrate its correctness and time complexity, under certain smoothing constraints on the parameters. Finally, in \Cref{sec: complexity analysis wrap-up} we provide the choice of parameters that satisfy those constraints and conclude with a subexponential-time algorithm for sampling from $D_{\Lambda_q^\bot(\Am), s}$.

\subsection{Discrete-Gaussian Lifting}\label{sec: discrete-Gaussian-preserved lifting}

\Cref{alg: DGLift} below lifts vectors from $\Lambda_{i-1}$  to vectors in $\Lambda'_i$. It revisits the GPV sampling algorithm~\cite{GPV2008} with a reinterpretation of the induction: rather than reducing the problem in dimension $n$ to two instances in dimensions $1$ and $n-1$, we consider arbitrary splits in $n'$ and $n-n'$ dimensions. 
\Cref{alg: DGLift} can be viewed as a special case of~\cite[Alg.~2]{EWY23}, and our \Cref{lem: conditionally similar DGLift} is a variant of~\cite[Theorem~1]{EWY23}, where we analyze the conditional similarity of the output with respect to the discrete Gaussian (instead of merely looking at the statistical distance). 

In particular, \Cref{lem: conditionally similar DGLift} (with $\delta = 0$ and $N=1$) shows that \Cref{alg: DGLift} turns a sample from $D_{\mathcal{P}, s}$ into a sample that is $3\eps$-similar to (and thus within statistical distance $3\eps$ from) $D_{\mathcal{L}, s}$, whenever $s \geq \eta_\eps(\mathcal{S})$ for $0 < \eps \leq \frac{1}{2}$. 

\begin{lemma}[Complexity and Distribution of $\mathrm{DGLift}$]\label{lem: conditionally similar DGLift} 
    Let $\mathcal{L} \subseteq \R^n$ be a lattice and let $\mathcal{P} = \pi_{\mathcal{S}}^{\bot}(\mathcal{L})$ for a primitive sublattice $\mathcal{S} \subseteq \mathcal{L}$. 
    Let $s > 0$ be a real such that a randomized algorithm $\mathcal{A}$ exists that, given $\cv \in \Span(\mathcal{S})$, returns a sample from $D_{\mathcal{S}, s, \cv}$. Then $\mathrm{DGLift}(\mathcal{P}, \mathcal{L}, s, \cdot)$ (\Cref{alg: DGLift}) is a randomized algorithm that, given a vector $\xv \in \mathcal{P}$, outputs a vector $\xv'$ in $\mathcal{L}$. 
    It uses one query to $\mathcal{A}$, and all other operations run in polynomial time. 
    

    Moreover, for any reals $\delta \geq 0$ and $0 < \eps \leq \frac{1}{2}$ satisfying $s \geq \eta_\eps(\mathcal{S})$, if $X_1, \ldots, X_N \in \mathcal{P}$ are conditionally $\delta$-similar to independent samples from $D_{\mathcal{P}, s}$, then the distribution of $X'_1, \ldots, X'_N$ for $X'_i := \mathrm{DGLift}(\mathcal{P}, \mathcal{L}, s, X_i)$ is conditionally $(\delta + 3\eps)$-similar to independent samples from $D_{\mathcal{L}, s}$. 
\end{lemma}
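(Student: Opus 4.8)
The plan is to unpack what $\mathrm{DGLift}$ does, analyze a single lift exactly (which already settles the $N=1,\ \delta=0$ case), and then upgrade to conditional $(\delta+3\eps)$-similarity by a book-keeping argument exploiting that the $N$ invocations draw independent randomness. First, unpacking \Cref{alg: DGLift}: since $\mathcal{S}$ is primitive in $\mathcal{L}$ it has a complement $\mathcal{C}$ with $\mathcal{S}\oplus\mathcal{C}=\mathcal{L}$, a basis of which is computable from bases of $\mathcal{L}$ and $\mathcal{S}$ in polynomial time, and $\pi_{\mathcal{S}}^{\bot}$ restricts to a bijection $\mathcal{C}\to\mathcal{P}$ (injective because $\mathcal{C}\cap\Span(\mathcal{S})\subseteq\mathcal{L}\cap\Span(\mathcal{S})=\mathcal{S}$ by primitivity, hence equals $\{\zerovec\}$). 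In effect, on input $\xv\in\mathcal{P}$ the algorithm recovers the unique $\bar\xv\in\mathcal{C}$ with $\pi_{\mathcal{S}}^{\bot}(\bar\xv)=\xv$, sets $\tv:=\bar\xv-\xv=\pi_{\mathcal{S}}(\bar\xv)\in\Span(\mathcal{S})$, queries $\wv\leftarrow\mathcal{A}(-\tv)$ so that $\wv\sim D_{\mathcal{S},s,-\tv}$, and returns $\xv':=\bar\xv+\wv\in\mathcal{C}+\mathcal{S}=\mathcal{L}$. This makes exactly one query to $\mathcal{A}$ and otherwise only runs polynomial-time linear algebra, and it satisfies $\pi_{\mathcal{S}}^{\bot}(\xv')=\xv$, so the input is recoverable from the output by projection; this gives the first part of the lemma. (Throughout, Gaussians and smoothing parameters of $\mathcal{S}$ and $\mathcal{P}$ are read inside $\Span(\mathcal{S})$ and $\Span(\mathcal{P})=\Span(\mathcal{S})^{\bot}\cap\Span(\mathcal{L})$.)

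Next I would compute the one-step kernel. For $\xv\in\mathcal{P}$ and $\xv'\in\mathcal{L}$ with $\pi_{\mathcal{S}}^{\bot}(\xv')=\xv$, write $\wv:=\xv'-\bar\xv\in\mathcal{S}$; then the probability $K(\xv,\xv')$ that $\mathrm{DGLift}(\mathcal{P},\mathcal{L},s,\xv)$ returns $\xv'$ equals $D_{\mathcal{S},s,-\tv}(\wv)=\rho_{s,-\tv}(\wv)/\rho_{s,-\tv}(\mathcal{S})$. Since $\xv\perp\Span(\mathcal{S})$ while $\tv+\wv\in\Span(\mathcal{S})$, the Pythagorean identity $\rho_s(\av+\bv)=\rho_s(\av)\rho_s(\bv)$ for orthogonal $\av,\bv$ gives $\rho_s(\xv')=\rho_s(\xv)\,\rho_s(\tv+\wv)=\rho_s(\xv)\,\rho_{s,-\tv}(\wv)$, hence
\[
D_{\mathcal{P},s}(\xv)\cdot K(\xv,\xv') \;=\; \frac{\rho_s(\xv)}{\rho_s(\mathcal{P})}\cdot\frac{\rho_s(\xv')}{\rho_s(\xv)\,\rho_{s,-\tv}(\mathcal{S})} \;=\; \frac{\rho_s(\xv')}{\rho_s(\mathcal{P})\,\rho_{s,-\tv}(\mathcal{S})}.
\]
To compare this with $D_{\mathcal{L},s}(\xv')=\rho_s(\xv')/\rho_s(\mathcal{L})$, I would use $s\geq\eta_\eps(\mathcal{S})$ and two applications of \Cref{lem: Lemma 4.4 in MR2004 implicit version}: directly, $\rho_{s,-\tv}(\mathcal{S})\in[\frac{1-\eps}{1+\eps},1]\cdot\rho_s(\mathcal{S})$; and, after splitting $\mathcal{L}=\bigsqcup_{\bar\yv\in\mathcal{C}}(\bar\yv+\mathcal{S})$ and handling each coset sum by the same Pythagorean split plus \Cref{lem: Lemma 4.4 in MR2004 implicit version} (summing $\bar\yv$ over $\mathcal{C}$, which runs bijectively over $\mathcal{P}$), one gets $\rho_s(\mathcal{L})\in[\frac{1-\eps}{1+\eps},1]\cdot\rho_s(\mathcal{P})\,\rho_s(\mathcal{S})$. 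Dividing yields $D_{\mathcal{P},s}(\xv)\,K(\xv,\xv')\in[\frac{1-\eps}{1+\eps},\frac{1+\eps}{1-\eps}]\cdot D_{\mathcal{L},s}(\xv')\subseteq e^{\pm3\eps}\cdot D_{\mathcal{L},s}(\xv')$ for $\eps\leq\frac12$. Since $K(\xv,\cdot)$ is supported on $\{\xv'\in\mathcal{L}:\pi_{\mathcal{S}}^{\bot}(\xv')=\xv\}$, the left-hand side is exactly $\Pr_{X\sim D_{\mathcal{P},s}}[\mathrm{DGLift}(\mathcal{P},\mathcal{L},s,X)=\xv']$, so this already proves the $N=1,\ \delta=0$ case.

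For the general statement I would run the $N$ lifts with independent random coins, so $X'_j=g(X_j,R_j)$ for a fixed deterministic $g$ with $R_1,\dots,R_N$ mutually independent and independent of $(X_1,\dots,X_N)$; in particular, given $(X_1,\dots,X_N)$ the outputs are mutually independent with $X'_j$ depending only on $X_j$. The key structural fact is that $K(\xv,\cdot)$ being supported on $\{\xv':\pi_{\mathcal{S}}^{\bot}(\xv')=\xv\}$ forces a \emph{unique} preimage: each $\xv'\in\mathcal{L}$ has exactly one $\xv\in\mathcal{P}$ with $K(\xv,\xv')>0$, namely $\xv=\pi_{\mathcal{S}}^{\bot}(\xv')$. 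Fix $i\in[N]$ and $\xv'\in\mathcal{L}^N$, and set $\xv_j:=\pi_{\mathcal{S}}^{\bot}(\xv'_j)$. Because ``$X'_j=\xv'_j$ for all $j$'' forces ``$X_j=\xv_j$ for all $j$'', conditional independence of the lifts gives $\Pr[X'_j=\xv'_j\ \forall j]=\Pr[X_j=\xv_j\ \forall j]\cdot\prod_j K(\xv_j,\xv'_j)$, and similarly $\Pr[X'_j=\xv'_j\ \forall j\neq i]=\Pr[X_j=\xv_j\ \forall j\neq i]\cdot\prod_{j\neq i}K(\xv_j,\xv'_j)$ (the coordinate $X_i$ is unconstrained and sums out). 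Taking the ratio,
\[
\Pr[X'_i=\xv'_i \mid X'_{-i}=\xv'_{-i}] \;=\; \Pr[X_i=\xv_i \mid X_{-i}=\xv_{-i}]\cdot K(\xv_i,\xv'_i).
\]
By hypothesis the first factor is $e^{\pm\delta}\,D_{\mathcal{P},s}(\xv_i)$, and by the previous paragraph $D_{\mathcal{P},s}(\xv_i)\,K(\xv_i,\xv'_i)\in e^{\pm3\eps}\,D_{\mathcal{L},s}(\xv'_i)$; multiplying gives $\Pr[X'_i=\xv'_i\mid X'_{-i}=\xv'_{-i}]\in e^{\pm(\delta+3\eps)}\,D_{\mathcal{L},s}(\xv'_i)$, i.e.\ conditional $(\delta+3\eps)$-similarity to independent samples from $D_{\mathcal{L},s}$. (Conditionings on probability-$0$ events are absorbed by the convention $\Pr[\cdot\mid E]=0$ when $\Pr[E]=0$.)

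The two middle steps are the standard GPV correctness computation plus two uses of the smoothing bound and should be routine. The one genuinely delicate point is the reduction in the last step — replacing ``conditioning the $i$-th \emph{output} on the other outputs'' by ``conditioning the $i$-th \emph{input} on the other inputs'' — which is exactly what prevents the small dependencies permitted by the hypothesis from inflating the error beyond the additive $3\eps$; it rests on the two facts that each lift draws fresh randomness and that $X_j=\pi_{\mathcal{S}}^{\bot}(X'_j)$ is a deterministic function of $X'_j$.
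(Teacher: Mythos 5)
Your proposal is correct and follows essentially the same route as the paper's proof: you derive the one-step transition kernel $K(\xv,\xv')$ of $\mathrm{DGLift}$ (this is precisely the paper's Equation~\eqref{eq: distribution of DGLift output for any input}), prove the normalization-constant bound $\rho_s(\mathcal{L}) \in \left[\tfrac{1-\eps}{1+\eps},1\right]\cdot\rho_s(\mathcal{P})\rho_s(\mathcal{S})$ (the paper's intermediate Claim), and then exploit the unique-preimage property $\xv=\pi_{\mathcal{S}}^\bot(\xv')$ together with the independent lifting randomness to factor the conditional probability of the outputs through that of the inputs. The only cosmetic differences are that the paper derives the normalization bound by summing the inequality $\Pr[f(X)=\xv']\in[1,\tfrac{1+\eps}{1-\eps}]\cdot\tfrac{\rho_s(\xv')}{\rho_s(\mathcal{P})\rho_s(\mathcal{S})}$ over all $\xv'\in\mathcal{L}$ rather than by a coset decomposition of $\rho_s(\mathcal{L})$, and that it applies the two $\eps$-bounds separately in the final chain of inequalities rather than packaging them into a single statement about $D_{\mathcal{P},s}(\xv)K(\xv,\xv')$; these are equivalent.
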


We will use the above lemma with $\mathcal S = \frac{q}{p_i} \mathbb Z^{\nnew_i}$, hence an exact polynomial-time sampler is available whenever $s \geq \frac{q}{p_i} \sqrt{\ln(2 \nnew_i+4)/\pi}$ by~\Cref{lem: exact randomized Gaussian rounding}.

\begin{algorithm}[H]
\caption{$\mathrm{DGLift}(\mathcal{P}, \mathcal{L}, s, \xv)$}
\label{alg: DGLift}
\DontPrintSemicolon
\SetKwInOut{Input}{Input}\SetKwInOut{Output}{Output}
\Input{Lattices $\mathcal P, \mathcal{L}$, where $\mathcal P = \pi_{\mathcal S}^\bot(\mathcal{L})$ for a primitive sublattice $\mathcal S \subseteq \mathcal L$; \\ 
Real $s > 0$; \\
Vector $\xv \in \mathcal{P}$}
\Output{Vector $\xv' \in \mathcal{L}$ such that $\pi_{\mathcal{S}}^{\bot}(\xv') = \xv$}
\vspace{2mm}

Let $\mathcal{C}$ be a complement to $\mathcal{S}$ \;
Compute the unique $\yv \in \mathcal{C}$ such that $\pi_{\mathcal{S}}^{\bot}(\yv) = \xv$ \tcp*{Lifting }  
Sample $\zv \sim D_{\mathcal{S}, s, \xv - \yv}$ \tcp*{Sampling} 
\Return{$\xv' := \zv + \yv$}
\end{algorithm}
 
\begin{proof}
Consider \Cref{alg: DGLift}, where we use algorithm $\mathcal{A}$ for the sampling step, and let $\mathcal{C}$ be the complement of $\mathcal{S}$ that it considers. Note that $\pi_{S}^{\bot}$ induces a bijection from $\mathcal{C}$ to $\mathcal{P}$, and that both directions can be computed in polynomial time. 
The claim on time and query complexity of \Cref{alg: DGLift} is thus immediate. 

For correctness, we remark that any output vector $\xv' =  \zv + \yv$ belongs to $\mathcal{L}$, since $\zv \in \mathcal{S} \subseteq \mathcal{L}$ and  $\yv \in \mathcal{C} \subseteq \mathcal{L}$. Furthermore, $\xv' = \pi_{\mathcal{S}}(\xv') + \pi_{\mathcal{S}}^{\bot}(\xv')$ with $\pi_{\mathcal{S}}(\xv') = \zv + \pi_{\mathcal{S}}(\yv)$ and $\pi_{\mathcal{S}}^{\bot}(\xv') = \pi_{\mathcal{S}}^{\bot}(\yv) = \xv$, so the output is as desired. 

For the remainder of the proof, let $f(\xv) := \mathrm{DGLift}(\mathcal{P}, \mathcal{L}, s, \xv)$. We remark that $\mathcal{S} \oplus \mathcal{C} = \mathcal{L}$ implies that any $\vv \in \mathcal{L}$ can be uniquely written as $\vv = \vv_\mathcal{S} + \vv_\mathcal{C}$ for $\vv_\mathcal{S} \in \mathcal{S}$ and $\vv_\mathcal{C} \in \mathcal{C}$ (and we will define $\vv_\mathcal{S}, \vv_\mathcal{C}$ as such). 

We first observe that for all $\xv \in \mathcal{P}$ and $\xv' \in \mathcal{L}$, the probability that $\mathrm{DGLift}$ on input $\xv$ outputs $\xv'$ is \begin{align}
    \Pr[f(\xv) = \xv'] &= \Pr_{Z \sim D_{\mathcal{S}, s, \xv - \yv(\xv)}}[Z = \xv' - \yv(\xv)] \notag \\
    &= \begin{cases}
        \frac{\rho_s(\pi_{\mathcal S}(\xv'))}{\rho_s(\mathcal{S} + \pi_{\mathcal{S}}(\xv'_{\mathcal{C}}))} &\text{if $\xv = \pi_{\mathcal{S}}^\perp(\xv')$} \\ 
        0 &\text{otherwise} 
    \end{cases} \label{eq: distribution of DGLift output for any input}
\end{align}
where $\yv(\xv)$ denotes the unique $\yv \in \mathcal{C}$ such that $\pi_{\mathcal{S}}^{\bot}(\yv) = \xv$. 
Indeed, note that $\xv' - \yv(\xv) \in \mathcal{S}$ if and only if $\xv'_{\mathcal{C}} = \yv(\xv)$ if and only if $\xv = \pi_{\mathcal{S}}^\perp(\xv'_{\mathcal{C}})$ if and only if $\xv = \pi_{\mathcal{S}}^\perp(\xv')$. 
Therefore, $\Pr_{Z \sim D_{\mathcal{S}, s, \xv - \yv(\xv)}}[Z = \xv' - \yv(\xv)] = \frac{\rho_s(\xv' - \xv)}{\rho_s(\mathcal{S} - \xv + \yv(\xv))} = \frac{\rho_s(\pi_{\mathcal S}(\xv'))}{\rho_s(\mathcal{S} - \xv + \yv(\xv))}$ if $\xv = \pi_{\mathcal{S}}^\perp(\xv')$ and 0 otherwise. 
Since $\xv = \pi_{\mathcal{S}}^\perp(\xv')$ implies that $\yv(\xv) = \xv'_{\mathcal{C}}$, and thus $\yv(\xv) - \xv = \pi_{\mathcal{S}}(\yv(\xv)) = \pi_{\mathcal{S}}(\xv'_{\mathcal{C}})$, \Cref{eq: distribution of DGLift output for any input} follows. 

Next, we prove the following intermediate claim. 
\begin{claim}\label{claim: DGLift lemma}
    For $s \geq \eta_\eps(\mathcal{S})$, we have $\rho_s(\mathcal{P})\cdot \rho_s(\mathcal{S}) \in \left[ 1, \frac{1+\eps}{1-\eps} \right] \cdot \rho_s(\mathcal{L})$. 
\end{claim}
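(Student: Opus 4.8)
The plan is to partition $\mathcal{L}$ into cosets of $\mathcal{S}$ indexed by a complement $\mathcal{C}$ (which exists since $\mathcal{S}$ is primitive), apply Pythagoras on each coset to split off the $\pi_{\mathcal{S}}^\bot$-component, and then invoke the smoothing bound \Cref{lem: Lemma 4.4 in MR2004 implicit version} on the $\mathcal{S}$-part uniformly.

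Concretely, I would first fix $\mathcal{C}$ with $\mathcal{S} \oplus \mathcal{C} = \mathcal{L}$, so that $\mathcal{L} = \bigsqcup_{\cv \in \mathcal{C}} (\mathcal{S} + \cv)$ and hence $\rho_s(\mathcal{L}) = \sum_{\cv \in \mathcal{C}} \rho_s(\mathcal{S} + \cv)$ (absolute convergence of the Gaussian sum over a lattice makes all the rearrangements below legitimate). For each $\cv \in \mathcal{C}$, write $\cv = \pi_{\mathcal{S}}(\cv) + \pi_{\mathcal{S}}^\bot(\cv)$; since every $\zv \in \mathcal{S}$ lies in $\Span(\mathcal{S})$, Pythagoras gives $\|\zv + \cv\|_2^2 = \|\zv + \pi_{\mathcal{S}}(\cv)\|_2^2 + \|\pi_{\mathcal{S}}^\bot(\cv)\|_2^2$, so that $\rho_s(\zv + \cv) = \rho_s(\zv + \pi_{\mathcal{S}}(\cv)) \cdot \rho_s(\pi_{\mathcal{S}}^\bot(\cv))$, and summing over $\zv \in \mathcal{S}$ yields $\rho_s(\mathcal{S} + \cv) = \rho_s(\pi_{\mathcal{S}}^\bot(\cv)) \cdot \rho_s(\mathcal{S} + \pi_{\mathcal{S}}(\cv))$.

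Next I would use the fact, already recorded in the proof of \Cref{lem: conditionally similar DGLift}, that $\pi_{\mathcal{S}}^\bot$ restricts to a bijection $\mathcal{C} \to \mathcal{P}$. Reindexing by $\pv := \pi_{\mathcal{S}}^\bot(\cv)$, with $\cv(\pv) \in \mathcal{C}$ its preimage, gives $\rho_s(\mathcal{L}) = \sum_{\pv \in \mathcal{P}} \rho_s(\pv) \cdot \rho_s(\mathcal{S} + \pi_{\mathcal{S}}(\cv(\pv)))$. Since $s \geq \eta_\eps(\mathcal{S})$, applying \Cref{lem: Lemma 4.4 in MR2004 implicit version} inside $\Span(\mathcal{S})$ (where $\mathcal{S}$ is full-rank) with center $\pi_{\mathcal{S}}(\cv(\pv)) \in \Span(\mathcal{S})$ gives $\rho_s(\mathcal{S} + \pi_{\mathcal{S}}(\cv(\pv))) \in \left[\frac{1-\eps}{1+\eps}, 1\right] \cdot \rho_s(\mathcal{S})$, a bound that is uniform in $\pv$. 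Factoring this constant out of the sum and using $\sum_{\pv \in \mathcal{P}} \rho_s(\pv) = \rho_s(\mathcal{P})$ yields $\rho_s(\mathcal{L}) \in \left[\frac{1-\eps}{1+\eps}, 1\right] \cdot \rho_s(\mathcal{S}) \cdot \rho_s(\mathcal{P})$, which rearranges to the claimed $\rho_s(\mathcal{P}) \cdot \rho_s(\mathcal{S}) \in \left[1, \frac{1+\eps}{1-\eps}\right] \cdot \rho_s(\mathcal{L})$.

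I do not expect a genuine obstacle; the only points needing care are bookkeeping ones: reading $\eta_\eps(\mathcal{S})$ and \Cref{lem: Lemma 4.4 in MR2004 implicit version} relative to the subspace $\Span(\mathcal{S})$ rather than all of $\R^n$, and checking that the coset decomposition and the $\mathcal{C} \leftrightarrow \mathcal{P}$ bijection are exactly those already fixed in the surrounding setup so the notation stays consistent.
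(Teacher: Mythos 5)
Your proof is correct and uses the same underlying decomposition as the paper: partition $\mathcal{L}$ into $\mathcal{S}$-cosets indexed by a complement $\mathcal{C}$, split each coset's Gaussian mass via Pythagoras into a $\mathcal{P}$-factor and a shifted-$\mathcal{S}$-factor, bound the latter uniformly by \Cref{lem: Lemma 4.4 in MR2004 implicit version}, and sum. The paper arrives at the identical identity by instead summing the output probabilities of $\mathrm{DGLift}$ from \Cref{eq: distribution of DGLift output for any input} over $\xv' \in \mathcal{L}$, which is the same algebra wrapped in probabilistic language; your direct computation is a clean standalone alternative but does not represent a genuinely different route.
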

\begin{proof}[of Claim]
By \Cref{eq: distribution of DGLift output for any input}, if the input is a random variable $X \sim D$ for some distribution $D$ on $\mathcal{P}$, then for all $\xv' \in \mathcal{L}$, we have that \begin{align*}
    \Pr_{X, Z}[f(X) = \xv'] &= \Pr_{X, Z}[X = \pi_{\mathcal{S}}^\perp(\xv')]  \cdot \Pr_{X, Z}[f(X) = \xv' \mid X = \pi_{\mathcal{S}}^\perp(\xv')] \\ 
    &= \Pr_{X \sim D}[X = \pi_{\mathcal{S}}^\perp(\xv')] \cdot \frac{\rho_s(\pi_{\mathcal{S}}(\xv'))}{\rho_s(\mathcal{S} + \pi_{\mathcal{S}}(\xv'_{\mathcal{C}}))} \\
    &\in \left[1, \frac{1 + \eps}{1 - \eps}\right] \cdot  \Pr_{X \sim D}[X = \pi_{\mathcal{S}}^\perp(\xv')] \cdot \frac{\rho_s(\pi_{\mathcal{S}}(\xv'))}{\rho_s(\mathcal{S})}. \tag{\Cref{lem: Lemma 4.4 in MR2004 implicit version}}
\end{align*} 
In particular, if $D$ is $D_{\mathcal{P},s}$, then we have (for all $\xv' \in \mathcal{L}$) that $\Pr[f(X) = \xv'] \in \left[1, \frac{1 + \eps}{1 - \eps}\right] \cdot \frac{\rho_s(\xv')}{\rho_s(\mathcal{P})\rho_s(\mathcal{S})}$
since $\pi_{\mathcal{S}}^\perp(\xv') + \pi_{\mathcal{S}}(\xv') = \xv'$.
Finally, summing both sides over all $\xv' \in \mathcal{L}$ yields that $1 \in \left[1, \frac{1 + \eps}{1 - \eps}\right] \cdot \frac{\rho_s(\mathcal{L})}{\rho_s(\mathcal{P})\rho_s(\mathcal{S})}$, 
which proves the claim.  \qed 
\end{proof}

We now proceed with the main proof. Suppose that the input consists of $N$ random variables conditionally $\delta$-similar to independent samples from $D_{\mathcal{P},s}$.  
By \Cref{eq: distribution of DGLift output for any input}, we know that for all $\xv' \in \mathcal{L}^N$ and any $I \subseteq [N]$, \begin{align}
    &\, \Pr_{(X_1,Z_1),\ldots, (X_N, Z_N)}[ \forall j \in I, f(X_j) = \xv'_j] \notag \\
    =&\, \Pr_{X_1,\ldots, X_N}[\forall j \in I, X_j = \pi_{\mathcal{S}}^\perp(\xv'_j)]  \cdot \Pr_{Z_1,\ldots, Z_N}[\forall j \in I, f(\pi_{\mathcal{S}}^\perp(\xv'_j)) = \xv'_j] \notag \\ 
    =&\, \Pr_{X_1,\ldots, X_N}[\forall j \in I, X_j = \pi_{\mathcal{S}}^\perp(\xv'_j)] \cdot \prod_{j \in I} \frac{\rho_s(\pi_{\mathcal{S}}(\xv'_j))}{\rho_s(\mathcal{S} + \pi_{\mathcal{S}}(\xv'_{j,\mathcal{C}}))} \label{eq: DGLift lemma decomposition for independent Z}
\end{align} since the $Z_j$ are independent when the values of the $X_j$ are fixed. (Here, we write $\xv'_{j,\mathcal{C}}$ for the unique $\cv \in \mathcal{C}$ such that $\xv'_{j} = \sv + \cv$ for $(\sv,\cv) \in \mathcal{S}\times \mathcal{C}$.)
To conclude the proof, take any $i \in [N]$ and $\xv' \in \mathcal{L}^N$. Then 
{\allowdisplaybreaks
\begin{align*}
    &\, \Pr_{X_1,\ldots,X_N}[f(X_i) = \xv'_i \mid \forall j\in [N]\setminus\{i\},  f(X_j) = \xv'_j]  \\
    =&\, \frac{\Pr_{X_1,\ldots,X_N}[\forall j\in [N] ,  f(X_j) = \xv'_j]}{\Pr_{X_1,\ldots,X_N}[\forall j\in [N]\setminus\{i\},  f(X_j) = \xv'_j] } \tag{def. conditional probability} \\
    =&\, \frac{\Pr_{X_1,\ldots, X_N}[\forall j \in [N], X_j = \pi_{\mathcal{S}}^\perp(\xv'_j)] }{\Pr_{X_1,\ldots, X_N}[\forall j \in [N]\setminus\{i\}, X_j = \pi_{\mathcal{S}}^\perp(\xv'_j)] } \cdot \frac{\rho_s(\pi_{\mathcal{S}}(\xv'_i))}{\rho_s(\mathcal{S} + \pi_{\mathcal{S}}(\xv'_{i,\mathcal{C}}))} \tag{\Cref{eq: DGLift lemma decomposition for independent Z}} \\ 
    =&\, \Pr_{X_1,\ldots, X_N}[X_i = \pi_{\mathcal{S}}^\perp(\xv'_i)  \mid \forall j \in [N]\setminus\{i\}, X_j = \pi_{\mathcal{S}}^\perp(\xv'_j)]  \cdot \frac{\rho_s(\pi_{\mathcal{S}}(\xv'_i))}{\rho_s(\mathcal{S} + \pi_{\mathcal{S}}(\xv'_{i,\mathcal{C}}))}  \\ 
    =&\, e^{\pm \delta} \cdot \frac{\rho_s(\pi_{\mathcal{S}}^\perp(\xv'_i))}{\rho_s(\mathcal{P})}  \cdot \frac{\rho_s(\pi_{\mathcal{S}}(\xv'_i))}{\rho_s(\mathcal{S} + \pi_{\mathcal{S}}(\xv'_{i,\mathcal{C}}))} \tag{by assumption} \\ 
    =&\, e^{\pm \delta} \cdot \frac{\rho_s(\xv'_i)}{\rho_s(\mathcal{P}) \rho_s(\mathcal{S} + \pi_{\mathcal{S}}(\xv'_{i,\mathcal{C}}))}  \\ 
    \in&\, \left[e^{- \delta}, e^{\delta} \cdot \frac{1 + \eps}{1 - \eps}\right] \cdot \frac{\rho_s(\xv'_i)}{\rho_s(\mathcal{P}) \rho_s(\mathcal{S})} \tag{by \Cref{lem: Lemma 4.4 in MR2004 implicit version}} \\ 
    \subseteq&\, \left[e^{- \delta} \cdot \frac{1 - \eps}{1 + \eps}, e^{\delta} \cdot \frac{1 + \eps}{1 - \eps}\right]  \cdot \frac{\rho_s(\xv'_i)}{\rho_s(\mathcal{L})}.\tag{by the claim} 
\end{align*}} 

Since $\left[\frac{1 - \eps}{1 + \eps}, \frac{1 + \eps}{1 - \eps}\right] \subseteq [e^{- 3\eps}, e^{3\eps}]$ for all  $0 < \eps \leq \frac{1}{2}$, the lemma follows. \qed 
\end{proof}

\subsection{Combining to a Sublattice}\label{sec: combining to a sublattice} 

We will now show that, given many independent discrete Gaussian samples from a lattice $\mathcal{L}'$, we can construct many vectors in a \textit{full-rank} sublattice $\mathcal{L} \subseteq \mathcal{L}'$ whose distributions are (conditionally) similar to a discrete Gaussian over $\mathcal{L}$. 

By the convolution lemma~\cite{Pei2008,MP2013} (more precisely, by \Cref{lem: convolution lemma with specified distance}), the difference of two independent samples from $D_{\mathcal{L}',s}$ follows a distribution similar to $D_{\mathcal{L}', \sqrt{2}s}$. If we condition on the result being in the sublattice $\mathcal{L}$, then this distribution can in fact be shown to be similar to $D_{\mathcal{L}, \sqrt{2}s}$. 

Motivated by this fact, we consider an algorithm (\Cref{alg: bucket and combine}) that first buckets its input vectors in $\mathcal{L}'$ with respect to their cosets modulo the sublattice $\mathcal{L}$, and then (carefully) combines pairs of vectors in the same cosets to obtain vectors in $\mathcal{L}$.\footnote{\Cref{alg: bucket and combine} is just a reformulation of~\cite[Algorithm~2]{ALS21} with a different number of output vectors (and output vectors of the form $\xv - \xv'$ instead of $\xv + \xv'$).}  
If we start with at least $3 |\mathcal{L}'/\mathcal{L}|$ vectors from $\mathcal{L}'$, then the number of output vectors is only a constant factor smaller, as shown by \Cref{lem: correctness of bucket and combine}. 
Furthermore, if the input vectors are conditionally similar to independent samples from $D_{\mathcal{L}',s}$, the output vectors are conditionally similar to independent samples from  $D_{\mathcal{L}, \sqrt{2}s}$, as shown by \Cref{lem: distribution of bucket and combine}.

\begin{algorithm}
\caption{$\mathrm{BucketAndCombine}(\mathcal{L}', \mathcal{L}, L)$} 
\label{alg: bucket and combine}
\DontPrintSemicolon
\SetKwInOut{Input}{Input}\SetKwInOut{Output}{Output}

\Input{Full-rank lattices $\mathcal{L} \subseteq \mathcal{L}'$ in $\R^d$; \\ 
A list $L$ with $N$ vectors $\xv_1, \ldots, \xv_{N} \in \mathcal{L}'$ for some integer $N \geq 3|\mathcal{L}'/\mathcal{L}|$}
\Output{A list $L_{out}$ with $\lfloor N/3\rfloor$ vectors in $\mathcal{L}$} 
\vspace{2mm}

Initialize empty lists $B(\cv)$ for each coset $\cv \in \mathcal{L}' / \mathcal{L}$ \; 
\For(\tcp*[f]{Bucketing}){$i = 1, \ldots, N$}{
    Let $\cv_i := \xv_i \bmod \mathcal{L}$ \;
    Append $\xv_i$ to $B(\cv_i)$ 
}
Initialize an empty list $L_{out}$ \;
\For(\tcp*[f]{Combining}){$i = 1, \ldots, N$}{\If{$B(\cv_i)$ contains at least two elements and $|L_{out}| < \lfloor N/3\rfloor$}{Let $\xv, \xv'$ be the first two elements in $B(\cv_i)$ \; 
    Append $\yv := \xv - \xv'$ to $L_{out}$ \;
    Remove $\xv$ and $\xv'$ from $B(\cv_i)$ \;}
} 
\Return{$L_{out}$}
\end{algorithm}

\begin{lemma}[Correctness of \Cref{alg: bucket and combine}]\label{lem: correctness of bucket and combine}
    \Cref{alg: bucket and combine} is correct. That is, on input two full-rank lattices $\mathcal{L} \subseteq \mathcal{L}'$ in $\R^d$ and a list of $N$ vectors in $\mathcal{L}'$, it returns a list of $\lfloor N/3\rfloor$ vectors in $\mathcal{L}$ if $N \geq 3|\mathcal{L}'/\mathcal{L}|$. 
\end{lemma}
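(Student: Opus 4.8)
The plan is to show two things: every vector appended to $L_{out}$ lies in $\mathcal{L}$, and the loop produces exactly $\lfloor N/3\rfloor$ such vectors whenever $N \geq 3|\mathcal{L}'/\mathcal{L}|$. The membership claim is immediate: in the combining loop we only ever append $\yv := \xv - \xv'$ where $\xv, \xv'$ were placed in the same bucket $B(\cv_i)$, which by construction means $\xv \equiv \xv' \pmod{\mathcal{L}}$, i.e.\ $\xv - \xv' \in \mathcal{L}$; and $\xv - \xv' \in \mathcal{L}'$ since $\xv,\xv' \in \mathcal{L}'$, so indeed $\yv \in \mathcal{L}$. Since the algorithm halts the combining as soon as $|L_{out}|$ reaches $\lfloor N/3\rfloor$, the output list never has more than $\lfloor N/3\rfloor$ vectors, so the only real content is the lower bound: the combining loop does not run out of pairs before reaching that target.

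For the counting, I would track, across the combining loop, the quantity $\sum_{\cv} \bigl(\text{number of elements currently in } B(\cv) \text{ that are ``unpaired''}\bigr)$, or more directly argue by a potential/accounting argument. After bucketing, the buckets hold $N$ vectors total, distributed among $k := |\mathcal{L}'/\mathcal{L}|$ buckets. Each time the combining step fires on some index $i$, it removes two elements from $B(\cv_i)$ and adds one vector to $L_{out}$. The key observation is that when we reach an index $i$ in the combining loop and the step does \emph{not} fire (and $|L_{out}| < \lfloor N/3 \rfloor$), then at that moment $B(\cv_i)$ has at most one element left. I would argue that once the loop has passed index $i$, the bucket $B(\cv_i)$ is ``frozen'' in the sense that no later index $j$ with $\cv_j = \cv_i$ will ever add to it (bucketing is done; only removals happen in the combining loop), so each bucket contributes at most one ``wasted'' (never-paired) vector by the end. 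Hence at most $k$ of the $N$ input vectors go unpaired, so at least $N - k$ are consumed in pairs, giving at least $\lfloor (N-k)/2 \rfloor$ appended vectors, \emph{unless} the cap $\lfloor N/3\rfloor$ was hit first. Since $N \geq 3k$ gives $k \leq N/3$, we have $\lfloor (N-k)/2 \rfloor \geq \lfloor (N - N/3)/2 \rfloor = \lfloor N/3 \rfloor$, so the cap is reached, i.e.\ exactly $\lfloor N/3\rfloor$ vectors are output.

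The step I expect to require the most care is making the ``each bucket wastes at most one vector'' claim airtight, because the combining loop iterates over the original indices $i = 1, \dots, N$ rather than over buckets, so a single bucket's index set is visited at several separated times. The clean way is to note that between the first and last visit to a fixed bucket $B(\cv)$ in the combining loop, the only operations affecting $B(\cv)$ are the (at most one per visit) pair-removals, never insertions; so the number of elements in $B(\cv)$ is non-increasing over the combining loop, and a visit to $B(\cv)$ fails to fire only when $B(\cv)$ currently has $\leq 1$ element — at which point it stays $\leq 1$ forever after (within the combining phase, ignoring the cap). So each bucket leaves behind at most one unpaired vector, and summing over the $k$ buckets completes the argument. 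I would then just remark that all vectors placed in $L_{out}$ lie in $\mathcal{L}$ by the membership argument above, which together with the count establishes correctness. \qed
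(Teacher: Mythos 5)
Your proof is correct and takes essentially the same approach as the paper. The paper argues by contradiction: if only $\ell < \lfloor N/3\rfloor$ vectors are output, then $N-2\ell$ list elements go unused, these must lie in pairwise-distinct cosets (else more pairs would have been formed), so $N-2\ell \leq |\mathcal{L}'/\mathcal{L}| \leq N/3$, forcing $\ell \geq N/3$, a contradiction. Your "each bucket leaves behind at most one unpaired vector" invariant is exactly the same observation — that unpaired leftovers occupy distinct cosets — dressed in a bucket-by-bucket accounting; the resulting arithmetic and the membership argument ($\xv - \xv' \in \mathcal{L}$ when the two share a coset) are identical in both.
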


\begin{proof}
By construction, each element of $L_{out}$ is of the form $\yv = \xv - \xv'$ for $\xv = \xv' \bmod{\mathcal{L}}$, so $\yv \in \mathcal{L}$. It thus remains to show that the output list $L_{out}$ always consists of $\lfloor N/3\rfloor$ elements. 
Suppose, for contradiction, that the algorithm returns a list of size $\ell$ for some $\ell \in \{0, \ldots, \lfloor N/3\rfloor-1\}$. Then the number of used $L$ elements (i.e., those $\xv_i$ that are being used as part of the output) is $2\ell$, so there must be $N - 2\ell$ list elements that are not used. (Here, we talk about list elements instead of vectors, since two list elements may correspond to the same vector.) Their corresponding cosets must be distinct (since otherwise the algorithm would have been able to find more than $\ell$ output elements), so $N - 2\ell \leq |\mathcal{L}'/\mathcal{L}| \leq N/3$. It follows that $\lfloor N/3\rfloor \leq N/3 \leq \ell$, which is a contradiction. 
Hence, the algorithm always succeeds to construct $\lfloor N/3\rfloor$ output vectors. \qed 
\end{proof}

The following is a variant of~\cite[Lemma~4.5]{ALS21}, suitable for our purposes.  

\begin{lemma}[Distribution of Output]\label{lem: distribution of bucket and combine}
    Let $\mathcal{L} \subseteq \mathcal{L}'$ be full-rank lattices in $\R^d$. Let $N \geq 3|\mathcal{L}'/\mathcal{L}|$ be a positive integer, and let $\delta \geq 0$, $0 < \eps \leq \frac{1}{2}$, and $s \geq \sqrt{2}\eta_\eps(\mathcal{L}')$ be reals. 

    If the input list consists of $N$ random variables on $\mathcal{L}'$ that are conditionally $\delta$-similar to independent samples from $D_{\mathcal{L}',s}$, then \Cref{alg: bucket and combine} returns a list of $\lfloor N/3 \rfloor$ vectors from $\mathcal{L}$ that are conditionally $(4\delta + 3\eps)$-similar to independent samples from $D_{\mathcal{L},\sqrt{2}s}$.   
\end{lemma}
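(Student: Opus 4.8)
The goal is to show that the output list $Y_1,\dots,Y_{\lfloor N/3\rfloor}$ is conditionally $(4\delta+3\eps)$-similar to independent samples from $D_{\mathcal L,\sqrt2 s}$; since \Cref{lem: correctness of bucket and combine} already guarantees that the output has $\lfloor N/3\rfloor$ entries lying in $\mathcal L$, only the distributional claim remains. I would fix an arbitrary output index $k$ and an arbitrary target $\yv\in\mathcal L^{\lfloor N/3\rfloor}$, and bound $\Pr[Y_k=\yv_k\mid Y_{-k}=\yv_{-k}]$.

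The first move is to make the pairing performed by \Cref{alg: bucket and combine} deterministic. The matching of the $N$ input indices into the $\lfloor N/3\rfloor$ output pairs is a deterministic function of the cosets $\xv_i\bmod\mathcal L$ alone, so I would invoke \Cref{lem: similarity and conditional similarity closed under convex combinations} with $\mathcal E$ the partition of the sample space recording enough of this coset information to pin the pairing down. Conditioned on an event $E\in\mathcal E$, the $k$-th output equals $X_a-X_b$ for a fixed pair $(a,b)$ of indices lying in a common coset of $\mathcal L$, and it suffices to verify that $\Pr[Y_k=\yv_k\mid Y_{-k}=\yv_{-k},\,E]=e^{\pm(4\delta+3\eps)}\cdot D_{\mathcal L,\sqrt2 s}(\yv_k)$ for each such $E$ (events on which $Y_{-k}=\yv_{-k}$ has probability zero carry no weight in the convex combination and may be discarded).

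Under such conditioning the remaining randomness sits on $X_a$ and $X_b$. Using the chain rule for conditional $\delta$-similarity, the joint law of $(X_a,X_b)$ given the rest of the conditioning is pointwise $e^{\pm2\delta}$-close to $D_{\mathcal L',s}\otimes D_{\mathcal L',s}$ (a factor $e^{\pm\delta}$ per vector), and renormalizing after intersecting with the event ``$X_a$ and $X_b$ share a coset of $\mathcal L$'' — equivalently $X_a-X_b\in\mathcal L$, which is exactly what the bucketing enforces — costs a further $e^{\pm2\delta}$ from the normalizing denominator; this produces the $e^{\pm4\delta}$. What remains is precisely the law of the difference of two independent samples from $D_{\mathcal L',s}$, conditioned on that difference landing in the sublattice $\mathcal L$. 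By \Cref{lem: convolution lemma with specified distance} — the step that consumes the hypothesis $s\ge\sqrt2\,\eta_\eps(\mathcal L')$ — the unconditioned difference is $3\eps$-similar to $D_{\mathcal L',\sqrt2 s}$, and conditioning a discrete Gaussian on a full-rank sublattice is an \emph{exact} operation (the $\rho_{\sqrt2 s}(\mathcal L')$ normalizations cancel) yielding exactly $D_{\mathcal L,\sqrt2 s}$. Combining the two estimates gives the claimed bound on $\Pr[Y_k=\yv_k\mid Y_{-k}=\yv_{-k},\,E]$, and averaging over $E$ via \Cref{lem: similarity and conditional similarity closed under convex combinations} finishes the proof.

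The delicate point — and the reason conditional similarity, rather than plain statistical distance, is the right tool — is the bookkeeping of the conditioning in the middle step. Because the pairing is data-dependent and a bucket may hold many vectors, the conditioning event must simultaneously (i) fix the pairing, (ii) leave just enough freedom in $X_a,X_b$ (and in the coset they share) that the accumulated slack is only $4\delta$ and does not scale with the bucket size, and (iii) \emph{not} pin the specific common coset of $X_a$ and $X_b$, only the fact that they share one — this last point being what allows the superlattice smoothing bound $\eta_\eps(\mathcal L')$, rather than the larger $\eta_\eps(\mathcal L)$, to suffice in the convolution step. Arranging a conditioning meeting all three constraints, and checking that the discarded events are genuinely harmless, is where essentially all the work lies; the underlying discrete-Gaussian estimates are then immediate from the lemmas already established. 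This parallels the analysis of \cite[Lemma~4.5]{ALS21}.
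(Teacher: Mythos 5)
Your proposal takes a genuinely different route from the paper, and that route has two gaps that I believe are fatal to the argument as sketched.

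\textbf{Gap 1 (the conditioning event does not exist).} Your point~(iii) asks for a partition $\mathcal E$ that fixes the pairing performed by \Cref{alg: bucket and combine} while \emph{not} pinning the specific common coset of the matched pair $(X_a,X_b)$, so that the only constraint left on $(X_a,X_b)$ is ``$X_a-X_b\in\mathcal L$''. This is not achievable in general: the pairing is a deterministic function of the full coset vector $(\xv_1\bmod\mathcal L,\dots,\xv_N\bmod\mathcal L)$, and once you condition on the pairing $E$ together with $X_{-\{a,b\}}$, the common coset of $(X_a,X_b)$ is typically forced into a strict subset of $\mathcal L'/\mathcal L$, often a single coset. (Concretely: take $N=3$ with $|\mathcal L'/\mathcal L|=2$; the pairing $(1,3)$ together with $\cv_2$ uniquely determines $\cv_1=\cv_3$, since the algorithm would have output $(1,2)$ had all three shared a coset.) So the actual constraint on $(X_a,X_b)$ is strictly stronger than $X_a-X_b\in\mathcal L$, and your ``superlattice difference, then condition on $\mathcal L$'' calculation no longer describes the true conditional law. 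The paper sidesteps this entirely by conditioning on \emph{all} cosets at once via \Cref{lem: similarity and conditional similarity closed under convex combinations}, passing to the conditional laws $D_{\mathcal L+\cv_i,s}$ via \Cref{lem: conditional similarity conditioned on cosets}, and then applying \Cref{lem: convolution lemma with specified distance} to the sublattice $\mathcal L$ with the \emph{fixed} coset $\cv$ shared by the matched pair.

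\textbf{Gap 2 (the conditioning step is not exact).} The assertion that ``conditioning a discrete Gaussian on a full-rank sublattice is an exact operation'' only holds for the \emph{exact} Gaussian $D_{\mathcal L',\sqrt2 s}$. What \Cref{lem: convolution lemma with specified distance} gives is $\Pr[X_a-X_b=\yv]\in e^{\pm3\eps}\,D_{\mathcal L',\sqrt2 s}(\yv)$, and conditioning on $X_a-X_b\in\mathcal L$ divides by $\Pr[X_a-X_b\in\mathcal L]$, which also carries slack $e^{\pm3\eps}$; the quotient is therefore only within $e^{\pm6\eps}$ of $D_{\mathcal L,\sqrt2 s}$, not $e^{\pm3\eps}$. (One can recover the factor $3\eps$ by computing the conditional probability directly --- both numerator $\rho_{\sqrt2 s}(\yv)\,\rho_{s/\sqrt2}(\mathcal L'-\yv/2)$ and denominator share a $\rho_{s/\sqrt2}(\mathcal L'-\cdot/2)$ factor that smooths uniformly --- but that is not a consequence of first establishing $3\eps$-similarity of the unconditioned difference and then conditioning.) So even granting the conditioning of Gap~1, your budget would be $4\delta+6\eps$, not the claimed $4\delta+3\eps$.

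In short: the paper's proof conditions on the entire coset vector, uses \Cref{lem: conditional similarity conditioned on cosets} to move to coset-conditional Gaussians, and applies the convolution lemma to $\mathcal L$ with a fixed coset; your proposal avoids pinning the coset and applies the convolution lemma to $\mathcal L'$ followed by a conditioning step, which is an attractive idea but, for the two reasons above, does not go through.
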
 

Our proof makes use of the following fact: given a sample $X$ from a distribution similar to $D_{\mathcal{L}',s}$, if we condition on $X = \cv \bmod{\mathcal L}$ for some $\mathcal{L} \subseteq \mathcal{L}'$ and $\cv \in \mathcal L'$, then this distribution is similar to  $D_{\mathcal{L} + \cv,s}$. More generally, conditioning on cosets preserves conditional similarity. 

\begin{lemma}[Conditioning on Cosets]\label{lem: conditional similarity conditioned on cosets}
    Let $\mathcal{L} \subseteq \mathcal{L}'$ be full-rank lattices in $\R^d$. Let $N$ be a positive integer, and let $\delta \geq 0$ and $s > 0$ be reals. 
    Suppose that $X_1, \ldots, X_N \in \mathcal{L}'$ are discrete random variables that are conditionally $\delta$-similar to independent samples from $D_{\mathcal{L}', s}$. 
    Then, for all $i \in [N]$, all $\cv \in (\mathcal{L}'/\mathcal{L})^{N}$, and all $\xv \in (\mathcal{L}')^{N}$ satisfying $\xv_j = \cv_j \bmod{\mathcal{L}}$ for all $j \in [N]$,  
    \begin{align*}
    \Pr[X_i = \xv_i \mid X_{-i} = \xv_{-i} \text{ and } \forall j \in [N], X_j = \cv_j \bmod \mathcal{L}]  = e^{\pm 2\delta} \cdot \frac{\rho_s(\xv_i)}{\rho_s(\mathcal{L} + \cv_i)}.
\end{align*}

\end{lemma}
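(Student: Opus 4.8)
The plan is to reduce the statement to a pointwise-ratio manipulation, tracking the "$e^{\pm\delta}$" factors carefully through Bayes' rule. First I would rewrite the conditional probability on the left-hand side as a ratio of two joint probabilities: the numerator is $\Pr[X_i = \xv_i \text{ and } X_{-i} = \xv_{-i} \text{ and } \forall j, X_j = \cv_j \bmod \mathcal{L}]$, and the denominator is $\Pr[X_{-i} = \xv_{-i} \text{ and } \forall j, X_j = \cv_j \bmod \mathcal{L}]$. Since each $\xv_j$ already satisfies $\xv_j = \cv_j \bmod \mathcal{L}$ by hypothesis, the event "$X_j = \xv_j$" already implies "$X_j = \cv_j \bmod \mathcal{L}$" for the indices $j$ where we fix $X_j = \xv_j$; so in the numerator the coset conditions on those coordinates are redundant and the event simplifies to $\{X = \xv\} \cap \{X_i = \cv_i \bmod \mathcal{L}\}$, but that last piece is also implied. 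Hence the numerator equals $\Pr[X = \xv]$. The denominator is then $\Pr[X_{-i} = \xv_{-i}] \cdot \Pr[X_i = \cv_i \bmod \mathcal{L} \mid X_{-i} = \xv_{-i}]$, using that fixing $X_{-i} = \xv_{-i}$ already forces the coset conditions on all $j \neq i$.

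Next I would write the numerator $\Pr[X = \xv] = \Pr[X_{-i} = \xv_{-i}] \cdot \Pr[X_i = \xv_i \mid X_{-i} = \xv_{-i}]$, so that the $\Pr[X_{-i} = \xv_{-i}]$ factors cancel and the whole left-hand side becomes
\[
\frac{\Pr[X_i = \xv_i \mid X_{-i} = \xv_{-i}]}{\Pr[X_i = \cv_i \bmod \mathcal{L} \mid X_{-i} = \xv_{-i}]}.
\]
Now apply conditional $\delta$-similarity to the numerator: it is $e^{\pm\delta} \cdot D_{\mathcal{L}',s}(\xv_i) = e^{\pm\delta} \cdot \rho_s(\xv_i)/\rho_s(\mathcal{L}')$. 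For the denominator, sum the conditional-similarity bound over the coset: $\Pr[X_i = \cv_i \bmod \mathcal{L} \mid X_{-i} = \xv_{-i}] = \sum_{\wv \in \mathcal{L} + \cv_i} \Pr[X_i = \wv \mid X_{-i} = \xv_{-i}] = e^{\pm\delta} \cdot \rho_s(\mathcal{L} + \cv_i)/\rho_s(\mathcal{L}')$ (here one uses that summing a family of $e^{\pm\delta}$-factor-perturbed terms still gives an $e^{\pm\delta}$ factor on the sum). Taking the ratio, the $\rho_s(\mathcal{L}')$ factors cancel and the two $e^{\pm\delta}$ factors combine into $e^{\pm 2\delta}$, yielding exactly $e^{\pm 2\delta} \cdot \rho_s(\xv_i)/\rho_s(\mathcal{L} + \cv_i)$ as claimed.

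The main obstacle I anticipate is the bookkeeping around degenerate/zero-probability events and making the "redundant conditioning" claims airtight — in particular justifying that conditioning on $X_{-i} = \xv_{-i}$ makes the coset events on indices $j \neq i$ vacuous, and handling the convention $\Pr[E_0 \mid E_1] = 0$ when $\Pr[E_1] = 0$ (e.g.\ if $\Pr[X_{-i} = \xv_{-i}] = 0$, both sides should be interpreted consistently). One should also double-check that the hypothesis "$\xv_j = \cv_j \bmod \mathcal{L}$ for all $j$" is genuinely used to collapse the numerator, and that no extra factor of $e^{\pm\delta}$ sneaks in when passing from the joint event to the product form. Everything else is routine algebra with the Gaussian mass function $\rho_s$.
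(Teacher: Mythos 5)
Your proposal is correct and follows essentially the same approach as the paper's proof: rewrite the conditional probability via Bayes' rule, use the hypothesis $\xv_j = \cv_j \bmod \mathcal{L}$ to collapse the redundant coset conditions, and then apply conditional $\delta$-similarity to both the numerator $\Pr[X_i = \xv_i \mid X_{-i} = \xv_{-i}]$ and the denominator $\Pr[X_i = \cv_i \bmod \mathcal{L} \mid X_{-i} = \xv_{-i}]$ (the latter by summing over the coset $\mathcal{L}+\cv_i$), combining into $e^{\pm 2\delta}$. The only cosmetic difference is that you explicitly factor out $\Pr[X_{-i}=\xv_{-i}]$ before cancelling, whereas the paper performs this cancellation in a single step.
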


\begin{proof}
Suppose that $X = (X_1, \ldots, X_{N})$ consists of $N$ random variables on $\mathcal{L}'$ that are conditionally $\delta$-similar to independent samples from $D_{\mathcal{L}',s}$. 
Then, by definition, we have for all $i \in [N]$ and all $\xv \in (\mathcal{L}')^{N}$ that
\begin{align}
    \Pr[X_i = \xv_i \mid X_{-i} = \xv_{-i}] = e^{\pm \delta} \cdot \frac{\rho_s(\xv_i)}{\rho_s(\mathcal{L}')}. \label{eq: assumption for conditional similarity conditioned on cosets}
\end{align}
Consider arbitrary $i \in [N]$, $\cv \in (\mathcal{L}'/\mathcal{L})^{N}$, and $\xv \in (\mathcal{L}')^{N}$ satisfying $\xv_j = \cv_j \bmod{\mathcal{L}}$ for all $j \in [N]$. 
Then, by definition of conditional probability,
{\allowdisplaybreaks 
\begin{align*}
    &\, \Pr[X_i = \xv_i \mid X_{-i} = \xv_{-i} \text{ and } \forall j \in [N], X_j = \cv_j \bmod \mathcal{L}]  \\
    =&\, \frac{\Pr[X_i = \xv_i \text{ and } X_{-i} = \xv_{-i} \text{ and } \forall j \in [N], X_j = \cv_j \bmod \mathcal{L}]}{\Pr[X_{-i} = \xv_{-i} \text{ and } \forall j \in [N], X_j = \cv_j \bmod \mathcal{L}]} \\
    =&\, \frac{\Pr[X_i = \xv_i \text{ and } X_{-i} = \xv_{-i} ]}{\Pr[X_{-i} = \xv_{-i} \text{ and } X_i = \cv_i \bmod \mathcal{L}]} \tag{as $\xv_j = \cv_j \bmod{\mathcal{L}}$ $\forall j \in [N]$} \\
    =&\, \frac{\Pr[X_i = \xv_i \mid X_{-i} = \xv_{-i} ]}{\Pr[X_i = \cv_i \bmod \mathcal{L} \mid X_{-i} = \xv_{-i}]} \\ 
    =&\, \frac{\Pr[X_i = \xv_i \mid X_{-i} = \xv_{-i} ]}{ \sum_{\vv \in \mathcal{L} + \cv_i}\Pr[X_i = \vv \mid X_{-i} = \xv_{-i}]} \\
    =&\, e^{\pm 2\delta} \frac{{\rho_s(\xv_i)}/{\rho_s(\mathcal{L}')}}{{\rho_s(\mathcal{L} + \cv_i)}/{\rho_s(\mathcal{L}')}}
\end{align*}
where we apply \Cref{eq: assumption for conditional similarity conditioned on cosets} twice (to both the numerator and denominator) to obtain the last line. The conclusion then immediately follows.} \qed 
\end{proof} 

We can now prove \Cref{lem: distribution of bucket and combine}.


\begin{proof}[of {\Cref{lem: distribution of bucket and combine}}]
Correctness (on arbitrary input) follows from \Cref{lem: correctness of bucket and combine}.

Let $Y_1,\ldots, Y_M$ be the random variables corresponding to the vectors in the output list (in order), where $M := \lfloor N/3 \rfloor$. 
We want to show that, for all $j \in [M]$ and $\yv \in \mathcal{L}^M$,
\begin{align*}
    \Pr_{X_1,\ldots,X_{N}}[Y_j = \yv_j \mid Y_{-j} = \yv_{-j}] &= e^{\pm (4\delta + 3\eps)} D_{\mathcal{L}, \sqrt{2}s}(\yv_j). 
\end{align*}
By \Cref{lem: similarity and conditional similarity closed under convex combinations}, it suffices to show that for all $\cv_1,\ldots, \cv_{N} \in \mathcal{L}'/\mathcal{L}$  such that $\Pr_{X_1,\ldots,X_{N}}[\forall i \in [N], X_i = \cv_i \bmod{\mathcal{L}} ] > 0$, we have, for all $j \in [M]$, $\yv \in \mathcal{L}^M$,  \begin{align}
    &\, \Pr_{X_1,\ldots,X_{N}}[Y_j = \yv_j \mid Y_{-j} = \yv_{-j} \text{ and } \forall i \in [N], X_i = \cv_i \bmod{\mathcal{L}}] \notag \\
    =&\, e^{\pm (4\delta + 3\eps)} D_{\mathcal{L}, \sqrt{2}s}(\yv_j). \label{eq: condition output vectors for conditional similarity}
\end{align}  

So consider any $\cv_1,\ldots, \cv_{N} \in \mathcal{L}'/\mathcal{L}$ such that $\Pr_{X_1,\ldots,X_{N}}[\forall i \in [N], X_i = \cv_i \bmod{\mathcal{L}}] > 0$. Note that the output (in particular, the way the vectors are paired) is entirely determined by the cosets $(\cv_1, \ldots, \cv_{N})$ for $\cv_i := \xv_i \bmod \mathcal{L}$.  
In particular, for any permutation $\pi \colon [N] \to [N]$ such that $(\xv'_1, \ldots, \xv'_{N}) := (\xv_{\pi(1)}, \ldots, \xv_{\pi(N)})$ satisfies $\yv_j = \xv'_{2j-1} - \xv'_{2j}$ for all $j \in [M]$, we have that $\xv'_{1}, \ldots, \xv'_{2M}$ is entirely determined by the cosets $(\cv_1, \ldots, \cv_{N})$.  
(The order of the vectors $\xv'_{2M+1}, \ldots, \xv'_{N}$ does not affect the algorithm's output.) 
Without loss of generality, we will therefore redefine $(\xv_1,\ldots, \xv_{N})$ as $(\xv'_1, \ldots, \xv'_{N})$ for such a permutation (allowing us to write $\yv_j = \xv_{2j-1} - \xv_{2j}$ for all $j\in [M]$). 

Let $D$ be the distribution of the input variables $X = (X_1,\ldots, X_{N})$ \textit{conditional on} $X_i = \cv_i \bmod \mathcal{L}$ for all $i \in [N]$. 
By the conditional similarity assumption and by \Cref{lem: conditional similarity conditioned on cosets}, for all $i \in [N]$ and all $\xv \in (\mathcal{L}')^{N}$ satisfying $\xv_j =  \cv_j \bmod \mathcal{L}$ for all $j \in [N]$, we have 
\begin{align}
    \Pr_{X \sim D}[X_i = \xv_i \mid X_{-i} = \xv_{-i}]  = e^{\pm 2\delta} \cdot D_{\mathcal{L} + \cv_i, s}(\xv_i), \label{eq: output distribution bucket and combine conditioned on cosets}
\end{align}
which we will repeatedly use below. 

To show that \Cref{eq: condition output vectors for conditional similarity} holds (for any $j \in [M]$, $\yv \in \mathcal{L}^M$), and thus to finish the proof, it suffices (by another application of \Cref{lem: similarity and conditional similarity closed under convex combinations}) to show that for  all $j \in [M]$, $\yv \in \mathcal{L}^M$, and all $\vv \in (\mathcal{L}')^{N-2}$ satisfying $\Pr[X_{-\{2j-1, 2j\}} = \vv | Y_{-j} = \yv_j] > 0$, we have that \begin{align*}
    \Pr_{X \sim D}[Y_j = \yv_j \mid Y_{-j} = \yv_{-j} \text{ and } X_{-\{2j-1, 2j\}} = \vv] = e^{\pm (4\delta + 3\eps)} D_{\mathcal{L}, \sqrt{2}s}(\yv_j).
\end{align*}

So take any $j \in [M]$ and $\yv \in \mathcal{L}^M$, and any such $\vv \in (\mathcal{L}')^{N-2}$.  
Since $X_{-\{2j-1, -2j\}}$ determines all the entries of $Y_{-j}$, we have that $\Pr[X_{-\{2j-1, 2j\}} = \vv | Y_{-j} = \yv_j] > 0$ implies that $\Pr[Y_{-j} =\yv_{-j} \mid X_{-\{2j-1, 2j\}} = \vv] = 1$. 
Hence, \begin{align*}
    &\, \Pr_{X \sim D}[Y_j = \yv_j \mid Y_{-j} = \yv_{-j} \text{ and } X_{-\{2j-1, 2j\}} = \vv] \\
    =&\, \Pr_{X \sim D}[Y_j = \yv_j \mid  X_{-\{2j-1, 2j\}} = \vv].
\end{align*}  

Writing $\cv := X_{2j-1} \bmod{\mathcal{L}} = X_{2j} \bmod{\mathcal{L}}$, we have that {\allowdisplaybreaks \begin{align*}
        &\, \Pr_{X \sim D}[Y_j = \yv_j \mid X_{-\{2j-1, 2j\}} = \vv] \\
        =&\,  \Pr_{X \sim D}[X_{2j-1} - X_{2j} = \yv_j \mid X_{-\{2j-1, 2j\}} = \vv] \\
        =&\, \sum_{\xv \in \mathcal{L} + \cv} \Pr_{X \sim D}[X_{2j-1} = \xv \text{ and } X_{2j} = \xv - \yv_j \mid X_{-\{2j-1, 2j\}} = \vv] \\ 
        =&\, e^{\pm 2\delta} \cdot \sum_{\xv \in \mathcal{L} + \cv}  D_{\mathcal{L} + \cv, s}(\xv) \cdot \Pr_{X \sim D}[X_{2j} = \xv - \yv_j \mid X_{-\{2j-1, 2j\}} = \vv] 
    \end{align*} 
    where the last line follows from the definition of conditional probability and from applying \Cref{eq: output distribution bucket and combine conditioned on cosets} to $i = 2j-1$.} 
    Now,  {\allowdisplaybreaks \begin{align*}
        &\, \Pr_{X \sim D}[X_{2j} = \xv - \yv_j \mid X_{-\{2j-1, 2j\}} = \vv] \\
        =&\, \sum_{\zv \in \mathcal{L} + \cv} \Pr_{X \sim D}[X_{2j-1} = \zv \text{ and } X_{2j} = \xv - \yv_j \mid X_{-\{2j-1, 2j\}} = \vv] \\ 
        =&\, e^{\pm 2\delta} \cdot D_{\mathcal{L} + \cv, s}(\xv - \yv_j)  \cdot \sum_{\zv \in \mathcal{L} + \cv}   \Pr_{X \sim D}[X_{2j-1} = \zv \mid X_{-\{2j-1, 2j\}} = \vv] \\ 
        =&\, e^{\pm 2\delta} \cdot D_{\mathcal{L} + \cv, s}(\xv - \yv_j)
    \end{align*}
    where the second line again follows from the definition of conditional probability and \Cref{eq: output distribution bucket and combine conditioned on cosets}.}
    Therefore, it follows that \begin{align*}
        \Pr_{X \sim D}[Y_j = \yv_j \mid X_{-\{2j-1, 2j\}} = \vv] &= e^{\pm 4\delta} \cdot \sum_{\xv \in \mathcal{L} + \cv}  D_{\mathcal{L} + \cv, s}(\xv) \cdot D_{\mathcal{L} + \cv, s}(\xv - \yv_j) \\
        &= e^{\pm 4\delta} \cdot \Pr_{(X_1, X_2) \sim (D_{\mathcal{L} + \cv, s})^2}[ X_1 - X_2 = \yv_j] \\
        &= e^{\pm (4\delta + 3\eps)} \cdot D_{\mathcal{L}, \sqrt{2}s}(\yv_j)
    \end{align*}
    by \Cref{lem: convolution lemma with specified distance} and since $\frac{1 + \eps}{1 - \eps} \leq e^{3\eps}$ for $\eps \leq \frac{1}{2}$. This completes the proof. 
\qed 
\end{proof}

\subsection{The Algorithm: Wagner as a Gaussian sampler}\label{sec: Wagner as Gaussian sampler algorithm} 

We are now ready to present our Gaussian sampler, laid out as~\Cref{alg: Wagner-Style Gaussian Sampler}. 

\vspace{-0.4cm}

\begin{algorithm}[ht!]
\caption{Wagner-Style Gaussian Sampler}
\label{alg: Wagner-Style Gaussian Sampler}
\DontPrintSemicolon
\SetKwInOut{Input}{Input}\SetKwInOut{Output}{Output}

\Input{Integers $n,m,q$; \\ 
Full-rank matrix $\Am = [\Am' \mid \Id_n] \in \Z_q^{n \times m}$; \\
Integer parameters $N, r, (p_i)_{i=1}^r, (\nnew_i)_{i=1}^r$ with $\sum_{i=1}^r \nnew_i = n$; \\
Real parameter $s_0 > 0$}
\Output{List of vectors in $\Lambda_q^\bot(\Am)$}
\vspace{2mm}
Let $\Lambda_0 := \Z^{m-n}$ \; 
Initialize a list $L_0$ with $3^r N$ independent samples from $D_{\Lambda_0, s_0}$  \; 
\For{$i = 1, \ldots, r$}{
    Let $\Lambda_i$ as defined in \Cref{eq: def of Lambda} \;
    Let $\Lambda'_i = \mathcal{L}(\Bm'_i)$ for $\Bm'_i$ as defined in \Cref{eq: basis B'i} \;
    $L'_{i-1} := \emptyset$ \; 
    \For(\tcp*[f]{Lift to $\Lambda'_i$ \hspace{0.3cm}}){$\xv \in L_{i-1}$}{
    Sample $\xv' \sim \mathrm{DGLift}(\Lambda_{i-1}, \Lambda'_i, s_{i-1}, \xv)$ $\>\vartriangleright$ \Cref{alg: DGLift} \; 
    Append $\xv'$ to $L'_{i-1}$}
    $L_i = \mathrm{BucketAndCombine}(\Lambda'_i, \Lambda_i, L'_{i-1})$ $\>\vartriangleright$ \Cref{alg: bucket and combine} \tcp*{Combine to $\Lambda_i$}  
    $s_i := \sqrt{2} s_{i-1}$ \; 
}
\Return{$L_r$} 
\end{algorithm} 
\vspace{-0.4cm}

\begin{remark}
    In our applications of \Cref{alg: Wagner-Style Gaussian Sampler}, we consider input parameter $s_0$ satisfying $s_0 \geq \sqrt{\ln(2(m-n) + 4)/\pi}$  and  $\sqrt{2^{i-1}} s_0\geq \frac{q}{p_i} \sqrt{\ln(2\nnew_i + 4)/\pi}$ for all $i \in [r]$. 
    This allows us to use the exact sampler from \Cref{lem: exact randomized Gaussian rounding} to sample from $D_{\Lambda_0, s_0}$ (in the first iteration) and from $D_{\frac{q}{p_i} \Z^{\nnew_i}, \sqrt{2^{i-1}} s_{0}}$ (in iterations $i = 1,\ldots, r$). 
\end{remark}

\begin{theorem}[Correctness of One Iteration]\label{thm: correctness of one iteration Gaussian Wagner algorithm}
    Let $\delta \geq 0$ and $0 < \eps \leq \frac{1}{2}$ be reals. For $i \in [r]$, consider iteration $i$ of \Cref{alg: Wagner-Style Gaussian Sampler} with well-defined parameters.
    If $s_{i-1} \geq \max(\eta_\eps(\frac{q}{p_i}\Z^{\nnew_i}), \sqrt{2}\eta_\eps(\Lambda'_i), \frac{q}{p_i} \sqrt{\ln(2\nnew_i + 4)/\pi})$ 
    and $L_{i-1}$ consists of $|L_{i-1}| \geq 3 p_i^{\nnew_i}$ vectors in $\Lambda_{i-1}$ that are conditionally $\delta$-similar to independent samples from $D_{\Lambda_{i-1}, s_{i-1}}$, then $L_i$ consists of $\lfloor |L_{i-1}|/3 \rfloor$ vectors in $\Lambda_i$ that are conditionally $(4\delta + 15\eps)$-similar to independent samples from $D_{\Lambda_i, \sqrt{2}s_{i-1}}$.  
\end{theorem}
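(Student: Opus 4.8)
The statement is obtained by composing the two lemmas of \Cref{sec: discrete-Gaussian-preserved lifting} and \Cref{sec: combining to a sublattice} along the two phases of iteration $i$ of \Cref{alg: Wagner-Style Gaussian Sampler}: the lift from $\Lambda_{i-1}$ to $\Lambda'_i$ via $\mathrm{DGLift}$, and the combine down to $\Lambda_i$ via $\mathrm{BucketAndCombine}$. First I would pin down the lattice identifications. From the explicit basis $\Bm'_i$ in \Cref{eq: basis B'i}, the sublattice $\mathcal{S}$ spanned by the first $\nnew_i$ columns is (an embedding of) $\frac{q}{p_i}\Z^{\nnew_i}$; it is primitive in $\Lambda'_i$, and $\pi_{\mathcal{S}}^\bot(\Lambda'_i)$ is (an embedding of) $\Lambda_{i-1}$ as defined in \Cref{eq: def of Lambda}. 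One also checks directly from the bases that $\Lambda_i \subseteq \Lambda'_i$, both full-rank of dimension $m-n+\nsum_i$, with $|\Lambda'_i/\Lambda_i| = p_i^{\nnew_i}$. These are exactly the structural facts recorded in the discussion around \Cref{eq: basis B'i}; the only thing to watch is that the embeddings stay consistent between the two phases, which is routine linear algebra.

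Next, the lift step. Since $s_{i-1} \geq \frac{q}{p_i}\sqrt{\ln(2\nnew_i+4)/\pi}$, the rescaled version of \Cref{lem: exact randomized Gaussian rounding} (as noted right after \Cref{lem: conditionally similar DGLift}) provides an exact sampler for $D_{\mathcal{S}, s_{i-1}, \cv}$ for every center $\cv$; moreover $s_{i-1} \geq \eta_\eps(\mathcal{S})$ by hypothesis. Hence \Cref{lem: conditionally similar DGLift}, applied with $\mathcal{P} = \Lambda_{i-1}$, $\mathcal{L} = \Lambda'_i$, the given $\delta$, and the given $\eps \leq \tfrac12$, shows that the list $L'_{i-1}$ produced in the inner loop---which has the same length as $L_{i-1}$, since $\mathrm{DGLift}$ maps each input vector to exactly one output vector---consists of vectors in $\Lambda'_i$ that are conditionally $(\delta+3\eps)$-similar to independent samples from $D_{\Lambda'_i, s_{i-1}}$.

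Then the combine step. I would invoke \Cref{lem: distribution of bucket and combine} (with \Cref{lem: correctness of bucket and combine} for the output size) with $\mathcal{L}' = \Lambda'_i$, $\mathcal{L} = \Lambda_i$, the role of ``$\delta$'' played by $\delta+3\eps$, and the role of ``$s$'' played by $s_{i-1}$. Its two quantitative hypotheses, namely $|L'_{i-1}| = |L_{i-1}| \geq 3|\Lambda'_i/\Lambda_i| = 3p_i^{\nnew_i}$ and $s_{i-1} \geq \sqrt{2}\,\eta_\eps(\Lambda'_i)$, are precisely the remaining assumptions of the theorem. Therefore $L_i = \mathrm{BucketAndCombine}(\Lambda'_i,\Lambda_i,L'_{i-1})$ has $\lfloor |L'_{i-1}|/3\rfloor = \lfloor |L_{i-1}|/3\rfloor$ vectors in $\Lambda_i$ that are conditionally $\bigl(4(\delta+3\eps)+3\eps\bigr)$-similar to independent samples from $D_{\Lambda_i, \sqrt{2}\,s_{i-1}}$. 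Since $4(\delta+3\eps)+3\eps = 4\delta+15\eps$, this is exactly the claimed bound.

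The substantive work has already been absorbed into \Cref{lem: conditionally similar DGLift} and \Cref{lem: distribution of bucket and combine}, so there is no deep obstacle. The one point requiring care is verifying that the three conditions bundled in the hypothesis $s_{i-1} \geq \max(\eta_\eps(\tfrac{q}{p_i}\Z^{\nnew_i}),\, \sqrt{2}\eta_\eps(\Lambda'_i),\, \tfrac{q}{p_i}\sqrt{\ln(2\nnew_i+4)/\pi})$ are exactly the smoothing/sampler requirements demanded by those two lemmas and by \Cref{lem: exact randomized Gaussian rounding}, and that the lattice setups they need ($\mathcal{S}\subseteq\Lambda'_i$ primitive with $\pi_{\mathcal S}^\bot(\Lambda'_i)=\Lambda_{i-1}$, and $\Lambda_i\subseteq\Lambda'_i$ of index $p_i^{\nnew_i}$) are met by the explicit bases. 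No new inequality or probabilistic estimate is introduced; the similarity constant simply propagates as $\delta \mapsto \delta+3\eps \mapsto 4(\delta+3\eps)+3\eps$.
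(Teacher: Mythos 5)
Your proposal is correct and follows essentially the same route as the paper: embed $\frac{q}{p_i}\Z^{\nnew_i}$ as the primitive sublattice $\mathcal{S} \subseteq \Lambda'_i$ with $\pi_{\mathcal{S}}^\bot(\Lambda'_i) = \Lambda_{i-1}$, apply \Cref{lem: conditionally similar DGLift} (via the rescaled exact sampler of \Cref{lem: exact randomized Gaussian rounding}) to obtain $(\delta+3\eps)$-similarity on $\Lambda'_i$, and then apply \Cref{lem: distribution of bucket and combine} to obtain $(4(\delta+3\eps)+3\eps) = (4\delta+15\eps)$-similarity on $\Lambda_i$ with the stated output size. The paper's proof is identical in structure and in how each of the three hypotheses on $s_{i-1}$ is routed to the corresponding lemma.
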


\begin{proof}  
Let $\mathcal{S}$ be the embedding of $\frac{q}{p_i}\Z^{\nnew_i}$ in $\R^{m - n + \nsum_i}$ and $\mathcal{P}$ the embedding of $\Lambda_{i-1}$ in $\R^{m - n + \nsum_i}$.  Note that $\mathcal{S}$ is a primitive sublattice of $\Lambda'_i$ and that $\mathcal{P} = \pi_{\mathcal{S}}^{\bot}(\mathcal{C})$ for some complement $\mathcal{C}$ to $\mathcal{S}$. 
Since $s_{i-1} \geq \eta_\eps(\frac{q}{p_i}\Z^{\nnew_i}) = \eta_\eps(\mathcal{S})$, the application of the algorithm from \Cref{lem: conditionally similar DGLift} turns the $|L_{i-1}|$ random variables on $\Lambda_{i-1}$ into $|L_{i-1}|$ random variables on $\Lambda'_i$ that are conditionally $(\delta + 3\eps)$-similar to $D_{\Lambda'_i, s_{i-1}}$. 
Here, as oracle to sample from  $D_{\frac{q}{p_i}\Z^{\nnew_i}, s_{i-1}, \cv}$ (exactly),  we use \Cref{lem: exact randomized Gaussian rounding} to sample from $D_{\Z^{\nnew_i}, \frac{p_i}{q} s_{i-1}, \frac{p_i}{q}\cv}$, and multiply the resulting vector by $\frac{q}{p_i}$. 
This is justified since we assume that $\frac{p_i}{q} s_{i-1} \geq \sqrt{\ln(2\nnew_i + 4)/\pi}$.  

By \Cref{lem: distribution of bucket and combine} (where we use that $s_{i-1} \geq \sqrt{2}\eta_\eps(\Lambda'_i)$), the output list $L_i$ of $\mathrm{BucketAndCombine}$ consists of  $\lfloor |L_{i-1}|/3 \rfloor$ vectors in $\Lambda_i$ that are conditionally $\delta'$-similar to independent samples from $D_{\Lambda_i, \sqrt{2}s_{i-1}}$, where $\delta' = 4(\delta + 3\eps) + 3\eps = 4\delta + 15\eps$. 
\qed 
\end{proof}

\begin{theorem}[Correctness of \Cref{alg: Wagner-Style Gaussian Sampler}]\label{thm: correctness of full Gaussian Wagner algorithm} 
    Let $0 < \eps \leq \frac{1}{2}$ be a real, and $r$ an integer. 
    If the input parameters satisfy $N \geq p_i^{\nnew_i}$, $s_0 \geq \sqrt{\ln(2(m-n) + 4)/\pi}$, and $\sqrt{2^{i-1}} s_0 \geq \max(\eta_\eps(\frac{q}{p_i}\Z^{\nnew_i}), \sqrt{2}\eta_\eps(\Lambda'_i), \frac{q}{p_i} \sqrt{\ln(2\nnew_i + 4)/\pi})$ for all $i \in [r]$, then Algorithm~\ref{alg: Wagner-Style Gaussian Sampler} returns a list of size $N$ consisting of vectors that are conditionally $4^r 5 \eps$-similar to independent samples from $D_{\Lambda_q^\bot(\Am), \sqrt{2^r} s_0}$.
\end{theorem}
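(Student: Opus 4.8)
The plan is a routine induction over the $r$ iterations of \Cref{alg: Wagner-Style Gaussian Sampler}: feed the conclusion of iteration $i$ into \Cref{thm: correctness of one iteration Gaussian Wagner algorithm} to obtain the hypothesis for iteration $i+1$, while keeping a ledger of the list length $|L_i|$ and the accumulated similarity parameter $\delta_i$. For the bookkeeping I would first fix all parameters and establish the invariant that for every $0 \le i \le r$ the list $L_i$ has exactly $3^{r-i}N$ vectors and the active width is $s_i = \sqrt{2^i}\,s_0$. The width claim is immediate from the update rule $s_i := \sqrt2\,s_{i-1}$; the length claim follows since $|L_0| = 3^r N$ by construction and each iteration replaces $L_{i-1}$ by $\mathrm{BucketAndCombine}$, which outputs $\lfloor |L_{i-1}|/3\rfloor$ vectors, and $3^{r-i+1}N$ is divisible by $3$ so the floor is exact. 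In particular $|L_r| = N$.

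Second, I would check that the hypotheses of \Cref{thm: correctness of one iteration Gaussian Wagner algorithm} hold in each iteration $i \in [r]$. The smoothing-and-sampler condition $s_{i-1} \ge \max(\eta_\eps(\tfrac{q}{p_i}\Z^{\nnew_i}),\, \sqrt2\,\eta_\eps(\Lambda'_i),\, \tfrac{q}{p_i}\sqrt{\ln(2\nnew_i+4)/\pi})$ is exactly the assumed lower bound on $\sqrt{2^{i-1}}s_0 = s_{i-1}$. The list-size condition $|L_{i-1}| \ge 3 p_i^{\nnew_i}$ follows from $|L_{i-1}| = 3^{r-i+1}N \ge 3N \ge 3 p_i^{\nnew_i}$, using $N \ge p_i^{\nnew_i}$ and $r-i+1 \ge 1$. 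In the first iteration we additionally use $s_0 \ge \sqrt{\ln(2(m-n)+4)/\pi}$ so that $L_0$ can be filled with exact samples from $D_{\Lambda_0,s_0} = D_{\Z^{m-n},s_0}$ via \Cref{lem: exact randomized Gaussian rounding}; being genuinely independent, these are conditionally $\delta_0$-similar to independent samples from $D_{\Lambda_0,s_0}$ with $\delta_0 = 0$, which is the base case.

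Third, I would run the induction. Applying \Cref{thm: correctness of one iteration Gaussian Wagner algorithm} in iteration $i$, with fresh randomness for the $\mathrm{DGLift}$ calls (independent of $L_{i-1}$), turns a list conditionally $\delta_{i-1}$-similar to independent samples from $D_{\Lambda_{i-1},s_{i-1}}$ into one conditionally $\delta_i$-similar to independent samples from $D_{\Lambda_i,\sqrt2 s_{i-1}} = D_{\Lambda_i,s_i}$, where $\delta_i = 4\delta_{i-1} + 15\eps$. Unrolling from $\delta_0 = 0$ gives $\delta_r = 15\eps\sum_{j=0}^{r-1}4^j = 5\eps(4^r-1)$. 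Since $\Lambda_r = \Lambda_q^\bot(\Am)$, $s_r = \sqrt{2^r}\,s_0$, and $|L_r| = N$, the output list is conditionally $5\eps(4^r-1)$-similar, hence a fortiori conditionally $4^r\cdot 5\eps$-similar (similarity is monotone in its parameter), to independent samples from $D_{\Lambda_q^\bot(\Am),\sqrt{2^r}s_0}$, as claimed.

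I do not expect a genuine obstacle here; the only points needing a little care are that the floor divisions are exact (so the final length is $N$ exactly rather than merely at least $N$), and that \Cref{thm: correctness of one iteration Gaussian Wagner algorithm} places no upper bound on the incoming $\delta$, so the induction remains valid even once $\delta_i$ grows past $1$ — only $0 < \eps \le \tfrac12$ is ever required, and that is a standing hypothesis.
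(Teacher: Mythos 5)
Your proof is correct and takes essentially the same route as the paper: an induction over the $r$ iterations, feeding each list through \Cref{thm: correctness of one iteration Gaussian Wagner algorithm} with the recurrence $\delta_i = 4\delta_{i-1} + 15\eps$, unrolled to $\delta_r = 5\eps(4^r-1) \le 4^r\cdot 5\eps$. The only cosmetic difference is that you start the induction at $\delta_0 = 0$ while the paper starts at $\delta_1 = 15\eps$, and you explicitly note the exactness of the floor divisions; both points are sound and match the paper's intent.
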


\begin{proof}
To obtain the list $L_0$, we use the exact $D_{\Z^{m-n}, s_0}$-sampler from \Cref{lem: exact randomized Gaussian rounding}, which takes time $\mathrm{poly}(m-n, \log s_0)$. This is justified since we assume $s_0 \geq \sqrt{\ln(2(m-n) + 4)/\pi}$. 

We show by induction that, for each iteration $i \in \{1,\ldots,r\}$, the output list $L_i$ consists of $3^{r-i}N$ vectors in $\Lambda_i$ that are conditionally $\delta_i$-similar to independent samples from $D_{\Lambda_i, s_{i}}$ for $\delta_i := (4^i - 1) 5 \eps$. The theorem then immediately follows, since $(4^r - 1)5\eps \leq 4^r 5 \eps$. 

By assumption, $|L_0| = 3^r N \geq 3^r p_1^{\nnew_1} \geq 3 p_1^{\nnew_1}$ and $s_0 \geq \max(\eta_\eps(\frac{q}{p_1}\Z^{\nnew_1}), \sqrt{2}\eta_\eps(\Lambda'_1), \linebreak[0] \frac{q}{p_1} \sqrt{\ln(2\nnew_1 + 4)/\pi})$. 
Therefore, \Cref{thm: correctness of one iteration Gaussian Wagner algorithm} implies that the output list $L_1$ of iteration $i=1$ consists of $3^{r-1} N$ vectors in $\Lambda_1$ that are conditionally $15\eps$-similar to independent samples from $D_{\Lambda_1, s_{1}}$. Since $\delta_1 = 15\eps$, this proves the base case. 

Consider any $i \in \{2,\ldots, r\}$, and suppose the claim holds for all $1 \leq j \leq i-1$. By the induction hypothesis and assumption, $|L_{i-1}| = 3^{r-(i-1)} N \geq 3^{r-(i-1)} p_i^{\nnew_i} \geq 3p_i^{\nnew_i}$ and $s_{i-1} = \sqrt{2^{i-1}} s_0 \geq \max(\eta_\eps(\frac{q}{p_i}\Z^{\nnew_i}), \sqrt{2}\eta_\eps(\Lambda'_i),  \frac{q}{p_i} \sqrt{\ln(2\nnew_i + 4)/\pi})$. Therefore, \Cref{thm: correctness of one iteration Gaussian Wagner algorithm} implies that the output list $L_i$ of iteration $i$ consists of $3^{r-i} N$ vectors in $\Lambda_i$. By the induction hypothesis the vectors in $L_{i-1}$ are conditionally $\delta_{i-1}$-similar to independent samples from $D_{\Lambda_{i-1}, s_{i-1}}$, so \Cref{thm: correctness of one iteration Gaussian Wagner algorithm} yields that the vectors in $L_i$ are conditionally $(4\delta_{i-1} + 15\eps)$-similar to independent samples from $D_{\Lambda_i, s_{i}}$. Since 
$4\delta_{i-1} + 15\eps = 4(4^{i-1} - 1)5\eps + 15\eps = (4^i - 1)5\eps = \delta_i$, the claim follows. 
\qed 
\end{proof}

\begin{remark}[Expected Runtime]\label{rem: exp runtime of full alg}   
Using the exact sampler from \Cref{lem: exact randomized Gaussian rounding} ensures that all vectors processed by \Cref{alg: Wagner-Style Gaussian Sampler} have expected bitsize at most  $\mathrm{poly}(m,r,\log s_0, \log q)$ when the parameters satisfy $p_i \leq q$ for all $i \in [r]$. 
Hence, the expected runtime of \Cref{alg: Wagner-Style Gaussian Sampler} is then at most $(3^rN + \sum_{i=1}^{r}|L_{i-1}|)  \cdot \mathrm{poly}(m,r,\log s_0, \log q) = 3^r N \cdot \mathrm{poly}(m,r, \log s_0, \log q)$. 
\end{remark}

\subsection{Putting It All Together} \label{sec: complexity analysis wrap-up} \label{sec: complexity analysis}  

We now have all the ingredients to prove the existence of a subexponential-time algorithm for sampling from a distribution conditionally similar to $D_{\Lambda_q^\bot(\Am), s}$ for random (full-rank) matrices $\Am$. 
We write the width $s$ of the desired discrete Gaussian distribution as $s = q/f$ for some $f>1$, and remark that the difficulty of sampling with width $q/f$ increases with $f$. Below, we demonstrate that subexponential complexity is feasible for all $q/f$ above a certain threshold.  

We note that, for the parameters of interest, we have $q^{1-n/m} \geq 2^{\Theta(\log n / \log \log n)}$, which tends to infinity as $n$ grows (so the assumption $q^{1-n/m} \geq 6$ is not too restrictive). 

\begin{theorem}\label{thm: existence of subexponential Wagner sampler} 
     For $n \in \N$, let $m \geq n$ be an integer and $q = \mathrm{poly}(n)$ be a prime such that $q^{1-n/m} \geq 6$. 
     Let $f>1$ and $\eps \leq \frac{1}{m}$ be positive reals such that $\frac{q}{f} \geq \sqrt{\ln(1/\eps)}$. 
     For sufficiently large $n$, there exists a randomized algorithm that, given a uniformly random full-rank matrix $\Am \in \Z_q^{n \times m}$, with probability $> 1 - 2^{-\Tilde{\Omega}(n)}$ returns $N$ vectors in $\Lambda_q^\bot(\Am)$, where $N \in 2^{o(m-n)}$ is such that
     \begin{align}\label{eq: bound on log N}
         \log_2(N) = \frac{n/2}{\ln(\ln(q)) - \ln\left(\ln(f) + \frac{1}{2} \ln\ln(1/\eps)\right) - O(1)}.
     \end{align} 
     The output vectors are conditionally $q^4\eps$-similar to independent samples from $D_{\Lambda_q^\bot(\Am), \frac{q}{f}}$.
     This algorithm has time and memory complexity $N \cdot \poly(m)$. 
\end{theorem}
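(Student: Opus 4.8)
The plan is to run \Cref{alg: Wagner-Style Gaussian Sampler} on the given matrix $\Am$ with a carefully tuned set of parameters and then invoke \Cref{thm: correctness of full Gaussian Wagner algorithm}; the real content is the parameter selection and the verification of that theorem's hypotheses. I would set $s_0 := (q/f)\,2^{-r/2}$, so that $\sqrt{2^r}\,s_0 = q/f$; take the moduli geometrically decreasing, $p_i := \lceil c\,f\,\sqrt{\ln(1/\eps)}\,2^{(r-i+1)/2}\rceil$ for an absolute constant $c$ large enough for the rounding and smoothing requirements on $\tfrac{q}{p_i}\Z^{\nnew_i}$, so that $\log_2 p_i = \gamma + \tfrac{r-i+1}{2}$ with $\gamma := \log_2 f + \tfrac12\log_2\ln(1/\eps) + O(1)$; choose $r$ as large as the smoothing constraints on the auxiliary lattices $\Lambda'_i$ permit (this will turn out to be $r \approx 2\log_2(q/f) - \log_2\ln(1/\eps)$, the binding instance being $i = 1$); set $\nnew_i := \lfloor\log_2(N)/\log_2(p_i)\rfloor$, lowered on a few indices so that $\sum_{i=1}^r\nnew_i = n$ exactly; and take $N := 2^{\lceil L\rceil}$ with $L$ the right-hand side of \Cref{eq: bound on log N}, for the appropriate value of its hidden $O(1)$.

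I would then check the hypotheses of \Cref{thm: correctness of full Gaussian Wagner algorithm}. The conditions $N\geq p_i^{\nnew_i}$, $s_0\geq\sqrt{\ln(2(m-n)+4)/\pi}$, and $\sqrt{2^{i-1}}s_0\geq\max\!\big(\eta_\eps(\tfrac{q}{p_i}\Z^{\nnew_i}),\,\tfrac{q}{p_i}\sqrt{\ln(2\nnew_i+4)/\pi}\big)$ are routine: one uses $\eta_\eps(\tfrac{q}{p_i}\Z^{\nnew_i}) = \tfrac{q}{p_i}\eta_\eps(\Z^{\nnew_i})\leq\tfrac{q}{p_i}\sqrt{\ln(2\nnew_i(1+1/\eps))/\pi}$ (\Cref{lem: lemma 3.3 in MR2004 applied to Zn}), the identity $\sqrt{2^{i-1}}s_0 = (q/f)\,2^{-(r-i+1)/2}$, the definition of $p_i$, and the bounds $\nnew_i\leq n\leq m\leq 1/\eps$ and $q/f\geq\sqrt{\ln(1/\eps)}$, all valid for $n$ large (note also $\eps\leq 1/m\leq 1/2$). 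The one substantive condition is $\sqrt{2^{i-1}}s_0\geq\sqrt{2}\,\eta_\eps(\Lambda'_i)$. Here $\Lambda_i\subseteq\Lambda'_i$ gives $\eta_\eps(\Lambda'_i)\leq\eta_\eps(\Lambda_i) = \eta_\eps(\Lambda_q^\bot(\Am_i))$, and $\Am_i$ is uniformly random in $\Z_q^{\nsum_i\times(m-n+\nsum_i)}$ with $q^{\,1-\nsum_i/(m-n+\nsum_i)} = q^{(m-n)/(m-n+\nsum_i)}\geq q^{(m-n)/m} = q^{1-n/m}\geq 6$, so \Cref{lem: smoothing parameter of q-ary kernel lattice} yields $\eta_\eps(\Lambda_i) < \sqrt{72\ln(1/\eps)/\pi}\cdot q^{\nsum_i/(m-n+\nsum_i)}$ except with probability $< 2^{-\nsum_i}\leq 2^{-\nnew_1}$. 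A union bound over $i\in[r]$ gives a total failure probability $< r\cdot 2^{-\nnew_1} = 2^{-\tilde{\Omega}(n)}$, using $r = O(\log n)$ and $\nnew_1 = \tilde{\Omega}(n)$ from the parameter choice. On the complementary event the condition becomes $\log_2 p_i \leq \tfrac{m-n}{m-n+\nsum_i}\log_2 q$ up to an additive constant, and the choice of $r$ is made precisely so that this holds for all $i$.

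The quantitative core is the block-size accounting $\sum_{i=1}^r\nnew_i = n$. With $\log_2 p_i = \gamma + \tfrac{r-i+1}{2}$ and $\nnew_i$ essentially $\log_2(N)/\log_2(p_i)$, a Riemann-sum estimate gives $n = \Theta\!\big(\log_2(N)\cdot\sum_{i=1}^r\tfrac1{\gamma+(r-i+1)/2}\big) = \Theta\!\big(\log_2(N)\cdot\ln\tfrac{2\gamma+r}{2\gamma}\big)$; substituting $2\gamma + r = 2\log_2 q - O(1)$ and simplifying $\ln\tfrac{2\gamma+r}{2\gamma} = \ln\log_2 q - \ln\gamma + O(1) = \ln\ln q - \ln(\ln f + \tfrac12\ln\ln(1/\eps)) - O(1)$ pins $\log_2 N$ to the value in \Cref{eq: bound on log N}. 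Along the way one checks that the resulting $\nnew_i$ are all $\geq 1$, that $\nnew_1 = \tilde{\Omega}(n)$, and that $\log_2 N = o(m-n)$, so $N\in 2^{o(m-n)}$. On the good event, \Cref{thm: correctness of full Gaussian Wagner algorithm} then returns a list of $N$ vectors in $\Lambda_q^\bot(\Am)$ that are conditionally $4^r\cdot 5\eps$-similar to independent samples from $D_{\Lambda_q^\bot(\Am),\,\sqrt{2^r}s_0} = D_{\Lambda_q^\bot(\Am),\,q/f}$, and since $r = 2\log_2(q/f) - \log_2\ln(1/\eps) - O(1)$ lies below $2\log_2 q$ by a margin that grows with $n$ (as $\eps\leq 1/m$), we get $4^r\cdot 5\eps\leq q^4\eps$ for $n$ large. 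The running time and memory are $3^r N\cdot\poly(m,r,\log s_0,\log q) = N\cdot\poly(m)$ by \Cref{rem: exp runtime of full alg}, since $r = O(\log n)$, $q = \poly(n)$, $p_i\leq q$, $\log s_0 = O(\log q)$, and $n\leq m$.

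I expect the main obstacle to be exactly this joint parameter balancing: producing a single geometric family $(p_i)$ that meets the $\Lambda'_i$-smoothing constraint for every $i\in[r]$ (its tightest instance being what determines $r$) and simultaneously exhausts the $n$ SIS equations over the $r$ blocks with $\log_2 N$ equal to the minimal closed form of \Cref{eq: bound on log N} and $o(m-n)$, while consistently tracking all ceilings, all hidden $O(1)$'s, and the union-bound slack — in particular re-deriving $\nnew_1 = \tilde{\Omega}(n)$ and $\log_2 N = o(m-n)$ from the final formula, which is also where the hypotheses $q^{1-n/m}\geq 6$ and $q/f\geq\sqrt{\ln(1/\eps)}$ (and "$n$ sufficiently large", with $m$ large enough relative to $n$ for $N\in 2^{o(m-n)}$ to be meaningful) get used.
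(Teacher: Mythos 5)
Your proposal follows the same overall strategy as the paper's proof: pick parameters, verify the hypotheses of \Cref{thm: correctness of full Gaussian Wagner algorithm}, and apply it. The block-size accounting via the harmonic/Riemann sum, the union bound over the $\Lambda_i$, the $\nnew_1 = \tilde\Omega(n)$ and $\log_2 N = o(m-n)$ checks, and the $4^r \leq q^4$ similarity bookkeeping all match the paper's argument (the paper additionally shifts to $\eps' = \eps/5$ to absorb the extra factor of $5$ cleanly, but your remark that $4^r \cdot 5\eps \leq q^4\eps$ for large $n$ works too, since $r \leq 2\log_2(q/f) - \log_2\ln(1/\eps)$ leaves a growing margin). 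Your $p_i$ family is parametrized in the opposite direction ($p_i \approx c f\sqrt{\ln(1/\eps)}\,2^{(r-i+1)/2}$ rather than the paper's $\lfloor q/\sqrt{2^i}\rfloor$), but after substituting the value of $r$ these agree up to a constant factor; you should just check that $p_1 \leq q$, which holds once the constant in $r$ is set so $c\leq 1$ effectively, and this is absorbed in the $O(1)$.

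The one genuine deviation is how you bound $\eta_\eps(\Lambda'_i)$. The paper invokes \cite[Proposition~2]{EWY23} to reduce to $\max\bigl(\eta_{\eps/3}(\tfrac{q}{p_i}\Z^{\nnew_i}),\,\eta_{\eps/3}(\Lambda_{i-1})\bigr)$, which costs a shift to $\eps/3$ but gives the smaller exponent $q^{\nsum_{i-1}/(m-n+\nsum_{i-1})}$. You instead use the elementary monotonicity $\Lambda_i \subseteq \Lambda'_i \Rightarrow \eta_\eps(\Lambda'_i) \leq \eta_\eps(\Lambda_i)$ and apply \Cref{lem: smoothing parameter of q-ary kernel lattice} to $\Lambda_i$ directly. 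This avoids the external reference and is conceptually cleaner; the cost is the off-by-one exponent $q^{\nsum_i/(m-n+\nsum_i)}$ instead of $q^{\nsum_{i-1}/(m-n+\nsum_{i-1})}$, which loses another $\sqrt 2$ on $s_0$ — again harmless, as it shifts $r$ by an additive constant absorbed into the $O(1)$ in \Cref{eq: bound on log N}. Two small points to tighten: \Cref{lem: smoothing parameter of q-ary kernel lattice} requires $\eps \leq 1/(4m)$, whereas the theorem only gives $\eps \leq 1/m$, so you should apply it at $\eps/4$ (and then use $\eta_\eps \leq \eta_{\eps/4}$), exactly as the paper does via $\eps'/3$. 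And your Riemann-sum step is written with a $\Theta$, but the theorem pins the constant to exactly $n/2$; you do in fact get the factor $2$ from $\sum_j 1/(\gamma + j/2) = 2\sum_j 1/(2\gamma + j)$, so the claim is correct, just imprecisely phrased.
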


\begin{proof} 
We first provide a choice of parameters for \Cref{alg: Wagner-Style Gaussian Sampler} (instantiated with the exact sampler from \Cref{lem: exact randomized Gaussian rounding}). Then, we show that they satisfy the conditions of \Cref{thm: correctness of full Gaussian Wagner algorithm}, and conclude the proof.

\textbf{Choice of Parameters.}
Define $\eps' := \eps/5$, and let $N$ be the smallest possible integer such that 
\begin{align*}
    \log_2(N/q) \geq \frac{n/2}{\ln\ln(q) - \ln\left[(\ln(f) + \frac{1}{2}\ln(\frac{144}{\pi} \ln(3/\eps')) + 1/2 \right]},
\end{align*}
which satisfies \Cref{eq: bound on log N} for sufficiently large $n$. Let $s_0 := \frac{q/f}{\sqrt{2^r}}$, where 
\footnote{For all interesting parameters, we have $r \geq 1$. For the rare setting of parameters for which $r < 1$, we replace $r$ by $r+10$. (This suffices since, for any valid parameters, the assumption $\frac{q}{f} \geq \sqrt{\ln(1/\eps)}$ ensures that $2\log_2(q/f) - \log_2(144 \ln(3/\eps')/\pi) + 10 \geq 1$ when $n \geq 2$.) Note that increasing $r$ by a constant additive factor does not affect the proof; in particular, Equation~\eqref{eq: wrap up proof, decreases with r} below would still hold as it decreases with $r$.} 
$$r := \lfloor 2\log_2(q/f) - \log_2(\tfrac{144 \ln(3/\eps')}{\pi}) \rfloor.$$
For $i \in [r]$, define $p_i := \lfloor q/\sqrt{2^i} \rfloor$. For $i \in [r-1]$, define $\nnew_i := \lceil \frac{\log_2(N)}{\log_2(q) - i/2} - 1 \rceil$, and define $\nnew_r := n - \sum_{i=1}^{r-1} \nnew_i$ so that $\sum_{i=1}^r \nnew_i = n$. 
Note that $p_i, \nnew_i$ are integers.

We now show that these parameters satisfy the conditions of \Cref{thm: correctness of full Gaussian Wagner algorithm}. We consider sufficiently large $n$ so that $\eps' \leq \frac{1}{6}$ (i.e., $\eps \leq \frac{5}{6}$) and $2\log_2(N) \leq m-n$ (justified by the condition $\log_2(N) = o(m-n)$). 

\textbf{Verifying that $N$ is Large Enough.}
We claim that $N \geq p_i^{\nnew_i}$ for all $i \in [r]$. For all $i \in [r-1]$, this claim follows immediately from the definition of $\nnew_i$: $\nnew_i \leq \frac{\log_2(N)}{\log_2(q) - i/2} \leq \frac{\log_2(N)}{\log_2(p_i)}$. 
So it remains to show that $b_r \leq \frac{\log_2(N)}{\log_2(p_r)}$. 
By definition, \begin{align}
    \log_2(N/q) &\geq \frac{n/2}{\ln\ln(q) - \ln\left[(\ln(f) + \frac{1}{2}\ln(\frac{144}{\pi} \ln(3/\eps')) + 1/2 \right]} \notag \\
    &\geq \frac{n/2}{\ln(\log_2(q)) - \ln\left[\log_2(f) + \frac{1}{2}\log_2(\tfrac{144 \ln(3/\eps')}{\pi}) + 1/2\right]} \notag \\
    &\geq \frac{n/2}{\ln\left( \frac{2\log_2(q)}{2\log_2(q) - r} \right)} \label{eq: wrap up proof, decreases with r}
\end{align}
since $r \geq  2\log_2(q/f) - \log_2(\tfrac{144 \ln(3/\eps')}{\pi}) - 1$. 
In particular, using known facts of integrals (recall \Cref{footnote: integral facts} on page \pageref{footnote: integral facts}) we obtain \begin{align*}
    n \leq 2\log_2(N/q) \cdot \ln\left( \frac{2\log_2 q }{2\log_2 q -r} \right) = \int_0^{r} \frac{2\log_2(N/q)}{2\log_2 (q) - x} dx \leq \sum_{i=1}^r \frac{2\log_2(N/q)}{2\log_2(q) - i}.
\end{align*}
Since $\frac{2\log_2(N/q)}{2\log_2(q) - i}  \leq \frac{2\log_2(N)}{2\log_2(q) - i} - 1 \leq \nnew_i$, we obtain that $n \leq \sum_{i=1}^{r-1} \nnew_i + \frac{2\log_2(N/q)}{2\log_2(q) - r}$, and thus $\nnew_r = n - \sum_{i=1}^{r-1} \nnew_i  \leq \frac{2\log_2(N/q)}{2\log_2(q) - r} \leq \frac{2\log_2(N)}{2\log_2(q) - r} \leq \frac{\log_2(N)}{\log_2(p_r)}$, as desired.

\textbf{Verifying the Smoothing Conditions of \Cref{thm: correctness of full Gaussian Wagner algorithm}.}
To show that the smoothing conditions in \Cref{thm: correctness of full Gaussian Wagner algorithm} are satisfied with probability $1 - 2^{- \Tilde{\Omega}(n)}$, it suffices to show that the following holds (for large enough $n$): \begin{outline}
    \1[(I) ] $s_0 \geq \sqrt{\ln(2(m-n) + 4)/\pi}$ and $\sqrt{2^{i-1}} s_0 \geq \frac{q}{p_i} \sqrt{\ln(2\nnew_i + 4)/\pi}$ for all $i \in [r]$.
    \1[(II)] With probability $1 - 2^{- \Tilde{\Omega}(n)}$, we have that  $\sqrt{2^{i-1}} s_0 \geq \sqrt{2} \max( \eta_{\eps'/3}(\frac{q}{p_i}\Z^{\nnew_i}), \linebreak[0] \eta_{\eps'/3}(\Lambda_{i-1}))$ for all $i \in [r]$. 
\end{outline}
Indeed, if $\sqrt{2^{i-1}} s_0 \geq \sqrt{2} \max( \eta_{\eps'/3}(\frac{q}{p_i}\Z^{\nnew_i}), \eta_{\eps'/3}(\Lambda_{i-1}))$ (for any $i\in [r]$), then $\sqrt{2^{i-1}} s_0 \geq \max(\eta_{\eps'}(\frac{q}{p_i}\Z^{\nnew_i}), \sqrt{2}\eta_{\eps'}(\Lambda'_i))$ by~\cite[Proposition~2]{EWY23} 
(with the embedding of $\frac{q}{p_i}\Z^{\nnew_i}$ in $\R^{m - n + \nsum_i}$ as sublattice) and because $\eta_{{\eps'}/3}(\frac{q}{p_i}\Z^{\nnew_i}) \geq \eta_{\eps'}(\frac{q}{p_i}\Z^{\nnew_i})$.

To prove (I) and (II), we will use that our choice of $r$ implies that $s_0 \geq \sqrt{\frac{144 \ln(3/\eps')}{\pi}}$. 
Furthermore, we emphasize that $\eps \leq \frac{1}{m}$ implies $\eps' \leq \frac{3}{4m}$, so $\eps' \leq \min(\frac{3}{4(m-n)}, \frac{3}{4\nnew_1})$ and  $\eps' \leq \min(\frac{3}{4(m - n + \nsum_{i-1})}, \frac{3}{4\nnew_i})$ for all $i \in \{2, \ldots, r\}$. (This allows us to apply the results from \Cref{sec: relevant smoothing bounds} to bound the parameters $\eta_{\eps'}(\frac{q}{p_i}\Z^{\nnew_i})$ and $\eta_{\eps'}(\Lambda'_i)$.)

So consider any $i \in [r]$. By \Cref{lem: lemma 3.3 in MR2004 applied to Zn} (with dimension $\nnew_i$ and using that ${\eps'} \leq \frac{3}{4\nnew_i}$), we obtain $\sqrt{2}\eta_{{\eps'}/3}(\frac{q}{p_i} \Z^{\nnew_i}) \leq \frac{q}{p_i} \sqrt{\frac{4\ln(3/{\eps'})}{\pi}} \leq \sqrt{2^{i-1}} \sqrt{\frac{32\ln(3/{\eps'})}{\pi}} \leq \sqrt{2^{i-1}} s_0$ (since $p_i = \lfloor q/\sqrt{2^i} \rfloor \geq (q/\sqrt{2^i})/2$ whenever $p_i \geq 2$, and thus $\frac{q}{p_i} \leq 2\sqrt{2^i}$). 
Furthermore, for $i = 1$, we have by \Cref{lem: lemma 3.3 in MR2004 applied to Zn} that $\sqrt{2}\eta_{{\eps'}/3}(\Lambda_0) = \sqrt{2}\eta_{{\eps'}/3}(\Z^{m-n}) < \sqrt{\frac{4 \ln(3/{\eps'})}{\pi}} \leq s_0$. 
Since ${\eps'} \leq \frac{3}{4m}$ we have that $\frac{3}{\eps'} \geq 2m \geq 2(m-n) + 4$ and $\frac{3}{\eps'} \geq 4m \geq 4n \geq 2n + 4 \geq 2\nnew_i + 4$ for all $i\in[r]$. Hence, $s_0 \geq \sqrt{\frac{4 \ln(3/{\eps'})}{\pi}}$ implies that $s_0 \geq \sqrt{\frac{\ln(3/{\eps'})}{\pi}}\geq \sqrt{\frac{ \ln(2(m-n) + 4)}{\pi}}$, and $\sqrt{2^{i-1}} s_0 \geq \frac{q}{p_i} \sqrt{\frac{4\ln(3/{\eps'})}{\pi}}$ implies that $\sqrt{2^{i-1}} s_0 \geq \frac{q}{p_i} \sqrt{\frac{\ln(3/{\eps'})}{\pi}} \geq  \frac{q}{p_i} \sqrt{\frac{\ln(2\nnew_i + 4)}{\pi}}$ for all $i \in [r]$.
Thus, (I) holds for $n \geq 2$. 

Note that $q^{\frac{\nsum_j}{m-n + \nsum_j}} \leq \sqrt{2^j}$ for all $j \in [r]$. To see this, observe that (for any  $j \in [r]$) 
$\nsum_{j} \leq j  \nnew_j$, so that $\frac{\nsum_{j}}{j} \left(2 \log_2(q) - j \right) \leq \nnew_j (2 \log_2(q) - j) \leq 2\log_2(N) \leq m - n$. Since $\frac{\nsum_{j}}{j} \left(2 \log_2(q) - j \right) \leq m - n$ if and only if $q^{\frac{\nsum_j}{m-n + \nsum_j}} \leq \sqrt{2^j}$, the claim follows. 
Thus, for each $i \in \{2,\ldots, r\}$, \Cref{lem: smoothing parameter of q-ary kernel lattice} and the previous claim imply that $$\sqrt{2}\eta_{{\eps'}/3}(\Lambda_{i-1}) < \sqrt{\frac{144 \ln(3/{\eps'})}{\pi}} \cdot q^{\frac{\nsum_{i-1}}{m-n + \nsum_{i-1}}} \leq \sqrt{\frac{144 \ln(3/{\eps'})}{\pi}} \sqrt{2^{i-1}} \leq \sqrt{2^{i-1}} s_0,$$ except with probability $< 2^{-\nsum_{i-1}}$. 

Therefore, by the union bound, we have with probability $ > 1 - \sum_{i=1}^{r-1} 2^{-\nsum_i} $ that $\sqrt{2^{i-1}} s_0 \geq \sqrt{2} \max( \eta_{\eps'/3}(\frac{q}{p_i}\Z^{\nnew_i}), \linebreak[0] \eta_{\eps'/3}(\Lambda_{i-1}))$  for all $i \in [r]$.
Note that $1 - \sum_{i=1}^{r-1} 2^{-\nsum_i} \geq 1 - r 2^{-\nnew_1} = 1 - 2^{-(\nnew_1 + \log_2(r))} = 1 - 2^{- \Tilde{\Omega}(n)}$  since $\log_2(r) = \log_2\log_2(n) + O(1)$, $\nnew_1 = \Omega(\frac{\log_2 N}{\log_2 n})$, and $\log_2(N) = \Omega(\frac{n}{\log_2\log_2(n)})$, so (II) holds as well.\footnote{We emphasize that it also follows that $\eta_{\eps/4}(\Lambda_q^\bot(\Am)) < \frac{q}{f}$. Indeed, another application of \Cref{lem: smoothing parameter of q-ary kernel lattice} (using that $\eps/4 \leq \frac{1}{4m}$) yields $\eta_{\eps/4}(\Lambda_q^\bot(\Am)) < \sqrt{72  \ln(4/\eps)/\pi} q^{n/m}$, except with probability $< 2^{-n}$, but this does not affect the lower bound on the success probability. By the aforementioned fact, we have that $q^{n/m} \leq \sqrt{2^r}$, so by definition of $r$ we obtain that $\eta_{\eps/4}(\Lambda_q^\bot(\Am)) < \frac{q}{f}$. \label{footnote: bounding q and f and smoothing parameter}}

\textbf{Conclusion of the Proof.}
\Cref{thm: correctness of full Gaussian Wagner algorithm} then implies that the output of this algorithm consists of at least $N$ vectors in $\Lambda_q^\bot(\Am)$ that are conditionally $4^r5\eps'$-similar to independent samples from $D_{\Lambda_q^\bot(\Am), \sqrt{2^r} s_0}$. Since $4^r \leq q^4$, we obtain that they are conditionally $\delta$-similar for $\delta = q^4 5\eps' = q^4 \eps$-similar.

Finally, the runtime of \Cref{alg: Wagner-Style Gaussian Sampler} is at most $3^r N \cdot \mathrm{poly}(m,r, \log s_0, \log q)$ by \Cref{rem: exp runtime of full alg}. Since $r = O(\log q)$, $s_0 \leq q$, and $q = \poly(m)$, it follows that the time and memory complexity are both upper bounded by $N \cdot \poly(m)$.
\hfill $\qed$ \end{proof}

\section{Asymptotic Application to Cryptographic Problems} \label{sec:Asymptotic Application to Cryptographic Problems}

We now apply our previous result on Gaussian sampling for SIS lattices to solving various variants of SIS in subexponential time. We also discuss why our result does not directly lead to a provable subexponential-time algorithm for LWE with narrow error distribution.

\subsection{Implications for $\SISinfty$}

The following theorem instantiated with any $m = n + \omega(n/\log \log n)$ such that\footnote{Note that one may always decrease $m$ by ignoring SIS variables, hence the condition $m \leq \poly(n)$ comes with no loss of generality.} $m \leq \poly(n)$, $q = \poly(n)$, $\beta = q/\polylog(n)$, and with a sufficiently small $\eps = 1/\poly(n)$ provides a subexponential-time algorithm for $\SISinfty_{n,m,q,\beta}$.

\begin{theorem}\label{thm: existence of subexponential Wagner alg for SIS inf}
     For $n \in \N$, let $m = n + \omega(n/ \log \log n)$ be integer and $q = \mathrm{poly}(n)$ be prime such that $q^{1-n/m} \geq 6$. 
     Let $f>1$ and $\eps \leq \frac{1}{mq^4}$ be positive reals such that $\frac{q}{f} \geq \sqrt{\ln(1/\eps)}$. 
     For sufficiently large $n$, there exists an algorithm that solves $\SISinfty_{n,m,q,\beta}$ for $\beta := \frac{q}{f}\sqrt{\ln m}$ in expected time 
        $$T = 2^{\tfrac{n/2}{\ln(\ln(q)) - \ln\left(\ln(f) + \frac{1}{2} \ln\ln(1/\eps)\right) - O(1)}} \cdot \poly(m)$$ 
    with success probability $1 - \frac{1}{\Omega(n)}$. 
\end{theorem}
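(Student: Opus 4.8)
The plan is to derive \Cref{thm: existence of subexponential Wagner alg for SIS inf} as an essentially immediate corollary of \Cref{thm: existence of subexponential Wagner sampler}. First I would check that the hypotheses of the latter are met under the present assumptions: $q^{1-n/m}\ge 6$ is assumed, $\eps\le\frac{1}{mq^4}\le\frac{1}{m}$, and $\frac{q}{f}\ge\sqrt{\ln(1/\eps)}$ is assumed. Hence, for sufficiently large $n$ and with probability $>1-2^{-\tilde\Omega(n)}$ over $\Am$, \Cref{thm: existence of subexponential Wagner sampler} produces a list $L=(\xv_1,\dots,\xv_N)$ of vectors in $\Lambda_q^\bot(\Am)$, with $\log_2 N$ given by \Cref{eq: bound on log N} (so that $2^{\log_2 N}\cdot\poly(m)$ is exactly the stated $T$), computed in time $N\cdot\poly(m)$, and such that $\xv_1,\dots,\xv_N$ are conditionally $\delta$-similar to independent samples from $D:=D_{\Lambda_q^\bot(\Am),\,q/f}$ for $\delta:=q^4\eps\le\frac{1}{m}\le 1$. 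The algorithm I propose is to run this sampler and output the first $\xv_i\in L$ with $\xv_i\ne\zerovec$ and $\|\xv_i\|_\infty\le\beta$, declaring failure if there is none or if the sampler fails; any output vector is by construction a valid solution to $\SISinfty_{n,m,q,\beta}$, so only the failure probability needs to be bounded.

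Condition on the good event that the sampler succeeds, which costs $2^{-\tilde\Omega(n)}$. Since the event that there is no valid $\xv_i$ is contained in $\{\xv_1=\zerovec\}\cup\{\|\xv_1\|_\infty>\beta\}$, it suffices to bound $\Pr[\xv_1=\zerovec]$ and $\Pr[\|\xv_1\|_\infty>\beta]$. For the length bound, observe that by scaling \Cref{lem: Lemma 2.10 in Ban95} gives $\Pr_{Y\sim D_{\mathcal L,s}}[\|Y\|_\infty>R\cdot s]<2\dim(\mathcal L)\cdot e^{-\pi R^2}$; applying this with $\mathcal L=\Lambda_q^\bot(\Am)$ (of dimension $m$), $s=q/f$, and $R=\sqrt{\ln m}$ (so $R\cdot s=\beta$) yields $\Pr_{Y\sim D}[\|Y\|_\infty>\beta]<2m\cdot e^{-\pi\ln m}=2m^{1-\pi}$. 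By \Cref{lem: easy consequence of conditional similarity} applied to the indicator of $\{\yv:\|\yv\|_\infty>\beta\}$, we get $\Pr[\|\xv_1\|_\infty>\beta]\le e^{\delta}\cdot 2m^{1-\pi}\le 2e\cdot m^{1-\pi}$, which is $O(m^{-1})=O(1/n)$ because $\pi>2$ and $m\ge n$.

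For the zero event, I would invoke the min-entropy bound \Cref{lem: upper bound on maximum discrete Gaussian probability (min-entropy)}: if $q/f\ge 2\eta_{\eps_0}(\Lambda_q^\bot(\Am))$ for some $\eps_0\in(0,1]$, then $\Pr_{Y\sim D}[Y=\zerovec]\le\frac{1+\eps_0}{1-\eps_0}\cdot 2^{-m}$, and hence (again by \Cref{lem: easy consequence of conditional similarity}) $\Pr[\xv_1=\zerovec]\le e^{\delta}\cdot\frac{1+\eps_0}{1-\eps_0}\cdot 2^{-m}=2^{-\Omega(m)}$. The required smoothing inequality holds (except with probability $<2^{-n}$, which is folded into the failure budget) for $\eps_0=\Theta(1/m)$: combining \Cref{lem: smoothing parameter of q-ary kernel lattice} with the inequality $q^{n/m}\le\sqrt{2^r}$ and the choice of $r$ from the proof of \Cref{thm: existence of subexponential Wagner sampler} gives $\eta_{\eps_0}(\Lambda_q^\bot(\Am))<\frac{q}{f}\cdot\frac{1}{\sqrt2}\sqrt{\ln(4m)/\ln(15/\eps)}$, exactly as in the footnote there that establishes $\eta_{\eps/4}(\Lambda_q^\bot(\Am))<q/f$; since $\eps\le\frac{1}{mq^4}$ and $q=\poly(n)\to\infty$, the square-root factor drops below $1/2$, so $2\eta_{\eps_0}(\Lambda_q^\bot(\Am))<q/f$. (Alternatively one may simply run the sampler with $f$ replaced by $f/2$, changing $\beta$ and $T$ only by constant factors.) Summing the three contributions --- $2^{-\tilde\Omega(n)}$, $2^{-\Omega(m)}$, and $O(1/n)$ --- the algorithm succeeds with probability $1-\frac{1}{\Omega(n)}$, and its expected runtime is $N\cdot\poly(m)$ from the sampler plus a $\poly(m)$-per-vector overhead for scanning $L$, i.e.\ the claimed $T$.

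I expect the one genuinely delicate step to be verifying that $q/f$ is at least \emph{twice} the smoothing parameter of $\Lambda_q^\bot(\Am)$ (not merely larger than it), as \Cref{lem: upper bound on maximum discrete Gaussian probability (min-entropy)} demands: the smoothing estimates only give $\eta_\eps(\Lambda_q^\bot(\Am))<(q/f)/\sqrt2$ for the original $\eps$, so one must either loosen $\eps_0$ to $\Theta(1/m)$ and exploit $q\to\infty$, or absorb a constant factor into $f$. Everything else is a routine combination of the Banaszczyk tail bound with the data-processing property of (conditional) similarity from the preliminaries.
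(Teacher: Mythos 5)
Your overall strategy is exactly the paper's: invoke \Cref{thm: existence of subexponential Wagner sampler} to get $N$ vectors conditionally $q^4\eps$-similar to $D_{\Lambda_q^\bot(\Am),q/f}$, apply the data-processing inequality (\Cref{lem: easy consequence of conditional similarity}) to transfer the Banaszczyk tail bound (\Cref{lem: Lemma 2.10 in Ban95}) and the min-entropy bound (\Cref{lem: upper bound on maximum discrete Gaussian probability (min-entropy)}) from the ideal Gaussian to $\xv_1$, and union-bound the failure events. You also correctly identify the genuinely delicate point: \Cref{lem: upper bound on maximum discrete Gaussian probability (min-entropy)} requires $q/f \geq 2\eta_{\eps_0}(\Lambda_q^\bot(\Am))$, whereas \Cref{footnote: bounding q and f and smoothing parameter} only delivers $q/f > \eta_{\eps/4}(\Lambda_q^\bot(\Am))$.

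However, both of your proposed fixes for that factor of $2$ have issues. The first fix (loosen $\eps_0$ to $\Theta(1/m)$ and ``exploit $q\to\infty$'') implicitly requires the square-root factor $\sqrt{\ln(4m)/\ln(15/\eps)}$ to dip well below $1$; unpacking, with $\eps \le 1/(mq^4)$ this amounts to requiring $q^4 \gtrsim m$, which is not guaranteed by the hypotheses (one could have, say, $q = n^{0.01}$ and $m = 2n$). The second fix (``run the sampler with $f$ replaced by $f/2$'') satisfies the smoothing inequality but breaks the norm bound: the output Gaussian then has width $2q/f$, and \Cref{lem: Lemma 2.10 in Ban95} against the fixed threshold $\beta = (q/f)\sqrt{\ln m}$ only gives $\Pr[\|X\|_\infty > \beta] < 2m^{1-\pi/4}$, which diverges. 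Widening $\beta$ to compensate would prove a weaker statement than claimed. The paper's resolution is cleaner and different from either of yours: it replaces the constant $144$ in the definition of $r$ inside the proof of \Cref{thm: existence of subexponential Wagner sampler} by a larger constant. This shrinks $r$ by $O(1)$, raises $s_0$ by a constant factor, and directly yields $q/f > 2\eta_{\eps/4}(\Lambda_q^\bot(\Am))$ via \Cref{footnote: bounding q and f and smoothing parameter}, at the only cost of adjusting the $-O(1)$ already present in the exponent of $T$. You should adopt that route, or otherwise supply an explicit additional hypothesis relating $q$ and $m$.
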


\begin{proof}
    Apply \Cref{thm: existence of subexponential Wagner sampler} with input $n,m,q,f,\eps$ to the  $\SISinfty_{n,m,q,\beta}$ instance $\Am$. 
    With probability $> 1 - 2^{-\Tilde{\Omega}(n)}$, it returns a list of vectors $\xv_1,\ldots,\xv_N$ in $\Lambda_q^\perp(\Am)$ (so they are solutions to $\Am\xv = \zerovec \bmod q$) that are conditionally $q^4\eps$-similar to independent samples from $D_{\Lambda_q^\perp(\Am), s}$ for $s := \frac{q}{f}$. In particular, it follows that the first vector $\xv_1$ follows a distribution $D$ that is $q^4\eps$-similar to $D_{\Lambda_q^\perp(\Am), s}$ (recall \Cref{rem: conditional similarity implies marginal similarity}). 
    By \Cref{footnote: bounding q and f and smoothing parameter} we have $\frac{q}{f} > \eta_{\eps/4}(\Lambda_q^\bot(\Am))$. We note that without loss of generality, we may assume that $\frac{q}{f} > 2\eta_{\eps/4}(\Lambda_q^\bot(\Am))$ by replacing the role of the constant $144$ in the proof of \Cref{thm: existence of subexponential Wagner sampler} by a larger constant. 

    Since $D$ is $q^4\eps$-similar to $D_{\Lambda_q^\perp(\Am), s}$,  \Cref{lem: easy consequence of conditional similarity} (together with the fact that $e^{-x} \geq 1 - x$ for all $x\in \R$) implies that \begin{align*}
        \Pr[\|\xv_1\|_\infty \leq \beta \wedge \xv_1 \neq\zerovec]  &\geq e^{-q^4\eps} \Pr_{X \sim D_{\Lambda_q^\perp(\Am),s}}[\|X\|_\infty \leq \beta \wedge X \neq\zerovec] \\
        &\geq \Pr_{X \sim D_{\Lambda_q^\perp(\Am),s}}[\|X\|_\infty \leq \beta \wedge X \neq\zerovec] - q^4\eps \\
        &\geq \Pr_{X \sim D_{\Lambda_q^\perp(\Am),s}}[\|X\|_\infty \leq \beta \wedge X \neq\zerovec] - \frac{1}{m}.
    \end{align*}

    We will now show that $\Pr_{X \sim D_{\Lambda_q^\perp(\Am),s}}[\|X\|_\infty \leq \beta \wedge X \neq\zerovec] \geq 1 - \frac{2}{m^2} - \frac{1}{2^{m-1}}$, from which it follows that $\xv_1$ is a solution to $\SISinfty$ with probability at least $1 - \frac{2}{m^2} - \frac{1}{2^{m-1}} - \frac{1}{m} \geq 1 - \frac{3}{m} \geq 1 - \frac{3}{n}$ when $n\geq 2$ (where we use that $m\geq n$), thereby proving the theorem.   
    
    It remains to prove our claim. We have that \begin{align*}
        \Pr_{X \sim D_{\Lambda_q^\perp(\Am), s}}[\|X\|_\infty \leq \beta \wedge X\neq\zerovec] &\geq \Pr_{X \sim D_{\Lambda_q^\perp(\Am), s}}[\|X\|_\infty \leq \beta] - \Pr_{X \sim D_{\Lambda_q^\perp(\Am), s}}[X=\zerovec]. 
    \end{align*}
    As $s = \frac{q}{f} > 2\eta_{\eps/4}(\Lambda_q^\bot(\Am))$, we have by \Cref{lem: upper bound on maximum discrete Gaussian probability (min-entropy)} that $\Pr_{X\sim D_{\Lambda_q^\perp(\Am), s}}[X=\zerovec] \leq \frac{1 + \eps/4}{1 - \eps/4} \cdot 2^{-m} \leq \frac{1}{2^{m-1}}$ (since $\frac{1 + x/4}{1 - x/4} \leq 2$ for all $x \in [0,1]$). 
    Also, \Cref{lem: Lemma 2.10 in Ban95} yields that 
    $\Pr_{X \sim D_{\Lambda_q^\perp(\Am), s}}[\|X\|_\infty \leq \beta] > 1 - 2me^{-\pi(\frac{\beta f}{q})^2} \geq 1 - \frac{2}{m^2}$ (where we use that $\beta = \frac{q}{f} \sqrt{\ln m}$). 
    So our claim follows.
\qed  
\end{proof}

One may remark that our application of~\Cref{lem: Lemma 2.10 in Ban95} makes us lose a $\sqrt{\ln m}$ factor on the norm bound to reach a constant success probability per sample. It would be tempting to only aim for a success probability barely greater than $1/N$ instead, however, the proof would then require $\eps \approx 1/N$, which would make the algorithm exponential. This is admittedly a counterintuitive situation, and plausibly a proof artifact. 

\subsection{Implications for $\SIS^\times$ and ISIS in $\ell_2$ norm}

Regarding the $\ell_2$-norm, for the same parameters as above, our Gaussian sampler outputs vectors of length less than $\beta = \sqrt{m} \cdot q/f$ for any $f = \polylog(n)$ in subexponential time. However, this $\SIS$ in $\ell_2$-norm is trivial for such a bound; for example, $(q, 0, \dots, 0)$ is a valid solution. 

Yet, some schemes~\cite{ETWY22} have used the inhomogeneous version of $\SIS$ ($\ISIS$) for bounds $\beta > q$, which was shown~\cite{DEP23} to be equivalent to solving $\SISs$, a variant of $\SIS$ where the solution must be nonzero modulo $q$. The work of~\cite{DEP23} notes that the problem becomes trivial at $\beta \geq q \sqrt{n/12}$ and proposes a heuristic attack that is better than pure lattice reduction when $\beta > q$; however, it appears to run in exponential time in $n$ for $\beta = \sqrt{n} \cdot q/\polylog(n)$. Our Gaussian sampler directly yields a provably subexponential-time algorithm in that regime.

\begin{theorem}\label{thm: existence of subexponential Wagner alg for SIS* 2}
     For $n \in \N$, let $m = n + \omega(n/ \log \log n)$ be integer and $q = \mathrm{poly}(n)$ be prime such that $q^{1-n/m} \geq 6$. 
     Let $f>1$ and $\eps \leq \frac{1}{mq^4}$ be positive reals such that $\frac{q}{f} \geq \sqrt{\ln(1/\eps)}$. 
     For sufficiently large $n$, there exists an algorithm that solves  $\SIS_{n,m,q,\beta}^\times$ and $\ISIS_{n,m,q,\beta}$ for $\beta := \frac{q}{f}\sqrt{m}$ in expected time 
        $$T = 2^{\tfrac{n/2}{\ln(\ln(q)) - \ln\left(\ln(f) + \frac{1}{2} \ln\ln(1/\eps)\right) - O(1)}} \cdot \poly(m)$$ 
    with success probability $1 - \frac{1}{\Omega(n)}$. 
\end{theorem}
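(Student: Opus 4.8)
The plan is to reuse, almost verbatim, the machinery built for $\SISinfty$: I would mirror the proof of~\Cref{thm: existence of subexponential Wagner alg for SIS inf}. Two points genuinely differ. First, here one controls the $\ell_2$-norm of the sampled vector, which (somewhat surprisingly) lets us avoid the $\sqrt{\ln m}$ loss incurred in the $\ell_\infty$ case. Second, $\SISs$ requires the output to be nonzero \emph{modulo $q$}, not merely nonzero, so we must additionally ensure that the sampled vector lands outside $q\Z^m$; by the equivalence of~\cite{DEP23} the same output will then also serve for $\ISIS$.

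First, apply~\Cref{thm: existence of subexponential Wagner sampler} with inputs $n,m,q,f,\eps$; its hypotheses hold since $\eps\le\tfrac{1}{mq^4}\le\tfrac1m$ and $\tfrac qf\ge\sqrt{\ln(1/\eps)}$. With probability $>1-2^{-\tilde\Omega(n)}$ it returns $N$ vectors in $\Lambda_q^\bot(\Am)$ that are conditionally $q^4\eps$-similar to independent samples from $D:=D_{\Lambda_q^\bot(\Am),s}$ with $s=q/f$, in expected time and memory $N\cdot\poly(m)$; as $\log_2 N$ equals the right-hand side of~\eqref{eq: bound on log N}, this matches the claimed $T$. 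As in the $\SISinfty$ proof I would keep only the first output vector $\xv_1$, whose marginal is $q^4\eps$-similar to $D$ by~\Cref{rem: conditional similarity implies marginal similarity}; I would assume, by enlarging the constant $144$ in the proof of~\Cref{thm: existence of subexponential Wagner sampler}, that $s>2\eta_{\eps/4}(\Lambda_q^\bot(\Am))$, and recall from~\Cref{footnote: bounding q and f and smoothing parameter} together with~\Cref{lem: smoothing parameter of q-ary kernel lattice} that $\eta_{\eps/4}(\Lambda_q^\bot(\Am))<\sqrt{72\ln(4/\eps)/\pi}\,q^{n/m}$ except with probability $<2^{-n}$. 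Since $\eps\le\tfrac1{mq^4}$, these combine to give $f<q^{(m-n)/m}/10$ for large $n$, a bound I would use twice below.

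By~\Cref{lem: easy consequence of conditional similarity} (and $q^4\eps\le\tfrac1m$) it suffices to analyze a genuine sample $X\sim D$ up to a factor $e^{\pm q^4\eps}$. For the $\ell_2$ bound, I would invoke the $\ell_2$ form of Banaszczyk's tail bound~\cite{Ban95} (the $\ell_2$ counterpart of~\Cref{lem: Lemma 2.10 in Ban95}): a width-$s$ discrete Gaussian over an $m$-dimensional lattice has $\ell_2$-norm at most $s\sqrt m$ except with probability $2^{-\Omega(m)}$, giving $\|X\|_2\le\beta=\tfrac qf\sqrt m$. For nonzeroness modulo $q$, write $\Pr_{X\sim D}[X\in q\Z^m]=\rho_s(q\Z^m)/\rho_s(\Lambda_q^\bot(\Am))=\rho_{1/f}(\Z^m)/\rho_s(\Lambda_q^\bot(\Am))$; here $\rho_{1/f}(\Z^m)=\rho_{1/f}(\Z)^m\le\rho_1(\Z)^m\le(1.09)^m$ since $f\ge1$, while Poisson summation gives $\rho_s(\Lambda_q^\bot(\Am))\ge s^m/\det(\Lambda_q^\bot(\Am))\ge(q/f)^m q^{-n}$, so using $f<q^{(m-n)/m}/10$ we get $\Pr[X\in q\Z^m]\le\bigl(1.09\,f\,q^{-(m-n)/m}\bigr)^m\le(1.09/10)^m=2^{-\Omega(m)}$ (in particular $X\neq\zerovec$). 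Collecting these estimates and paying the $e^{\pm q^4\eps}$ slack with $q^4\eps\le\tfrac1m\le\tfrac1n$, $\xv_1$ is a valid $\SISs_{n,m,q,\beta}$ solution except with probability $1/\Omega(n)$, and the sampler's own $2^{-\tilde\Omega(n)}$ failure probability is absorbed. For $\ISIS_{n,m,q,\beta}$ I would then invoke the $\ISIS$--$\SISs$ equivalence of~\cite{DEP23}, which applies in the relevant regime $\beta>q$ (as holds e.g. when $f=\polylog(n)$, since then $\beta=\tfrac qf\sqrt m\ge\tfrac qf\sqrt n>q$): running its reduction on top of our $\SISs$ solver solves $\ISIS$ within the same expected time and success probability.

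The genuinely new ingredient, and the step I expect to be the main obstacle, is the bound on $\Pr[X\in q\Z^m]$. The key observation is that the smoothing constraints of~\Cref{thm: existence of subexponential Wagner sampler} force $q/f$ well above $\eta(\Lambda_q^\bot(\Am))\approx q^{n/m}$, equivalently $f\ll q^{(m-n)/m}$, which is exactly what makes the coset $q\Z^m$ exponentially light inside $\Lambda_q^\bot(\Am)$; a discrete Gaussian of width merely comparable to $\eta(\Lambda_q^\bot(\Am))$ could instead concentrate almost entirely on $q\Z^m$. A smaller but real point of care is to use the $\ell_2$ rather than the $\ell_\infty$ Banaszczyk bound, so that the norm bound comes out as $\tfrac qf\sqrt m$ and not $\tfrac qf\sqrt{m\ln m}$; everything else is bookkeeping already carried out for~\Cref{thm: existence of subexponential Wagner alg for SIS inf}.
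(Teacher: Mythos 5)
Your proposal is correct and follows the same high-level route the paper indicates (invoke \Cref{thm: existence of subexponential Wagner sampler}, then mirror the proof of \Cref{thm: existence of subexponential Wagner alg for SIS inf} with the $\ell_2$-norm Banaszczyk tail bound), but you supply a step the paper's one-line proof does not spell out — and which is genuinely needed. The definition of $\SISs$ requires the output to be nonzero \emph{modulo $q$}, not merely nonzero, and the $\ell_2$ bound $\|\xv\|_2\le\frac qf\sqrt m$ does not rule out vectors like $q\ev_1\in q\Z^m$. The paper's argument in \Cref{thm: existence of subexponential Wagner alg for SIS inf} only controls $\Pr[X=\zerovec]$ via the min-entropy lemma (\Cref{lem: upper bound on maximum discrete Gaussian probability (min-entropy)}), which is not strong enough here. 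Your Poisson-summation computation
$\Pr[X\in q\Z^m]=\rho_{1/f}(\Z^m)/\rho_s(\Lambda_q^\bot(\Am))\le(1.09\,f\,q^{-(m-n)/m})^m=2^{-\Omega(m)}$,
combined with the observation that the smoothing constraint $s>2\eta_{\eps/4}(\Lambda_q^\bot(\Am))\approx q^{n/m}$ forces $f\lesssim q^{(m-n)/m}$, is exactly the right ingredient, and as a bonus it subsumes the min-entropy check (it gives $X\neq\zerovec$ with exponentially small failure, rather than merely $2^{-(m-1)}$). You also correctly note the $\ISIS$ claim rests on the \cite{DEP23} equivalence, valid when $\beta>q$; a small caveat is that for $f>\sqrt m$ one has $\beta\le q$ and the $\ISIS$ part of the statement would not follow by that reduction, though the $\SISs$ part still does. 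The one slip is cosmetic: the $\ell_2$ Banaszczyk tail bound is \cite[Lemma 1.5]{Ban93}, not \cite{Ban95} (which is the $\ell_\infty$ version used for \Cref{lem: Lemma 2.10 in Ban95}). Everything else — invoking \Cref{rem: conditional similarity implies marginal similarity}, \Cref{lem: easy consequence of conditional similarity}, the WLOG enlarging of the constant $144$, the accounting of the $e^{\pm q^4\eps}$ slack, and the runtime — matches the template of the paper's \Cref{thm: existence of subexponential Wagner alg for SIS inf} proof.
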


The proof is essentially equivalent to that of~\Cref{thm: existence of subexponential Wagner alg for SIS inf}, up to the invocation of~\Cref{lem: Lemma 2.10 in Ban95} used to tail-bound the norm of a discrete Gaussian, which should be replaced by a similar tail-bound for the $\ell_2$-norm~\cite[Lemma 1.5]{Ban93}. 

Note again here that one may always choose $m = n (1 + o(1))$ without loss of generality even when the given $m$ is much larger, simply by ignoring some SIS variables. Hence, reaching the norm bound $\beta = {q}\sqrt{m} / \polylog(n)$ also permits to reach $\beta = {q}\sqrt{n} / \polylog(n)$.

\subsection{Potential Implications for $\LWE$}

Having obtained a discrete Gaussian sampler for SIS lattices, one may be tempted to apply the dual distinguisher of~\cite{AR05} and directly obtain a subexponential-time algorithm for LWE with narrow secrets. This is in fact problematic because such a distinguisher needs to consider subexponentially many samples simultaneously, so some naive reasoning using the data-processing inequality would force one to instantiate our Gaussian sampler with $\eps = 2^{-\tilde{\Omega}(n)}$ rather than $\eps = n^{-\Theta(1)}$. Unfortunately, for such a small $\eps$ our Gaussian sampler has exponential complexity.  

This issue resonates with the one raised by~\cite{HKM18} regarding the proof of~\cite{KF15} when $m$ is linear in $n$, namely that it is about reaching exponentially small statistical distance. However, it is technically different: in our case, it cannot be fixed by increasing $m$ to $\Theta(n \log n)$. Tracking down the limiting factor leads to blaming the poor smoothing parameters of $\frac q {p_i} \Z^{\nnew_i}$. Thanks to the generality of our framework, it is plausible that this superlattice of $q \Z^{\nnew_i}$ can be replaced by one of similar index with a much better smoothing parameter. We hope that future work will finally be able to provably fix the claim of~\cite{KF15} when $m$ is linear in $n$.

\iftoggle{bibtex}{ 
    \bibliographystyle{alpha}
    \bibliography{references}
}{
	\printbibliography
}

\appendix 

\section{Concrete and Heuristic Application, Including to Dilithium} \label{sec:Concrete and Heuristic Application to Dilithium}

Our asymptotic result for $\SISinfty$ immediately raises the question of its impact on the concrete security of \textsc{Dilithium}. However, the algorithm described above introduces inefficiencies for the sake of provability. There are various details that one would approach differently when aiming to break the problem in practice. It might be that the analysis given below is too aggressive and that the resulting algorithm will fail; in particular, the resulting claim should only be read as a rough underestimation of the cost of the approach. There are certainly further tricks and fine-tuning to be considered, for example from~\cite{BGJMW20}. 

\subsubsection{Relaxing Independence.} If $N$ is the number of buckets for the colliding phase, our provable algorithm used $3^r N$ many initial samples, losing a factor of $3$ on the list size at each iteration, but never re-using a sample twice. If we allow ourselves to re-use a sample in several combinations, then $3N$ samples will be enough to maintain the list size throughout the algorithm. In this case, there will be an average of $3$ samples per bucket $(\xv, \yv, \zv)$, from which we can build $3$ different pairs to be subtracted: $\xv - \yv, \xv - \zv, \zv - \yv$. Note that if the buckets have unequal sizes, the total number of available pairs only increases due to the convexity of the function $x \mapsto {x \choose 2} = \frac {x(x-1)} 2$. 
In the context of BKW, such a heuristic improvement can be traced back to at least Levieil and Fouque~\cite{LF06}.

\subsubsection{Initializing Sparse Ternary Vectors.} Having chosen some $N$, one may set the initial list $L_0$ of vectors from $\Z^{m-n}$ to be as small as possible without duplicates. We therefore choose ternary vectors of $\ell_\infty$-norm $w$, where $w$ is the smallest integer such that $2^w {{m-n} \choose {w}} \geq N$. This gives an initial variance of $\sigma^2_0 = w/( {m-n})$.\footnote{To relate the $\sigma_i$ with the standard deviations of the discrete Gaussian distributions in the rest of the paper, note that a discrete Gaussian distribution with parameter $s_i$ has standard deviation $s_i/\sqrt{2\pi}$.} 

\subsubsection{Rounding and Quantization.} The introduction of Gaussians is also motivated by provability, and one would rather use regular rounding in practice. Following the analysis of~\cite{KF15,GJMS17}, the rounding introduced an error of deviation $\sigma_i = q / (p_i \sqrt{12})$. 
Both works also mention that lattice quantization could replace this rounding, which would improve the deviation to $ q / (p_i \sqrt{2 \pi e})$. This allows to choose smaller $p_i$ while maintaining $\sigma_i = 2^{i/2} \cdot \sigma_0$, and therefore increases $\nnew_i$. 

\subsubsection{Fractional Parameters.} The parameters $p_i$ and $\nnew_i$ would need to be integers, which might force the attacker not to match exactly the optimal parametrization. 
However, the exercise of globally rounding those parameters optimally appears painful. We ignore these constraints for the attacker, leading to further underestimation of the attack cost. 

\subsubsection{Central Gaussian Heuristic.} At step $r$, we have obtained $3 N$ many samples of standard deviation $\sigma_r$ and we wish to know whether one of them is likely to have an $\ell_\infty$-norm bound less than $\beta$. We proceed using the error function as if the distribution at hand was Gaussian; there are many coordinates where this is reasonable, as they result from summing $2^{j+1}$ coordinates for $j \leq r$. This is questionable for the very last coordinates when using rounding; it seems less of an issue when using quantization which should result in a distribution close to uniform over a ball. 

Having computed the success probability $p$ for one sample, we consider the attack successful if $Np > 1/2$.  

\subsubsection{Early Abort.} It could be that after $r' < r$ steps, the remaining dimensions to lift over are nonzero yet small enough that one of the $N$ samples could have small enough coordinates, while running one further step would double the variance of the rest of the vector. Such a trick has already been considered for lattice reduction attacks~\cite{DKLL18,DEP23} on some instances of ISIS and SIS$^\times$, where, for different reasons, one would also naturally have many candidate solutions at hand. 
Hence, we consider such an option when exploring the parameter space. The number of left-over dimensions will be denoted by $\ell = n - \sum_{i=1}^{r'} \nnew_i$.

\subsection{Concrete Analysis Against Dilithium}

We first recall in \Cref{tab:Dilithium_SIS} the $\SISinfty$ parameters underlying the security of  the NIST standard \textsc{Dilithium}~\cite{DKLL18}. 

\begin{table}
    \setlength\tabcolsep{.6em}
    \centering
    \begin{tabular}{c||c|c|c|c|c}
       NIST level  & $n$ & $m$ & $q$ & $\beta$ & $q/\beta$ \\ \hline
        2 & $256 \cdot 4$ & $256 \cdot 9$ & $8380417$ & $350209$ & $23.9$ \\
        3 & $256 \cdot 6$ & $256\cdot 12$ & $8380417$ & $724481$ & $11.6$ \\
        5 & $256 \cdot 8$ & $256\cdot 16$ & $8380417$ & $769537$ & $10.9$
    \end{tabular}
    \caption{$\SISinfty$ parameters underlying the \textsc{Dilithium} scheme.}
    \label{tab:Dilithium_SIS}
\end{table}

Using the script from \url{https://github.com/lducas/Provable-Wagner-SIS}, we evaluate the smallest $N$ such that the attack is successful according to the above analysis. Our script offers both the straightforward rounding and quantization version. The quantization version gave slightly better results, which are given in \Cref{tab:Dilithium_attack}. 

\begin{table}
    \setlength\tabcolsep{.6em}
    \centering
    \begin{tabular}{c||c|c|c|c|c|c}
       NIST level  & $\log_2 N$ & $w$ & $\sigma_0$ & $r'$ & $\sigma_{r'}$ & $\ell$ \\ \hline
       2 & 269.9 & 37 & 0.1700 & 40 & 178277.2 & 28.9 \\
       3 & 343.0 & 47 & 0.1749 & 42 & 366845.5 & 50.5 \\
       5 & 450.2 & 61 & 0.1726 & 42 & 361934.4 & 104.0 \\
    \end{tabular}
    \caption{Best attack parameters against \textsc{Dilithium} according to the heuristic analysis above, and using the quantization trick.}
    \label{tab:Dilithium_attack}
\end{table}

We do not directly provide the cost of the attack in terms of binary gates, but that cost is clearly larger than $N$. Hence, we conclude that Wagner's algorithm alone does not threaten the concrete security of \textsc{Dilithium}; for example at NIST security level 2~\cite{NIST16} (classically, approximately $2^{128}$ hash evaluations), it already requires operating and storing more than $2^{256}$ vectors. 

\subsection{When Would Wagner Shine?}

Given the negative concrete application to \textsc{Dilithium}, one may wonder for which parameters this algorithm has a practical chance to actually beat lattice reduction to solve $\SISinfty_{n, m, q, \beta}$, say fixing $\beta = q/4$.   

Given its complexity $2^{O(n/\log \log q)}$, one might be tempted to choose a large $q$, but that is in fact favoring lattice reduction attacks which have a complexity of $2^{O(n \log(n)/\log(q))}$ when $q = n^{\Theta(1)}$ and $m$ is large enough.
So one would rather choose a small $q$ and potentially very large $n$ to make Wagner win. But there is yet another handle, namely the number of variables $m$. Indeed, we proved that Wagner could work for $m$ as small as $n(1 + o(1))$, a regime that is very unfavorable to lattice attacks as it makes the volume of the lattice huge.

For example, consider $\SISinfty$ with $n=500, m=600, q=1000, \beta=q/4$. The security estimation script\footnote{\url{https://github.com/pq-crystals/security-estimates/blob/master/MSIS_security.py}} using the core-SVP methodology gives a cost for lattice reduction attacks of roughly $2^{107}$. Our script for (heuristic and optimistic) Wagner gives $N=2^{54}$. 
We emphasize, however, that both numbers are merely illustrative, and may not be accurate estimates. 
If relevant, one could consider adapting the software of~\cite{WBLW25} to estimate the cost of Wagner for SIS. 

\end{document}